\newtheorem{theorem}{Theorem}
\newtheorem{obs}{Observation}
\newtheorem{definition}{Definition}
\newcommand{\abs}[1]{\left|#1\right|}
\newcommand{\proofcase}[1]{\par\smallskip\noindent{\textbf{#1}}}
\DeclareMathOperator*{\argmax}{arg\,max}
\algnewcommand\algprocedure{\textbf{Procedure:}}
\algnewcommand\Procedurename{\item[\underline{\algprocedure}]}
\algnewcommand\algmain{\textbf{Main:}}
\algnewcommand\Main{\item[\underline{\algmain}]}
\algnewcommand\algorithmicinput{\textbf{Input:}}
\algnewcommand\Input{\item[\algorithmicinput]}
\algnewcommand\algorithmicoutput{\textbf{Output:}}
\algnewcommand\Output{\item[\algorithmicoutput]}
\renewenvironment{proof}[1][\proofname] {\pushQED{\qed}\normalfont\topsep\z@\@plus0\p@\relax\trivlist\item[\hskip\labelsep\bfseries#1\@addpunct{:}]\ignorespaces}{\popQED\endtrivlist\@endpefalse}
\def\thm@space@setup{%
	\thm@preskip=8pt plus 2pt minus 4pt
	\thm@postskip=\thm@preskip 
}
\renewcommand{\ALG@beginalgorithmic}{\small}
\titlespacing\section{0pt}{12pt plus 2pt minus 2pt}{-1pt plus1pt minus 1pt}
\titlespacing\subsection{0pt}{12pt plus 2pt minus 2pt}{-1pt plus 1pt minus 1pt}
\titlespacing\subsubsection{0pt}{12pt plus 2pt minus 2pt}{-1pt plus 1pt minus 1pt}
\setlist{topsep = 0pt plus1pt}
\title{Algorithmic Stability in Fair Allocation of Indivisible Goods Among Two Agents}
 \author{
 	Vijay Menon\footnote{David R.\ Cheriton School of Computer Science, University of Waterloo.\ \email}
 	\and 
 	Kate Larson\footnotemark[1]
 }
\date{}
\begin{document}
	
\def\RankClaimsinApp{1}
\def\rlStabilityClaimsinApp{1}
\def\rlStabilityUBlemmainApp{0}
\def\rlStabilityLBlemmainApp{0}
\def\weakStabilityinApp{0}

\maketitle

\begin{abstract}
		Many allocation problems in multiagent systems rely on agents specifying cardinal preferences. However, allocation mechanisms can be sensitive to small perturbations in cardinal preferences, thus causing agents who make ``small" or ``innocuous" mistakes while reporting their preferences to experience a large change in their utility for the final outcome. To address this, we introduce a notion of algorithmic stability and study it in the context of fair and efficient allocations of indivisible goods among two agents.	We show that it is impossible to achieve exact stability along with even a weak notion of fairness and even approximate efficiency. As a result, we propose two relaxations to stability, namely, approximate-stability and weak-approximate-stability, and show how existing algorithms in the fair division literature that guarantee fair and efficient outcomes perform poorly with respect to these relaxations. This leads us to explore the possibility of designing new algorithms that are more stable. Towards this end, we present a general characterization result for pairwise maximin share allocations, and in turn use it to design an algorithm that is approximately-stable and guarantees a pairwise maximin share and Pareto optimal allocation for two agents. Finally, we present a simple framework that can be used to modify existing fair and efficient algorithms in order to ensure that they also achieve weak-approximate-stability.
\end{abstract}

\section{Introduction}

There are many scenarios where agents are assumed to have cardinal preferences. For instance, this is crucial in the rent-division setting \cite{gal17} and is typical in the vast literature on fair allocation of both divisible \cite{brams96b} and indivisible goods \cite{lip04}. While one could argue that assuming cardinal preferences is reasonable in such contexts, there is an implicit assumption that agents are capable of reporting preferences accurately.  However,  it is not hard to imagine scenarios where many would find it hard to provide exact numerical values to our preferences.

To elaborate on this further, consider the well-studied problem of allocating $m$ indivisible goods among $n \geq 2$ agents. A popular algorithm for this is the maximum Nash welfare (MNW) solution which computes an allocation that is Pareto optimal (PO) and satisfies a fairness notion called EF1 \cite{car16}. Informally, in an allocation that satisfies EF1 an agent $i$ does not envy agent $j$ after she removes some good from $j$'s bundle, whereas Pareto optimality of an allocation implies that there is no other allocation where every agent receives at least as much utility and at least one of the agents strictly more. Given our current state of knowledge on fair and efficient allocations, the MNW solution essentially provides the best-known guarantees. However, as we will soon see, there is at least one aspect with respect to which it is lacking. The issue we will discuss is not specific to MNW but is something that can be raised with respect to different algorithms that assume access to cardinal preferences in different settings. Nevertheless, we use MNW here since our motivation to look at this issue stemmed from observing  examples on Spliddit (\url{www.spliddit.org}), a popular fair division website which uses the MNW solution \cite{gold15}. 
To illustrate our concern, consider an example with two agents (A and B) and four goods ($g_1,\ldots, g_4$). The agents have additive valuations---meaning, value of a set of goods is the sum of values of each of the goods---and their values for the goods are in \Cref{tab1a}. Spliddit uses the MNW solution to compute an EF1 and PO allocation for this instance, and it returns one where agent A receives $\{g_2, g_4\}$ and agent B receives $\{g_1, g_3\}$. What if, however, agent A made a minor `mistake' while reporting the values and instead reported the values in \Cref{tab1b}? Note that the two valuations are almost identical, with the value of each good being off by at most 2. Therefore, intuitively, it looks like we would, ideally, like to have similar outputs---and more so since the allocation for the original instance satisfies EF1 and PO even with respect to this new instance. However, does the MNW solution do this? No, and in fact the allocation returned in this case is one where agent A gets $g_4$ alone and agent B gets the rest. This in turn implies that agent A is losing roughly 38\% of their utility for the `mistake', which seems highly undesirable.

\begin{table}[!tb]    
    \begin{subtable}{.5\linewidth}
      \centering       
        \begin{tabular}{l l|l}
          & A & B \\
	  \cline{2-3}
	  & & \\[-2.5ex]
	 $g_1$ &  104 & 162 \\	  
	 $g_2$ &  273 & 250 \\
	 $g_3$ & 186 & 240\\
	 $g_4$&  437 & 348
        \end{tabular}
         \subcaption{Original instance} \label{tab1a}
    \end{subtable}%
    \begin{subtable}{.5\linewidth}
      \centering
        \begin{tabular}{l l|l}
          & A & B \\
	  \cline{2-3}
	  & & \\[-2.5ex]
	 $g_1$ &  105 & 162 \\	  
	 $g_2$ &  271 & 250 \\
	 $g_3$ & 186 & 240\\
	 $g_4$&  438 & 348
        \end{tabular}
	\subcaption{Instance where agent A makes minor mistakes} \label{tab1b}
    \end{subtable}  
\end{table}

The example described above is certainly not one-off, and in fact we will show later how there are far worse examples for different fair division algorithms. More broadly, we believe that this is an issue that can arise in many problems where the inputs are assumed to be cardinal values. After all, it is not hard to imagine scenarios where many of us may find it hard to convert our preferences to precise numerical values. Now, of course, it is easy to see that if we insist on resolving this issue completely---meaning, if we insist that the agent making the `mistake' should not experience any change in their outcome---then it cannot be done in any interesting way as long as we allow the `mistakes' to be arbitrary and also insist that the algorithm be deterministic.\footnote{Although there is work on randomized fair allocations (e.g., \cite{bogo01, budi13}), as pointed out by \citet{car16}, randomization is not appropriate for many practical fair division settings where the outcomes are just used once.} However, it is possible to impose some structure on the `mistakes' made by agents. For example, in many settings, it might be reasonable to assume that agents are able to, at a minimum, maintain the underlying ordinal structure of their preferences. That is, if an agent considers good $g$ to be the $r$-th highest valued good according to their true preference, then this information is also maintained in the `mistake'. Note that this is indeed the case in the example above, and so, more broadly, this is the setting we consider. Our goal in this paper is to try and address the issue that we observe in the example, and intuitively we want to design algorithms where an agent does not experience a large change in their utility as long as their report is only off by a little.      

Before we make this more concrete, a reader who is familiar with the algorithmic game theory (AGT) literature might have the following question: ``Why not just consider ordinal algorithms?'' After all, these algorithms will have the property mentioned above when the underlying ordinal information is maintained, and moreover there is a body of literature that focuses on designing algorithms that only use ordinal information and still provide good guarantees with respect to the underlying cardinal values (e.g., see \cite{bout15,ansh16,ansh17,goel17,abra18}). Additionally, and more specifically in the context of fair allocations, there is also a line of work that considers ordinal algorithms \cite{bouv10,brams14,aziz15b,segal17}. While this is certainly a reasonable approach, there are a few reasons why this is inadequate: \emph {i)} Constraining algorithms to use only ordinal information might be too restrictive. In fact, this is indeed the case here since we show that there are no ordinal algorithms that are EF1 and even approximately PO. Additionally, assuming that the agents only have ordinal preferences might be too pessimistic in certain situations.\ \emph{ii)} There are systems like Spliddit that are used in practice and which explicitly elicit cardinal preferences, and so we believe that the approach here will be useful in such settings. 

Given this, we believe that there is need for a new notion to address this issue. We term this \textit{stability}, and informally our notion of stability captures the idea that the utility experienced by an agent {should not change much} as long as they make {``small'' or ``innocuous'' mistakes when reporting their preferences}.\footnotemark~Although the general idea of algorithmic stability is certainly  not new (see \Cref{sec:rw} for a discussion), to best of our knowledge, the notion of stability we introduce here (formally defined in \Cref{sec:stability}) has not been previously considered. Therefore, we introduce this notion in the context of problems where cardinal preferences are elicited, and explicitly 
advocate for it to be considered during algorithm or mechanism design. This in turn constitutes what we consider as the main contribution of this paper. We believe that if the algorithms for fair division---and in fact any problem where cardinal preferences are elicited---are to be truly useful in practice they need to have some guarantees on stability, and so towards this end we consider the problem of designing stable algorithms in the context of fair allocation of indivisible goods among two agents.
\footnotetext{In the AGT literature, the term \textit{stability} is usually used in the context of stable algorithms in two-sided matching \cite{gale62}. Additionally, the same term is used in many different contexts in the computer science literature broadly (e.g., in learning theory, or when talking about, say, stable sorting algorithms). Our choice of the term stability here stems from the usage of this term in learning theory (see \Cref{sec:rw} for an extended discussion) and therefore should not be confused with the notion of stability in two-sided matching.}

We begin by formally defining the notion of stability and show how one cannot hope for stable algorithms that are EF1 and even approximately PO. As a result, we propose two relaxations, namely, approximate-stability and weak-approximate-stability, and show how existing algorithms that are fair and efficient perform poorly even in terms of these relaxations. This implies that one has to design new algorithms, and towards this end we present a simple, albeit exponential, algorithm for two agents that is approximately-stable and that guarantees pairwise maximin share (PMMS) and PO allocation; the algorithm is based on a general characterization result for PMMS allocations which we believe might be of independent interest. Finally, we show how a small change to the existing two-agent fair division algorithms can get us weak-approximate-stability along with the properties that these algorithms otherwise satisfy.  

\section{Preliminaries} \label{sec:prelims}
Let $[n] = \{1, \ldots, n\}$ denote the set of agents and $\mathcal{G} = \{g_1, \ldots, g_m\}$ denote the set of indivisible goods that needs to divided among these agents. Throughout, we assume that every agent $i \in [n]$ has an additive valuation function $v_i: 2^{\mathcal{G}} \rightarrow \mathbb{Z}_{\geq0}$,\footnote{We assume valuations are integers to model practical deployment of fair division algorithms (e.g., adjusted winner protocol, Spliddit). All the results hold even if we assume that the valuations are non-negative real numbers.} where $\mathcal{V}$ denotes the set of all additive valuation functions, $v_i(\emptyset) = 0$, and additivity implies that for a $S \subseteq \mathcal{G}$ (which we often refer to as a bundle), $v_i(S) = \sum_{g\in S} v_i(\{g\})$. {For ease of notation, we often omit $\{g\}$ and instead just write it as $v_i(g)$.} We 
also assume throughout that $\forall i \in [n]$, $v_i(\mathcal{G}) = T$ for some $T \in  \mathbb{Z}^+$, and that $\forall g\in \mathcal{G}$, $v_i(g) > 0$. Although the assumption that agents have positive value for a good may not be valid in certain situations, in \Cref{sec:qa} we argue why this is essential in order to obtain anything interesting in context of our problem. Finally, for $S \subseteq \mathcal{G}$ and $k \in [n]$, we use $\Pi_k(S)$ to denote the set of ordered partitions of $S$ into $k$ 
bundles, and for an allocation $A \in \Pi_n (\mathcal{G})$, where $A = (A_1, \ldots, A_n)$, use $A_i$ to denote the bundle allocated to agent $i$. 

We are interested in deterministic algorithms $\mathcal{M}: \mathcal{V}^n \to \Pi_n(\mathcal{G})$ that produce fair and efficient (i.e., Pareto optimal) allocations. For $i\in [n]$ and a profile of valuation functions $(v_i, v_{-i}) \in \mathcal{V}^n$, we use $v_i(\mathcal{M}(v_i, v_{-i}))$ to denote the utility that $i$ obtains from the allocation returned by $\mathcal{M}$.  Pareto optimality and the fairness notions considered are defined below. 

\begin{definition}[{$\beta$-Pareto optimality ($\beta$-PO)}] \label{def:po}
	Given a $\beta \geq 1$, an allocation $A \in \Pi_n (\mathcal{G})$ is $\beta$-Pareto optimal if 
	\begin{equation*}
		\forall A' \in  \Pi_n (\mathcal{G}): \; \left(\exists i \in [n], v_i(A_i') > \beta \cdot v_i(A_i) \right)\Rightarrow \left(\exists j \in [n], v_j(A_i') < v_j(A_j)\right).  
	\end{equation*}
\end{definition}

In words, an allocation is $\beta$-Pareto-optimal if no agent can be made strictly better-off by a factor of $\beta$ without making another agent strictly worse-off; Pareto optimality refers to the special case when $\beta = 1$. For allocations $A, A'$, we say that $A'$ $\beta$-Pareto dominates $A$ if every agent receives at least as much utility in $A'$ as in $A$ and at least one agent is strictly better-off by a factor of $\beta$ in $A'$.

We consider several notions of fairness, namely, pairwise maximin share (PMMS), envy-freeness up to least positively valued good (EFX), and envy-freeness up to one good (EF1). Among these, the main notion we talk about here (and design algorithms for) is PMMS, which is the strongest and which implies the other two notions. Informally, an allocation $A$ is said to be a PMMS allocation if every agent $i$ is guaranteed to get a bundle $A_i$ that she values more than the bundle she receives when she plays \textit{cut-choose} with any other agent $j$ (i.e., she partitions the combined bundle of her allocation and the allocation $A_j$ of $j$ and receives the one she values less).

\begin{definition}[{pairwise maximin share} (PMMS)]
	An allocation $A \in \Pi_n (\mathcal{G})$ is a pairwise maximin share (PMMS) allocation if 
	\begin{equation*}
		\forall i, \forall j \in [n]: \; v_i(A_i) \geq \max_{B \in \Pi_2(A_i \cup A_j)} \min \{v_i(B_1), v_i(B_2)\}.
	\end{equation*}
\end{definition}

Next, we define EFX and EF1, which as mentioned above are weaker than PMMS. 

\begin{definition}[{envy free up to any positively valued good} (EFX)] \label{def:efx}
	An allocation $A \in \Pi_n (\mathcal{G})$ is envy-free up to any positively valued good (EFX) if 
	\begin{equation*}
		\forall i, \forall j \in [n], \forall g \in A_j \text{ with } v_i(g) > 0: \; v_i(A_i) \geq v_i(A_j \setminus \{g\}).
	\end{equation*}
\end{definition}
In words, an allocation is said to be EFX if agent $i$ is no longer envious after removing any positively valued good from agent $j$'s bundle. 

\begin{definition}[{envy free up to one good} (EF1)] \label{def:ef1}
	An allocation $A \in \Pi_n (\mathcal{G})$ is envy-free up to one good (EF1) if 
	\begin{equation*}
		\forall i, \forall j \in [n], \exists g \in A_j: \; v_i(A_i) \geq v_i(A_j \setminus \{g\}).
	\end{equation*}
\end{definition}
In words, an allocation is said to be EF1 if agent $i$ is no longer envious after removing some good from agent $j$'s bundle. 

In addition to the requirement that the algorithms be fair and efficient, we would also ideally like it  to be stable. We define the notion of stability (and its relaxations) in the next section.

\subsection{Stability} \label{sec:stability}
At a high-level, our notion of stability captures the idea that an agent should not experience a large change in utility as long as they make {``small" or ``innocuous" mistakes} while reporting their preferences. Naturally, a more formal definition requires us to first define what constitutes a `mistake' and what we mean by {``small" or ``innocuous" mistakes} in the context of fair division. So, below, for an $i \in [n]$, $v_i \in \mathcal{V}$, and $\alpha > 0$, we define what we refer to as $\alpha$-neighbours of $v_i$. According to this definition, the closer $\alpha$ is to zero, the ``smaller" is the `mistake', since smaller values of $\alpha$ indicate that agent $i$'s report $v_i'$ (which in turn is the `mistake') is closer to their true valuation function $v_i$. 

\begin{definition}[$\alpha$-neighbours of $v_i$ ($\alpha$-$N(v_i)$)] \label{def:alphaN}
	For $i \in [n]$, $T \in \mathbb{Z}^+$, $\alpha > 0$, and $v_i \in \mathcal{V}$, define $\alpha$-$N(v_i)$, the set of $\alpha$-neighbouring 
	valuations of $v_i$, to be the set of all $v_i'$ such that 
	\begin{itemize}
		\item $\sum_{g \in \mathcal{G}} v'(g) = T$ $($i.e., $v_i(\mathcal{G}) = v'_i(\mathcal{G}))$,
		\item $\forall g, \forall g' \in \mathcal{G}$, $v(g) \geq v(g') \Leftrightarrow v'(g) \geq v'(g')$ (i.e., the ordinal information over the singletons is maintained), and
		\item $\left\|v_i'-v_i\right\|_1 = \sum_{g \in \mathcal{G}} |v'(g) - v(g)| \leq \alpha$.\footnote{We present our results using the $L_1$-norm, although qualitatively they do not change if we use, say, the $L_{\infty}$-norm.}
	\end{itemize}
\end{definition}

Throughout, we often refer to the valuation function $v_i$ of agent $i$ as its \textit{true valuation function} or \textit{true report} and, for some  $\alpha > 0$,  $v_i' \in \alpha$-$N(v_i)$ as its \textit{mistake} or \textit{misreport}. Note that although \textit{true valuation function}, \textit{true report}, and \textit{misreport} are terms one often finds in the mechanism design literature which considers strategic agents, we emphasize that here we are not talking about strategic agents, but about agents who are just unsophisticated in the sense that they are unable to accurately convert their preferences into cardinal values. Also, although we write $\alpha > 0$, it should be understood that in the context of our results here, since the valuation functions of the agents are integral, the only valid values of $\alpha$ are when it is an integer. We use this notation for ease of exposition and because all our results  hold even if we instead use real-valued valuations.  

Now that we know what constitutes a mistake, we can define our notion of stability. 

\begin{definition}[$\alpha$-stable algorithm]
	For $\alpha > 0$, an algorithm $\mathcal{M}$ is said to be $\alpha$-stable if $\forall i \in [n], \forall v_i, \forall v_{-i}$, and $\forall v'_i \in 
	\alpha$-$N(v_i)$,
	\begin{equation} \label{eq:def1}
		v_i\left(\mathcal{M}(v_i, v_{-i})\right) = v_i\left(\mathcal{M}(v'_i, v_{-i})\right).
	\end{equation}
\end{definition}

In words, for an $\alpha > 0$, an algorithm is said to be $\alpha$-stable if for every agent $i \in [n]$ with valuation function $v_i$ and all possible reports of the other agents, the utility agent $i$ obtains is the same when reporting $v_i$ and $v'_i$.
Given the definition above, we have the following for what it means for an algorithm to be \textit{stable}. 

\begin{definition}[stable algorithm]
	An algorithm $\mathcal{M}$ is stable if it is $\alpha$-stable for all $\alpha > 0$.
\end{definition}

Although the ``for all'' in the definition above might seem like a strong requirement at first glance, it is not, for one can easily show that the following observation holds. The proof of this appears in \Cref{app:sec:prelims:proofs}.
\begin{restatable}{obs}{obsStable} \label{obs1}
	An algorithm $\mathcal{M}$ is stable if and only if there exists an $\alpha > 0$ such that $\mathcal{M}$ is $\alpha$-stable. 
\end{restatable}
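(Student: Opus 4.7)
The plan is to handle the two directions of Observation~\ref{obs1} separately. The ``only if'' direction is immediate from the definition of stability: if $\mathcal{M}$ is $\alpha$-stable for every $\alpha > 0$, then in particular there exists some $\alpha > 0$ for which $\mathcal{M}$ is $\alpha$-stable.

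For the ``if'' direction, suppose $\mathcal{M}$ is $\alpha$-stable for a fixed $\alpha > 0$, and let $\beta > 0$ be arbitrary. I must show $v_i(\mathcal{M}(v_i, v_{-i})) = v_i(\mathcal{M}(v_i', v_{-i}))$ for every $i$, $v_i$, $v_{-i}$, and $v_i' \in \beta$-$N(v_i)$. When $\beta \le \alpha$, the $L_1$-condition in Definition~\ref{def:alphaN} gives $v_i' \in \alpha$-$N(v_i)$ and the desired equality follows directly from the hypothesis. When $\beta > \alpha$, the idea is to interpolate between $v_i$ and $v_i'$: both valuations share an ordinal ranking over $\mathcal{G}$ and have total $T$, so I would build a chain $v_i = v^{(0)}, v^{(1)}, \ldots, v^{(k)} = v_i'$ lying entirely inside this common ordinal equivalence class, with each consecutive pair $v^{(j-1)}, v^{(j)}$ being $\alpha$-neighbours. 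Such a chain can be constructed by transferring small amounts of value between carefully chosen pairs of coordinates in a way that preserves the ordinal ranking and the total at every intermediate step; the polytope of ordinally-consistent valuations with fixed sum $T$ is connected under such moves, so this is always possible.

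With the chain in place, applying $\alpha$-stability to each consecutive pair---letting each of $v^{(j-1)}, v^{(j)}$ in turn serve as the ``true'' valuation---yields the local equalities $v^{(j-1)}(B_{j-1}) = v^{(j-1)}(B_j)$ and $v^{(j)}(B_j) = v^{(j)}(B_{j-1})$, where $B_j := \mathcal{M}(v^{(j)}, v_{-i})_i$. The main obstacle, and the heart of the proof, is combining these local equalities into the global $v_i(B_0) = v_i(B_k)$: $\alpha$-stability only controls agent $i$'s utility under the \emph{local} valuations $v^{(j-1)}, v^{(j)}$ and not under $v_i = v^{(0)}$, so naive telescoping does not close the argument. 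Subtracting the two local equalities yields the orthogonality relation $(v^{(j-1)} - v^{(j)}) \cdot (\mathbf{1}_{B_{j-1}} - \mathbf{1}_{B_j}) = 0$ at each step, and my plan to bridge the gap is to combine these relations with the integrality of the valuations and the $\{-1,0,1\}$-valued structure of the bundle-difference vectors---routing the chain, if needed, through sufficiently generic intermediate valuations so that the resulting system of constraints is rigid---to force $v_i(B_{j-1}) = v_i(B_j)$ at every step. Once that is established, telescoping along the chain immediately yields $v_i(B_0) = v_i(B_k)$, completing the proof.
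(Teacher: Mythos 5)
Your overall strategy (handle $\beta\le\alpha$ trivially, connect $v_i$ to $v_i'$ by a chain of $\alpha$-neighbours inside the common ordinal class, then telescope) is the same one the paper uses, and you are right that the delicate point is exactly the one you flag: $\alpha$-stability applied to the pair $(v^{(j-1)},v^{(j)})$ only controls agent $i$'s utility as measured by $v^{(j-1)}$ and $v^{(j)}$, not by $v_i=v^{(0)}$ (the paper's own write-up simply asserts the per-step equality under the original valuation without comment). However, your proposal does not actually close this gap: the step ``combine the orthogonality relations with integrality and the $\{-1,0,1\}$ structure, routing the chain through sufficiently generic intermediate valuations so that the system is rigid'' is a plan, not an argument. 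The relation $(v^{(j-1)}-v^{(j)})\cdot(\mathbf{1}_{B_{j-1}}-\mathbf{1}_{B_j})=0$ by itself says nothing about $v_i\cdot(\mathbf{1}_{B_{j-1}}-\mathbf{1}_{B_j})$, and the genericity device is shaky: all valuations in the ordinal class share the same tie pattern among singletons, and in the integer model an $\alpha$-ball around a given point need not contain any valuation avoiding the finitely many bundle-tie hyperplanes, so ``sufficiently generic'' intermediate points may not exist for small $\alpha$; the claimed rigidity is never established. As written, the heart of the proof is missing.

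The gap can be closed by a much more specific choice of chain, which is presumably what the paper intends: take the chain along the straight segment, $v^{(j)}=v_i+t_j(v_i'-v_i)$ with $0=t_0<\cdots<t_k=1$ and $k$ large enough that consecutive points are $\alpha$-neighbours. Every point of the segment has total $T$, positive values, and the same weak order over goods as $v_i$, so it stays in the domain (at least in the real-valued model to which the paper says all results extend; in the integer model the discretization needs extra care). Then your two local equalities at step $j$ give both $v^{(j-1)}\cdot\Delta_j=0$ and $(v^{(j-1)}-v^{(j)})\cdot\Delta_j=0$, where $\Delta_j=\mathbf{1}_{B_{j-1}}-\mathbf{1}_{B_j}$; since $v^{(j-1)}-v^{(j)}$ is a nonzero multiple of $v_i'-v_i$, the second relation yields $(v_i'-v_i)\cdot\Delta_j=0$, and hence $v_i\cdot\Delta_j=v^{(j-1)}\cdot\Delta_j-t_{j-1}\,(v_i'-v_i)\cdot\Delta_j=0$. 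That is precisely the per-step equality $v_i(B_{j-1})=v_i(B_j)$ you needed, and telescoping along the chain finishes the proof with no genericity or rigidity considerations at all.
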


%
%

It is important to note that the definition for an $\alpha$-stable algorithm is only saying that the utility agent $i$ obtains (and not the allocation itself) when moving from $v_i$ to $v_i'$ is the same. Additionally, although the notion of stability in general may look too strong, it is important to note that there are several algorithms (e.g., the well-known EF1 \textit{draft} mechanism \cite{car16}) that satisfy this definition. In particular, one can immediately see from the definition of stability that every ordinal fair division algorithm---i.e., an algorithm that produces the same output for input profiles $(v_1, \ldots, v_n)$ and $(v'_1, \ldots, v'_n)$ as long as $\forall g, g' \in \mathcal{G}, v_i(g) \geq v_i(g') \Leftrightarrow v'_i(g) \geq v'_i(g')$---is stable.  


%
%

However, in general, and as we will see in \Cref{sec:st-f-e}, the equality in (\ref{eq:def1}) can be too strong a requirement. Therefore, in the next section we propose two relaxations to the strong requirement of stability. 

\subsubsection{Approximate notions of stability} \label{sec:approx-stability}
We first introduce the weaker relaxation which we refer to as weak-approximate-stability. Informally, weak-approximate-stability basically says that the utility that an agent experiences as a result of reporting a neighbouring instance is not too far away from the what would have been achieved if the reports were exact. 

\begin{definition}[$(\epsilon, \alpha)$-weakly-approximately-stable algorithm] \label{def:weak-approx}
	For an $\alpha > 0$ and $\epsilon \geq 1$, an algorithm $\mathcal{M}$ is said to be $(\epsilon, \alpha)$-weakly-stable if $\forall i \in [n], \forall v_i, \forall v_{-i}$, 
	and $\forall v'_i \in \alpha$-$N(v_i)$,
	\begin{equation}    
		\frac{1}{\epsilon} \leq \frac{v_i\left(\mathcal{M}(v'_i, v_{-i})\right)}{v_i\left(\mathcal{M}(v_i, v_{-i})\right)} \: \leq \:  \epsilon.
	\end{equation}
\end{definition}

Although the definition above might seem like a natural relaxation of the notion of stability, as it will become clear soon, it is a bit weak. Therefore, below we introduce the stronger notion which we refer to as approximate-stability. However, before this, we introduce the following, which, for a given valuation function $v$, defines the set, $\text{equiv}(v)$, of valuation functions $v'$ such that the ordinal information over the bundles is the same in both $v$ and $v'$.

\begin{definition}[$\text{equiv}(v)$] \label{def:equiv}
	For a valuation function $v: 2^{\mathcal{G}} \rightarrow \mathbb{Z}_{\geq0}$, $\text{equiv}(v)$ refers to the set of all valuation functions $v'$ such that for all $S_1, S_2 \subseteq \mathcal{G}$,	
	\begin{align*}
		v(S_1) \geq v(S_2) \Leftrightarrow v'(S_1) \geq v'(S_2).
	\end{align*}
\end{definition}
In words, equiv($v$) refers to set all of valuation functions $v'$ such that $v$ and $v'$ induce the same weak order over the set of all bundles (i.e., over the set $2^{\mathcal{G}}$).  Throughout, for an $i \in [n]$, we say that two instances (or profiles) $(v_i, v_{-i})$ and $(v'_i, v_{-i})$ are equivalent if $v_i' \in \text{equiv}(v_i)$. Also, we say that $v_i$ and $v_i'$ are \textit{ordinally equivalent} if $v_i' \in \text{equiv}(v_i)$. 

Equipped with this notion, we can now define approximate-stability. Informally, an algorithm is approximately-stable if it is weakly-approximately-stable and if with respect to every instance that is equivalent to the true reports, it is stable. 
\begin{definition}[$(\epsilon, \alpha)$-approximately-stable algorithm] \label{def:approx-stable}
	For an $\alpha > 0$ and $\epsilon \geq 1$, an algorithm $\mathcal{M}$ is said to be $(\epsilon, \alpha)$-approximately-stable if $\forall i \in [n], \forall v_i, \forall v_{-i}$, 
	\begin{itemize}
		\item $\forall v'_i \in \text{equiv}(v_i)$, $v_i\left(\mathcal{M}(v_i, v_{-i})\right) = v_i\left(\mathcal{M}(v'_i, v_{-i})\right)$, and
		\item $\mathcal{M}$ is $(\epsilon, \alpha)$-weakly-approximately-stable.
	\end{itemize}
\end{definition}
Note that when $\epsilon = 1$ the definitions for both the relaxations (i.e., weak-approximate-stability and approximate-stability) collapse to the one for $\alpha$-stable algorithms. Also, throughout, we say that an algorithm is $\epsilon$-approximately-stable for $\alpha \leq K$ if it is $(\epsilon, \alpha)$-approximately-stable for all $\alpha \in (0,K]$ (similarly for $\epsilon$-weakly-stable).

Although the definition of approximate-stability might be seem a bit contrived at first glance, it is important to note that this is not the case. The requirement that the algorithm be stable on equivalent instances is natural because of the following observation which states that with respect to all the notions that we talk about here any algorithm that satisfies such a notion can always output the same allocation for two instances that are equivalent.\footnote{Note that \Cref{obs2} is only valid for the exact versions of these notions and not for their approximate counterparts.} 

\begin{obs} \label{obs2}
	Let $P$ be a property that is one of EF1, EFX, PMMS, or PO. For all $i \in [n]$, if an allocation $(A_1, \ldots, A_n)$ satisfies property $P$ with respect to the profile $(v_i, v_{-i})$, then $(A_1, \ldots, A_n)$ also satisfies property $P$ with respect to the profile $(v'_i, v_{-i})$, where $v_i' \in \text{equiv}(v_i)$. 
\end{obs}
\begin{proof}
	Consider an arbitrary agent $i \in [n]$, and recall from \Cref{def:equiv} that for any $v_i' \in \text{equiv}(v_i)$, and for any two sets $S_1, S_2 \subseteq \mathcal{G}$, $v_i'(S_1) \geq v_i'(S_2)$ if and only $v_i(S_1) \geq v_i(S_2)$. Given this, the observation follows by using the definitions of the stated properties. 
\end{proof}

\subsection{Some Q \& A on assumptions and definitions} \label{sec:qa}

\textbf{Why the assumption of positive values for the goods?} To see why we make this assumption, consider the fair division instance in \Cref{tab-appa}. Next, consider an arbitrary algorithm $\mathcal{M}$ that is EF1 and PO, and let us assume w.l.o.g.\ that the allocation returned by the algorithm is $(\{g_2\}, \{g_1\})$. Now, let us consider the instance in \Cref{tab-appb}. Note that since $\mathcal{M}$ returns an allocation that is PO and EF1, therefore the output with respect to the instance in \Cref{tab-appb} has to be either $(\{g_1\}, \{g_2\})$ or  $(\emptyset, \{g_1, g_2\})$.  
\begin{table}[!htb]    
	\begin{subtable}{.5\linewidth}
		\centering       
		\begin{tabular}{l l|l}
			& A & B \\
			\cline{2-3}
			& & \\[-2ex]
			$g_1$ &  T-1 & T-1 \\	  
			$g_2$ &  1 & 1 
		\end{tabular}
		\caption{Original instance} \label{tab-appa}
	\end{subtable}%
	\begin{subtable}{.5\linewidth}
		\centering
		\begin{tabular}{l l|l}
			& A & B \\
			\cline{2-3}
			& & \\[-2.5ex]
			$g_1$ &  T & T-1 \\	  
			$g_2$ &  0 & 1 
		\end{tabular}
		\caption{Instance where agent A makes a mistake} \label{tab-appb}
	\end{subtable} 
\end{table}

Given this, consider a scenario where the valuation function mentioned in \Cref{tab-appa} is the true valuation $(v_1)$ of agent $A$. When reporting correctly, she receives the good $g_2$. \Cref{tab-appb} shows agent A's misreport $(v'_1)$, where $\alpha = 2$ and in which case she receives the good $g_1$ or none of the goods. Therefore, now, for any algorithm that is EF1 and PO, we either have  $\frac{v_1(g_1)}{v_1(g_2)} = \frac{T}{1}$, or $\frac{v_1(\emptyset)}{v_1(g_2)} = 0$, and both of these are not informative or useful. 

\textbf{Is there any direct connection by the notion of stability and strategyproofness?} At first glance, the definition of stable algorithms (or its relaxations defined in \Cref{sec:approx-stability}) might seem very similar to the definition for (approximately) strategyproof algorithms (i.e., algorithms where truthful reporting is an (approximately) weakly-dominant strategy for all the agents). Although there is indeed some similarity, it is important to note that these are different notions and neither does one imply the other. For instance, there is a stable algorithm that is EF1 (one can easily see that the well-known EF1 \textit{draft} algorithm \cite[Sec.\ 3]{car16} is stable), but there is no strategyproof algorithm that is EF1 \cite[App.\ 4.6]{aman17}.

\section{Related work} \label{sec:rw}
Now that we have defined our notion, we can better discuss related work. There are several lines of research that are related to the topic of this paper. Some of the connections we discuss are in very different contexts and are by themselves very active areas of research. In such cases we only provide some pointers to the relevant literature, citing the seminal works in these areas. 

\textbf{Connections to algorithmic stability, differential privacy, and algorithmic fairness.} Algorithmic stability captures the idea of stability by employing the 
principle that the output of an algorithm should not ``change much'' when a ``small change'' is made to the input. To the best of our knowledge, notions of stability have not been considered in the AGT literature. However, the computational learning theory literature has considered 
various notions of stability and has, for instance, used them to draw connections between the stability of a learning algorithm and its ability to generalize (e.g., 
\cite{bous02,shalev10}). Although the notion of stability we employ here is based on the same principle, it is defined differently from the ones in this literature. Here we are concerned about the change in utility than an agent experiences when she perturbs her input and deem an algorithm to be approximately-stable if this change is small.

Algorithmic stability can, in turn, be connected to differential privacy \cite{dwork06a,dwork06b}. Informally, differential privacy (DP) requires that the probability of an outcome does not ``change much'' on ``small changes'' to the input. Therefore, essentially, DP can be considered as a notion of algorithmic stability, albeit a very strong one as compared to the ones studied in the learning theory literature (see the discussion in \cite[Sec.\ 1.4]{dwork15}) and the one we consider. In particular, if we were considering randomized algorithms, it is indeed the case that an $\epsilon$-differentially private algorithm is $\text{exp}(\epsilon)$-stable, just like how an $\epsilon$-differentially private algorithm is a ($\text{exp}(\epsilon)-1)$-dominant strategy mechanism \cite{mcsh07}. Nevertheless, we believe that the notion we introduce here is independently useful, and is different from DP in a few ways. First, the motivation is completely different. We believe that our notion may be important even in situations where privacy is not a concern. Second, in this paper we are only concerned with deterministic algorithms and one can easily see that DP is too strong a notion for this case as there are no deterministic and differentially private algorithms that have a range of at least two. 

Finally, the literature on algorithmic (individual-based) fairness captures the idea of fairness by employing the principle that ``similar agents'' should be ``treated similarly'' \cite{dwo12}. Although this notion is employed in contexts where one is talking about two different individuals, note that one way to think of our stability requirement is to think of it as a fairness requirement where two agents are considered similar if and only if they have similar inputs (i.e., say, if one's input is a perturbation of the other's). Therefore, thinking this way, algorithmic fairness can be considered as a generalization of stability, and just like DP it is much stronger and only applicable in randomized settings. In fact, algorithmic fairness can be seen as a generalization of DP (see the discussion in \cite[Sec.\ 2.3]{dwo12}) and so our argument above as to why our notion is useful is relevant even in this case.

\textbf{Connections to robust algorithm and mechanism design.} Informally, an algorithm is said to be robust if it ``performs-well'' even under ``slightly different'' inputs or if the underlying model is different from the one the designer has access to. This notion has received considerable amount of attention in the algorithmic game theory and social choice literature. For instance---and although this line of work does not explicitly term their algorithms as ``(approximately) robust''---the flurry of work that takes the implicit utilitarian view considers scenarios where the agents have underlying cardinal preferences but only provide ordinal preferences to the designer. The goal of the designer in these settings is to then use these ordinal preferences in order obtain an algorithm or mechanism that ``performs well'' (in the approximation sense, with respect to some objective function) with respect to all the possible underlying cardinal preferences (e.g., \cite{bout15,ansh16,ansh17,goel17,abra18}). Additionally, and more explicitly, robust algorithm design has been considered, for instance, in the context of voting (e.g., \cite{shir13,bred17}) and the stable marriage problem \cite{menon18,mai18,chen19}, and robust mechanism design has been considered in the context of auctions \cite{chiesa12,chiesa14,chiesa15} and facility location \cite{menon19}. Although, intuitively, the concepts of robustness and stability might seem quite similar, it is important to note that they are different. Stability requires that the outcome of an algorithm does not ``change much'' if one of the agents slightly modifies its input. Therefore, the emphasis here is to make sure that the outcomes are not very different as long as there is a small change to the input associated with one of the agents. Robustness, on the other hand, requires the outcome of an algorithm to remain ``good'' (in the approximation sense) even if the underlying inputs are different from what the algorithm had access to. Therefore, in this case the emphasis is on making sure that the same output (i.e., one that is computed with the input the algorithm has access to) is not-too-bad with respect to a set of possible underlying true inputs (but ones the algorithm does not have access to). More broadly, one can think of robustness as a feature that a designer aspires to to ensure that the outcome of their algorithm is not-too-bad even if the model assumed, or the input they have access to, is slightly inaccurate, whereas stability in the context that we use here is more of a feature that is in service of unsophisticated agents who are prone to making mistakes when converting their preferences to cardinal values.

\textbf{Related work on fair division of indivisible goods.} The problem of fairly allocating indivisible goods has received considerable attention, with several works proposing different notions of fairness \cite{lip04,bud11,car16,aman18} and computing allocations that satisfy these notions, sometimes along with some 
efficiency requirements \cite{car16,bar18,plaut18,car19}. This paper also studies the problem of computing fair and efficient allocations, but in contrast to previous work our focus is on coming up with algorithms that are also (approximately) stable. While many of these papers address the general case of $n\geq2$ agents, our work focuses on the case of two agents. 

Although a restricted case, the two agent case is an important one and has been explicitly considered in several previous works \cite{brams12,rama13,brams14,vet14,aziz15a,plaut18}. Among these, the work that is most relevant to our results here is that of \citet{rama13}. In particular, and although the results here we derived independently, \citeauthor{rama13}'s paper contains two results that are similar to the ones we have here---first, a slightly weaker version of the $n=2$ case of \Cref{thm:pmmsiff}, and second, a slightly weaker version of \Cref{thm:main}. The exact differences are outlined in Sections~\ref{sec:neccsuff} and~\ref{sec:rl} since we need to introduce a few more notions to make them clear. 

In addition to the papers mentioned above---all of which adopt the model as in this paper where the assumption is that the agents have cardinal 
preferences---there is also work that considers the case when agents have ordinal preferences \cite{bouv10,brams14,aziz15b,segal17}. Although this line of work is 
related in that ordinal algorithms are stable, it is also quite different since usually the goal in these papers is to compute fair allocations if they exist or 
study the complexity of computing notions like possibly-fair or necessarily-fair allocations.

\section{Approximate-Stability in Fair Allocation of Indivisible Goods} 
Our aim is to design (approximately) stable algorithms for allocating a set of indivisible goods among two agents that guarantee pairwise maximin share (PMMS) and Pareto optimal (PO) allocations.  However, before we try to design new algorithms, the first question that arises is: \textit{How do the existing algorithms fare? How stable are they?} We address this below. 

\subsection{How (approximately) stable are the existing algorithms?} \label{sec:existalgo-stable}
We consider the following well-studied algorithms that guarantee PO and at least EF1.
\begin{enumerate}[label=\roman*)]
	\item Adjusted winner protocol \cite{brams96a,brams96b}; returns an EF1 and PO allocation for two agents.
	\item Leximin solution \cite{plaut18}; returns a PMMS and PO allocation for two agents.
	\item Maximum Nash Welfare solution \cite{car16}; returns an EF1 and PO  allocation for any number of agents.
	\item Fisher-market based algorithm \cite{bar18}; returns an EF1 and PO allocation for any number of agents.
\end{enumerate}

All the algorithms mentioned above perform poorly even in terms of the weaker relaxation of stability, i.e., weak-approximate-stability. To see this, consider the instance in \Cref{tab2a}, and note that the allocation that is output by any of these algorithms is agent A getting $g_1$ and agent B getting $g_2$.  Next, consider the instance in \Cref{tab2b}. In this case, if we use any of these algorithms, then the allocation that is output is agent A getting $g_2$ and agent B getting $g_1$. 

Given this, let the values mentioned in \Cref{tab2a} constitute the true valuation function $(v_1)$ of agent $A$. When reporting these, she receives the good $g_1$. \Cref{tab2b} shows agent A's misreport $(v'_1)$ in which case she receives the good $g_2$. Recall from the definition of $\alpha$-neighbours of $v_i$ (\Cref{def:alphaN}) that $v_1' \in \alpha$-$N(v_i)$, where $\alpha = 4$. Therefore, we now have, 
$\frac{v_1(g_2)}{v_1(g_1)} = \frac{1}{T-1}$, 
or in other words, all the four algorithms mentioned above are $(T-1)$-weakly-approximately-stable, even when $\alpha = 4$.

\begin{table}[tb]    
	\begin{subtable}{.5\linewidth}
		\centering       
		\begin{tabular}{l l|l}
			& A & B \\
			\cline{2-3}
			& & \\[-2.5ex]
			$g_1$ &  T-1 & T-2 \\	  
			$g_2$ &  1 & 2 
		\end{tabular}
		\caption{Original instance} \label{tab2a}
	\end{subtable}%
	\begin{subtable}{.5\linewidth}
		\centering
		\begin{tabular}{l l|l}
			& A & B \\
			\cline{2-3}
			& & \\[-2.5ex]
			$g_1$ &  T-3 & T-2 \\	  
			$g_2$ &  3 & 2
		\end{tabular}
		\caption{Instance where agent A makes a mistake} \label{tab2b}
	\end{subtable} 
\end{table}

\subsection{Are there fair and efficient algorithms that are stable?} \label{sec:st-f-e} 
The observation that previously studied algorithms perform poorly even in terms of the weaker relaxation of stability implies that we need to look for new algorithms. So, now, a natural question that arises here is: \textit{Is there any hope at all for algorithms that are fair, PO, and stable?} Note that without the requirement of PO the answer to this question is a ``Yes''---at least when the fairness notion that is being considered is EF1, 
since it is easy to observe that the well-known \textit{draft} algorithm (where agents take turns picking their favourite good among the remaining goods) that is EF1 \cite[Sec.\ 3]{car16} is stable.  However, if we require PO, then we show that the answer to the question above is a ``No,'' in that there are no stable algorithms that always return an EF1 and even approximately-PO allocation. 

\begin{theorem} 
	Let $\mathcal{M}$ be an algorithm that is stable and always returns an EF1 allocation. Then, for any $\beta$ in $[1, \frac{T-3}{2}]$, $\mathcal{M}$ cannot be $\beta$-PO.
\end{theorem}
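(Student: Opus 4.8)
The plan is to first turn ``stability'' into a purely ordinal constraint on $\mathcal{M}$, and then to exhibit a single profile on which that constraint cannot coexist with EF1 and $\beta$-PO.

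First I would extract the consequence of stability. Since a stable $\mathcal{M}$ is $\alpha$-stable for every $\alpha>0$, I would take $\alpha$ as large as $2T$; as any two valuations with total $T$ are within $L_1$-distance $2T$, the set $\alpha\text{-}N(v_1)$ is then the whole class of valuations sharing $v_1$'s order over singletons and summing to $T$. Hence, for a fixed $v_2$, agent $1$'s realized utility (under its true valuation) is identical for every report in this ordinal class. Letting each member of the class in turn play the role of the true valuation, I get that for any two reports $u,u'$ and any $w$ in the class, $w(\mathcal{M}(u,v_2))=w(\mathcal{M}(w,v_2))=w(\mathcal{M}(u',v_2))$; so the bundles agent $1$ receives from any two reports have equal value under every $w$ in the class. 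By additivity, $\sum_{g\in S}w(g)=\sum_{g\in S'}w(g)$ holding for all $w$ in a set that affinely spans $\{w:\sum_g w(g)=T\}$ forces $S=S'$. Thus (for generic, i.e.\ spanning, classes) a stable, always-EF1 algorithm must output one fixed allocation $A^\star$ on the entire ordinal class, and $A^\star$ must be EF1 and $\beta$-PO for \emph{every} instance $(v_1,v_2)$ with $v_1$ in that class; in short, a stable two-agent algorithm is effectively ordinal.

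Given this, the goal reduces to constructing a witness: a fixed $v_2$ and an ordinal class for agent $1$ on which no allocation is simultaneously EF1 and $\beta$-PO for all members (equivalently, two ordinally-equivalent reports whose EF1-and-$\beta$-PO feasible sets are disjoint). The key difficulty to anticipate is that ``easy'' allocations are hard to kill: whenever one agent is handed a \emph{single} good and the other gets the rest, EF1 for the second agent is automatic, and one can check that with three or fewer goods one of ``give agent $1$ its top good'' or ``give agent $1$ everything but $v_2$'s favorite'' is always EF1 and Pareto optimal across the whole class---so a stable, EF1, Pareto-optimal algorithm genuinely exists there. Consequently the witness must use at least four goods, and I would pick $v_2$ balanced enough (each value below the sum of the two smallest others) and keep the class rich enough to contain a ``spread'' member, so that any single-good bundle violates EF1 for the opposing agent; this eliminates all single-good allocations and forces both agents to hold at least two goods.

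Among the remaining (two-two) splits I would then play EF1 against $\beta$-PO: $\beta$-PO pushes the high-value goods onto agent $2$, while EF1 forces a lower bound on agent $1$'s bundle, and I would tune the magnitudes so that on a ``concentrated'' member of the class the only splits surviving EF1 are Pareto-dominated by a factor approaching $(T-3)/2$ (agent $1$ being movable from a bundle of value about $2$ to one near $T-3$ with $v_2$ held weakly fixed), while on a ``spread'' member a different split is forced; the two feasible sets are then disjoint, contradicting the fixed-$A^\star$ conclusion above. It suffices to do this at the endpoint $\beta=(T-3)/2$: since $\beta$-PO is a strictly stronger requirement for smaller $\beta$ (any $\beta$-Pareto improvement with factor exceeding the larger threshold also exceeds the smaller one), impossibility at the endpoint propagates over all of $[1,(T-3)/2]$.

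The main obstacle is exactly this last construction step: threading the two opposing pressures so that their common feasible set is empty on a full spanning class---EF1 must forbid every single-good allocation (which is what forces four goods and a balanced $v_2$), while $\beta$-PO, even for the \emph{weakest} permitted $\beta=(T-3)/2$, must still exclude on some member every split that EF1 permits---and simultaneously calibrating the value gaps so the domination factor lands exactly at $(T-3)/2$. By contrast, the stability-to-ordinality reduction and the final Pareto-domination estimate are routine once the profile is fixed.
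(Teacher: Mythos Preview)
Your reduction from stability to ``the output is constant on each spanning ordinal class'' is sound, but it is a detour the paper does not take, and more importantly the witness construction you sketch afterwards does not close. The paper never argues that the \emph{allocation} is fixed---it uses only that the misreporting agent's \emph{utility} is fixed, which is literally the definition of stability---and it exhibits two explicit profiles rather than reasoning about a whole class.

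The concrete gap is this: once you fix a balanced $v_2$ and a strict-order class for agent~1, the 2--2 split $A^\star=(\{g_1,g_2\},\{g_3,g_4\})$ in which agent~1 keeps her two top goods survives everything you throw at it. It is EF1 for agent~1 on every member of the class (she holds her two best items); it is EF1 for agent~2 because your balancedness gives $v_2(g_3)+v_2(g_4)\ge v_2(g_2)$; and it is Pareto optimal on every member, since any reallocation that weakly improves agent~2 must hand her $g_1$ or $g_2$ and therefore strictly hurts agent~1, while any reallocation that weakly improves agent~1 shrinks agent~2's bundle below $\{g_3,g_4\}$. So the ``disjoint feasible sets'' you are aiming for are not disjoint---$A^\star$ sits in all of them---and no choice of a balanced $v_2$ removes it. A secondary issue is that your spanning argument forces strict orders, so on a concentrated member the minimum two-good value for agent~1 is at least $1+2=3$, which caps the domination factor you can exhibit strictly below the target $(T-3)/2$.

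The paper circumvents both problems by starting from a \emph{symmetric} true profile (both agents value $g_1$ at $T/3$ and $g_2,g_3,g_4$ at $2T/9$). Every EF1 allocation there is a 2--2 split, and whichever agent does \emph{not} receive $g_1$---call her $B$---has true utility $4T/9$. Stability alone then pins $B$'s true utility at $4T/9$ when $B$ misreports to $(T{-}3,1,1,1)$; with those ties the only bundles of that value are pairs from $\{g_2,g_3,g_4\}$, and the resulting allocation is $(T{-}3)/2$-Pareto dominated by $(\{g_2,g_3,g_4\},\{g_1\})$. The symmetric start is exactly what dodges your ``agent~1 keeps her top goods'' obstacle: the WLOG step guarantees that the agent whose report you perturb is the one who \emph{lost} $g_1$.
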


\begin{proof}
	Consider the instance in \Cref{tab3a} which represents agents' true valuation functions. For simplicity we assume that $T\mod 9 \equiv 0$. Next, since the agents are symmetric and it is easy to verify that in every EF1 allocation each of the agents have to get exactly two goods, let us assume w.l.o.g.\ that agent A receives $g_1$. Given this, now consider the instance in \Cref{tab3b} where agent B makes a mistake. Let us denote agent B's true utility function as $v$, misreport as $v' \in \frac{4T}{3}$-$N(v)$, and $(S_1, S_2)$ and $(S_1', S_2')$ as outcomes for the instances in Tables~\ref{tab3a}, \ref{tab3b}, respectively. From our discussion above, we know that $v(S_2) = \frac{4T}{9}$, since agent B gets two goods from the set $\{g_2, g_3, g_4\}$. 
	
		\begin{table}[tb]    
		\begin{subtable}{.5\linewidth}
			\centering       
			\begin{tabular}{l l|l}
				& A & B \\
				\cline{2-3}
				& & \\[-2ex]
				$g_1$ &  $\frac{T}{3}$ & $\frac{T}{3}$ \\[0.9ex]	  
				$g_2$ &  $\frac{2T}{9}$ & $\frac{2T}{9}$ \\[0.9ex]	
				$g_3$ &  $\frac{2T}{9}$ & $\frac{2T}{9}$ \\[0.9ex]	
				$g_4$ &  $\frac{2T}{9}$ & $\frac{2T}{9}$
			\end{tabular}
			\caption{Original instance} \label{tab3a}
		\end{subtable}%
		\begin{subtable}{.5\linewidth}
			\centering
			\begin{tabular}{l l|l}
				& A & B \\
				\cline{2-3}
				& & \\[-2ex]
				$g_1$ &  $\frac{T}{3}$ & $T-3$ \\[0.9ex]	  
				$g_2$ &  $\frac{2T}{9}$ & $1$ \\[0.9ex]
				$g_3$ &  $\frac{2T}{9}$ & $1$ \\[0.9ex]
				$g_4$ &  $\frac{2T}{9}$ & $1$
			\end{tabular}
			\caption{Instance where agent B makes a mistake} \label{tab3b}
		\end{subtable} 
	\end{table}
	
	Since $\mathcal{M}$ is stable, we know that $v(S_2) = v(S_2')$. Now, it is easy to 
	see that this is only possible if the set $S_2'$ has exactly two goods from $\{g_2, g_3, g_4\}$. So, let us assume w.l.o.g.\ that $S'_2 = \{g_2, g_3\}$. If 
	this is the case, then note that the allocation $(\{g_2, g_3, g_4\}, \{g_1\})$ Pareto dominates the allocation $(S_1', S_2')$ by a factor of $\frac{v'(g_1)}{v'(S_2')} = 
	\frac{T-3}{2}$, which in turn proves our theorem.
\end{proof}


Given this result and our observation in \Cref{sec:existalgo-stable} that previously studied algorithms perform poorly even in terms of weak-approximate-stability, it is clear that the best one can hope for is to design new algorithms that are fair, efficient, and approximately-stable. In \Cref{sec:rl} we show that this is possible. However, before we do that, in the next section we first present a necessary and sufficient condition for PMMS allocations when there are $n\geq 2$ agents.

\subsection{A necessary and sufficient condition for existence of PMMS allocations} \label{sec:neccsuff}
The general characterization result for PMMS allocations presented here will be useful in the next section to design an approximately-stable algorithm that produces a PMMS and PO allocation for the case of two agents. Additionally, we also believe that the result might potentially be of independent interest. 

For an agent $\ell\in[n]$, a set $Q \subseteq \mathcal{G}$, and a set $S \subseteq Q$, the result uses a notion of rank of $S$, denoted by $r_\ell^Q(S)$, and defined as the number of subsets of $Q$ that have value at most $v_\ell(S)$. More formally, 
\begin{equation} \label{def:rank}
	r_\ell^Q(S) = \bigl| \left\{P \mid P \subseteq Q, v_\ell(P) \leq v_\ell(S)\right\}\bigr|.
\end{equation}

The notion of rank has been previously considered in the fair division literature by \citet{rama13}. In particular,  according to our notation, they talk about $r_\ell^{\mathcal{G}}(S)$ in the context of fair division among two agents who have a strict preference orders over the subsets of $\mathcal{G}$, and one of their results is a weaker (since they assume a strict order over subsets of $\mathcal{G}$ which is not assumed here) version of the $n=2$ case of the theorem below \cite[Thm.\ 1(2)]{rama13}. 

\begin{theorem} \label{thm:pmmsiff}
	Given an instance with $m$ indivisible goods and $n \geq 2$ agents with additive valuation functions, an allocation $A = (A_1, \ldots, A_n)$ is a pairwise maximin share allocation if and only if $\forall i \in [n], \forall j \in [n]$, and $K_{ij} = A_i \cup A_j$,
	\begin{equation*}
		\min\left\{r_i^{K_{ij}}(A_i), \: r_j^{K_{ij}}(A_j)\right\} \geq 2^{\abs{K_{ij}}-1}. 
	\end{equation*}
\end{theorem}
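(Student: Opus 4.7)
The plan is to reduce the theorem to a single-pair statement: for fixed $i, j \in [n]$, with $K = K_{ij}$ and $N = \abs{K}$, I aim to show that the PMMS condition $v_i(A_i) \geq \max_{B \in \Pi_2(K)} \min\{v_i(B_1), v_i(B_2)\}$ is equivalent to $r_i^K(A_i) \geq 2^{N-1}$; the theorem then follows by applying this equivalence to every ordered pair and using symmetry in the roles of $i$ and $j$. Let $M_i := \max\{v_i(S) : S \subseteq K,\, v_i(S) \leq v_i(K)/2\}$. The first step is to observe that the PMMS inequality for $(i,j)$ is exactly $v_i(A_i) \geq M_i$, since in any ordered partition the smaller bundle has value at most $v_i(K)/2$, and conversely any $S$ with $v_i(S) \leq v_i(K)/2$ is realized as the smaller bundle of $(S, K \setminus S)$.

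The central tool is the complement involution $S \mapsto K \setminus S$ on $2^K$, which partitions the $2^N$ subsets into $2^{N-1}$ pairs. Because $v_i(S) + v_i(K \setminus S) = v_i(K)$, in every pair at least one member lies in $L_i := \{S \subseteq K : v_i(S) \leq M_i\} = \{S \subseteq K : v_i(S) \leq v_i(K)/2\}$, so $\abs{L_i} \geq 2^{N-1}$. If $v_i(A_i) \geq M_i$, then every $S \in L_i$ satisfies $v_i(S) \leq v_i(A_i)$, giving $r_i^K(A_i) \geq \abs{L_i} \geq 2^{N-1}$; this handles the ``only if'' direction.

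For the ``if'' direction I would argue by contrapositive. If $v_i(A_i) < M_i$, then $r_i^K(A_i) = \abs{L_i} - u$, where $u$ is the number of subsets of $K$ whose value equals $M_i$. I split into two cases. When $M_i < v_i(K)/2$, no complement pair is balanced at $v_i(K)/2$, so each pair contributes exactly one element to $L_i$, giving $\abs{L_i} = 2^{N-1}$; since $M_i$ is attained ($u \geq 1$), we get $r_i^K(A_i) \leq 2^{N-1} - 1$. When $M_i = v_i(K)/2$, the identity $v_i(S) = v_i(K)/2$ forces $v_i(K \setminus S) = v_i(K)/2$, so the $u$ subsets of value $M_i$ break into $u/2 \geq 1$ complement pairs, each contributing $2$ to $\abs{L_i}$; the remaining $2^{N-1} - u/2$ pairs contribute one each, so $\abs{L_i} = 2^{N-1} + u/2$ and $r_i^K(A_i) = 2^{N-1} - u/2 < 2^{N-1}$. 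In both cases $r_i^K(A_i) < 2^{N-1}$, which completes the pairwise equivalence; taking the same conclusion symmetrically for $j$ and then quantifying over all $i, j \in [n]$ yields the theorem.

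The step I expect to be trickiest is the case analysis in the contrapositive, because one must carefully couple the size of $L_i$ with the multiplicity of subsets attaining the maximin value $M_i$; the complement involution is what ultimately lets the two cases collapse into the same clean conclusion $r_i^K(A_i) < 2^{N-1}$, so noticing and exploiting it is the decisive move. The assumption that every good has strictly positive value plays no direct role here (the complement-pairing argument is purely combinatorial), but it is used implicitly earlier in the paper to justify focusing on allocations where no agent is trivially satisfied.
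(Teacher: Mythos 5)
Your proof is correct, and its combinatorial engine is the same one the paper relies on: pairing each $S\subseteq K$ with its complement and counting how many subsets can lie weakly below a given value. The organization, however, is genuinely different. The paper never names the pairwise maximin value explicitly; it factors the argument through three claims about the rank function (monotonicity of rank in value, the level-counting claim, and the key equivalence $r_i^{K}(A_i) > 2^{\abs{K}-1} \Leftrightarrow v_i(A_i) \geq v_i(K)/2$) and then proves both directions of the theorem by contradiction, extracting in each case a complementary pair whose existence is incompatible with additivity or with PMMS. You instead isolate the exact threshold $M_i=\max\{v_i(S): v_i(S)\le v_i(K)/2\}$, observe that the PMMS constraint for the ordered pair $(i,j)$ is precisely $v_i(A_i)\ge M_i$, and prove the clean biconditional $v_i(A_i)\ge M_i \Leftrightarrow r_i^K(A_i)\ge 2^{\abs{K}-1}$ directly, with the tie case $M_i=v_i(K)/2$ handled by the parity argument that value-$M_i$ sets come in complementary pairs. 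This buys a sharper, self-contained statement (the paper's Claim on rank versus half-value is slightly coarser, using a strict rank inequality) at the cost of a more delicate case analysis; the paper's route defers the counting to reusable claims that it also needs elsewhere (e.g., in the analysis of rank-leximin). One small slip: your assertion $r_i^K(A_i)=\abs{L_i}-u$ is not an equality in general, since there may be subsets with value strictly between $v_i(A_i)$ and $M_i$; the correct relation is $r_i^K(A_i)\le \abs{L_i}-u$, which is immediate (all $u$ sets of value $M_i$ lie in $L_i$ and exceed $v_i(A_i)$) and is all that your two cases actually use, so the conclusion $r_i^K(A_i)<2^{\abs{K}-1}$ stands.
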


Before we present a formal argument to prove this theorem, we present a brief overview. Overall, the proof uses some observations about the ranking function. In particular, for a set $Q$ and $S \subseteq Q$, one key observation is that the rank of $S$ is high enough (more precisely, greater than $2^{|Q|- 1}$) if and only if the value of this set is at least half that of $Q$. Once we have this, then the proof essentially follows by combining it with a few other observations about the ranking function and some simple counting arguments. 

\if\RankClaimsinApp0
More formally, we first state the following claims about the ranking function. The proofs of these directly follow from the way the ranking function is defined.
\else
More formally, we start by making the following claims about the ranking function. The proofs of the first two claims appear in \Cref{app:sec:neccsuff:proofs}.
\fi


\begin{restatable}{claim}{clmR} \label{clm:r}
	Let $Q \subseteq \mathcal{G}$ and $i \in [n]$ be some agent. Then, 
	\begin{enumerate}[label=\roman*),ref=\roman*]
		\item for $A \subseteq Q, B \subseteq Q$, $v_i(A) < v_i(B) \Leftrightarrow r_i^Q(A) < r_i^Q(B)$ \label{clm:r1}
		\item $r_i^Q(A) < r_i^Q(B) \Leftrightarrow r_i^{\mathcal{G}}(A) < r_i^{\mathcal{G}}(B)$.  \label{clm:r2}
	\end{enumerate} 
\end{restatable}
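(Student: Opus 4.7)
The plan is to prove both parts directly from the definition $r_\ell^Q(S) = \bigl|\{P \subseteq Q : v_\ell(P) \leq v_\ell(S)\}\bigr|$, without needing any auxiliary machinery. Part (\ref{clm:r1}) is the substantive one, and part (\ref{clm:r2}) will follow by applying (\ref{clm:r1}) twice.

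For part (\ref{clm:r1}), I would first prove the forward direction. Assume $v_i(A) < v_i(B)$ and let $\mathcal{S}_A = \{P \subseteq Q : v_i(P) \leq v_i(A)\}$ and $\mathcal{S}_B = \{P \subseteq Q : v_i(P) \leq v_i(B)\}$. Any $P \in \mathcal{S}_A$ satisfies $v_i(P) \leq v_i(A) < v_i(B)$, so $P \in \mathcal{S}_B$; hence $\mathcal{S}_A \subseteq \mathcal{S}_B$. Moreover $B \in \mathcal{S}_B \setminus \mathcal{S}_A$ (since $v_i(B) \leq v_i(B)$ but $v_i(B) > v_i(A)$), which gives strict inclusion and hence $r_i^Q(A) = |\mathcal{S}_A| < |\mathcal{S}_B| = r_i^Q(B)$. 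For the converse, I would argue by contrapositive: if $v_i(A) \geq v_i(B)$ then either $v_i(A) = v_i(B)$, in which case $\mathcal{S}_A = \mathcal{S}_B$ and the ranks are equal, or $v_i(A) > v_i(B)$, in which case the forward direction (with $A$ and $B$ swapped) yields $r_i^Q(B) < r_i^Q(A)$; either way $r_i^Q(A) < r_i^Q(B)$ fails.

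For part (\ref{clm:r2}), since $A, B \subseteq Q \subseteq \mathcal{G}$, I can apply part (\ref{clm:r1}) with $Q$ itself to get $r_i^Q(A) < r_i^Q(B) \Leftrightarrow v_i(A) < v_i(B)$, and apply it again with $\mathcal{G}$ in place of $Q$ to get $r_i^{\mathcal{G}}(A) < r_i^{\mathcal{G}}(B) \Leftrightarrow v_i(A) < v_i(B)$. Chaining these two equivalences gives the claim.

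There is no real obstacle here; the only care needed is in handling the case $v_i(A) = v_i(B)$ in the converse of part (\ref{clm:r1}), since the definition of $r_i^Q$ uses a weak inequality $v_i(P) \leq v_i(S)$ and hence equal values produce equal ranks rather than strictly different ones. Handling this by contrapositive (or by noting that equal values give the identical sublevel set) sidesteps any confusion. The whole argument is a short direct check and should take only a few lines.
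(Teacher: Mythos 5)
Your proposal is correct and follows essentially the same route as the paper: both arguments compare the sublevel sets $\{P \subseteq Q : v_i(P) \leq v_i(\cdot)\}$ to relate value comparisons to rank comparisons for part (i), and both obtain part (ii) by applying part (i) twice (once with $Q$ and once with $\mathcal{G}$) and chaining the equivalences. Your write-up merely spells out more explicitly (via the strict-inclusion witness and the contrapositive handling of the equal-value case) a step the paper states without detail.
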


\if\RankClaimsinApp0
\begin{proof} To prove the first part, consider the sets $H_A = \{P \mid P \subseteq Q, v_i(P) \leq v_i(A) \}$ and $H_B = \{P \mid P \subseteq Q, v_i(P) \leq v_i(B) \}$. First, observe that $v_i(A) < v_i(B)$ if and only if $|H_A| < |H_B|$. Also, from the definition of the ranking function we know that $|H_A| < |H_B|$ if and only if $r_i^Q(A) < r_i^Q(B)$. Combining these we have our claim. 
	
	To prove the second part, observe that from the first part we know that $r_i^Q(A) < r_i^Q(B)$ if and only if $v_i(A) < v_i(B)$. Next, again using the first part with $Q = \mathcal{G}$, we have that $v_i(A) < v_i(B)$ if and only if $r_i^{\mathcal{G}}(A) < r_i^{\mathcal{G}}(B)$. Combining these we have our claim.
\end{proof}
\fi


\begin{restatable}{claim}{clmLevels} \label{clm:levels}
	Let $Q \subseteq \mathcal{G}$, $i \in [n]$ be some agent, $\ell \in \mathbb{Z}_{\geq 0}$, and $T_{\ell} = \{P \mid P \subseteq Q, r^Q_i(P) \leq \ell\}$. Then,
	\begin{enumerate}[label=\roman*),ref=\roman*]
		\item for $S \subseteq Q$, if $r_i^Q(S) = \ell$, then $|T_{\ell}| = \ell$ \label{clm:levels1}
		\item $|T_{\ell}| \leq \ell$. \label{clm:levels2}
	\end{enumerate} 
\end{restatable}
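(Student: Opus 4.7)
The plan is to reduce both parts to a direct unpacking of the rank definition, using Claim~\ref{clm:r}(\ref{clm:r1}) to convert rank comparisons into value comparisons. The first observation I will make is a small strengthening of Claim~\ref{clm:r}(\ref{clm:r1}): since $v_i(A) < v_i(B) \Leftrightarrow r_i^Q(A) < r_i^Q(B)$ holds both ways, by swapping roles we get $v_i(A) > v_i(B) \Leftrightarrow r_i^Q(A) > r_i^Q(B)$, and combining these yields $v_i(A) \leq v_i(B) \Leftrightarrow r_i^Q(A) \leq r_i^Q(B)$.

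For part (\ref{clm:levels1}), suppose $r_i^Q(S) = \ell$. Then by the strengthened equivalence above,
\begin{equation*}
T_\ell \;=\; \{P \subseteq Q : r_i^Q(P) \leq \ell\} \;=\; \{P \subseteq Q : r_i^Q(P) \leq r_i^Q(S)\} \;=\; \{P \subseteq Q : v_i(P) \leq v_i(S)\},
\end{equation*}
and the cardinality of the last set is exactly $r_i^Q(S) = \ell$ by the definition of the ranking function in (\ref{def:rank}).

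For part (\ref{clm:levels2}), I would split on whether $T_\ell$ is empty. If $T_\ell = \emptyset$, then trivially $|T_\ell| = 0 \leq \ell$. Otherwise, let $\ell^* = \max\{r_i^Q(P) : P \in T_\ell\}$, which is well-defined and satisfies $\ell^* \leq \ell$. Pick any $S \in T_\ell$ achieving $r_i^Q(S) = \ell^*$. By maximality of $\ell^*$ there is no $P \subseteq Q$ with $\ell^* < r_i^Q(P) \leq \ell$, so $T_\ell = T_{\ell^*}$. Applying part (\ref{clm:levels1}) to $S$ with rank $\ell^*$ gives $|T_{\ell^*}| = \ell^*$, and hence $|T_\ell| = \ell^* \leq \ell$.

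I do not anticipate a real obstacle here; the only subtle point is that in part (\ref{clm:levels2}) the value $\ell$ need not be attained as a rank of any subset of $Q$ (for instance, two distinct subsets can have equal value and thus the same rank, leaving ``gaps'' in the range of $r_i^Q$), which is precisely what forces the reduction to the largest attained rank $\ell^* \leq \ell$ before invoking part (\ref{clm:levels1}).
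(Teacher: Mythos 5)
Your proof is correct and follows essentially the same route as the paper: part (i) by identifying $T_\ell$ with the set $\{P \subseteq Q : v_i(P) \leq v_i(S)\}$ from the rank definition via Claim~\ref{clm:r}(\ref{clm:r1}), and part (ii) by passing to the largest attained rank $\ell^* \leq \ell$ and invoking part (i). Your explicit handling of the case $T_\ell = \emptyset$ is a minor extra bit of care the paper glosses over, but the argument is otherwise the same.
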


\if\RankClaimsinApp0
\begin{proof}
	To prove the first part, consider the set $H = \{P \mid P \subseteq Q, v_i(P) \leq v_i(S) \}$. First, note that from \Cref{clm:r}(\ref{clm:r1}) we can see that $H = T_\ell$. Next, from the definition of $r_i^Q(S)$, we know that $|H| = \ell$. Also, every element in $H$ will have a rank at most $\ell$ (since for each such set $S'$, $v_i(S') \leq v_i(S)$) and every element outside of $H$ will have a rank larger than $\ell$ (since for each such set $S'$, $v_i(S') > v_i(S)$). Hence, i) follows. 
	
	To prove the second part, consider the largest $\ell' \leq \ell$ such that there exists some $S \subseteq Q$ with $r_i^Q(S) = \ell'$. Now, from i) we know that the number of 
	subsets of $Q$ with rank at most $\ell'$ is exactly $\ell'$ and hence from our choice of $\ell'$ the statement follows. 
\end{proof}
\fi


\begin{restatable}{claim}{clmRankVal} \label{clm:rankVal}
	Let $A = (A_1, \ldots, A_n)$ be an allocation, and $i, j \in [n]$ be some agents. If $K_{ij} = A_i \cup A_j$, then 
	\begin{equation*}
		r^{K_{ij}}_i(A_i) > 2^{|K_{ij}|-1} \Leftrightarrow v_i(A_i) \geq \frac{v_i(K_{ij})}{2}.
	\end{equation*}
\end{restatable}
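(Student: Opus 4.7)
The plan is to prove the equivalence by exploiting the complementation involution $\phi: S \mapsto K_{ij} \setminus S$ on $2^{K_{ij}}$. By additivity, $v_i(S) + v_i(K_{ij} \setminus S) = v_i(K_{ij})$ for every $S \subseteq K_{ij}$. Write $L = \{S \subseteq K_{ij} : v_i(S) \leq v_i(A_i)\}$ and $H = \{S \subseteq K_{ij} : v_i(S) > v_i(A_i)\}$, so that $|L| = r^{K_{ij}}_i(A_i)$, $|H| = 2^{|K_{ij}|} - r^{K_{ij}}_i(A_i)$, and $L \cup H = 2^{K_{ij}}$ is a disjoint union. The whole proof then reduces to comparing $|L|$ with $|H|$ via the involution $\phi$.

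For the direction ($\Leftarrow$), assume $v_i(A_i) \geq v_i(K_{ij})/2$. I would first show that $\phi$ restricts to an injection $H \to L$: for any $S \in H$ we have $v_i(S) > v_i(A_i) \geq v_i(K_{ij})/2$, hence $v_i(\phi(S)) = v_i(K_{ij}) - v_i(S) < v_i(K_{ij})/2 \leq v_i(A_i)$, so $\phi(S) \in L$. This already gives $|H| \leq |L|$. To upgrade to the strict inequality $|L| > |H|$ needed for $r^{K_{ij}}_i(A_i) > 2^{|K_{ij}|-1}$, I would exhibit an element of $L$ not in the image of $\phi\!\mid_H$, namely $A_i$ itself: $A_i \in L$ trivially, and $\phi^{-1}(A_i) = K_{ij}\setminus A_i$ has value $v_i(K_{ij}) - v_i(A_i) \leq v_i(A_i)$, so $K_{ij} \setminus A_i \in L$, meaning $A_i$ is not in $\phi(H)$. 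Hence $|L| \geq |H| + 1$, so $2\,r^{K_{ij}}_i(A_i) \geq 2^{|K_{ij}|} + 1$, which gives $r^{K_{ij}}_i(A_i) > 2^{|K_{ij}|-1}$.

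For the direction ($\Rightarrow$), I would argue the contrapositive: assume $v_i(A_i) < v_i(K_{ij})/2$. Now $\phi$ restricts to an injection $L \to H$, because for $S \in L$ we have $v_i(S) \leq v_i(A_i) < v_i(K_{ij})/2$, hence $v_i(\phi(S)) > v_i(K_{ij})/2 > v_i(A_i)$, so $\phi(S) \in H$. Thus $|L| \leq |H|$, i.e.\ $r^{K_{ij}}_i(A_i) \leq 2^{|K_{ij}|-1}$, contradicting the hypothesis.

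The only delicate point, and the one I expect to need the most care, is getting the strict inequality in the ($\Leftarrow$) direction right — the pairing argument naturally only delivers $|L| \geq |H|$, and one has to identify $A_i$ (rather than, say, $\emptyset$, which may lie in $\phi(H)$ when $v_i(K_{ij}) > v_i(A_i)$) as a canonical element of $L \setminus \phi(H)$. Everything else is a clean counting argument on the $2^{|K_{ij}|}$ subsets of $K_{ij}$.
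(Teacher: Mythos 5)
Your proof is correct, and it is essentially the paper's argument in a slightly streamlined form: both proofs compare the number of subsets of $K_{ij}$ valued at most $v_i(A_i)$ with those valued above it by pairing each set with its complement and using additivity, and both obtain the needed strict inequality in the ($\Leftarrow$) direction by observing that $A_i$ itself (whose complement also lies in the low-value class) is left unpaired. The only cosmetic difference is that the paper routes the counting through its auxiliary rank claims and argues both directions by contradiction, while you work directly with the level sets $L$, $H$ and the involution $S \mapsto K_{ij}\setminus S$.
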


\begin{proof} \proofcase{$(\Rightarrow)$} Let us assume for the sake of contradiction that $\ell = r^{K_{ij}}_i(A_i) > 2^{|K_{ij}|-1}$, but $v_i(A_i) < \frac{v_i(K_{ij})}{2}$. Since $\ell > 2^{|K_{ij}|-1}$, we know from \Cref{clm:levels}(\ref{clm:levels1}) that there exists $S$ and $S^c = K_{ij} \setminus S$, such that $r^{K_{ij}}_i(S) \leq r^{K_{ij}}_i(A_i)$ and $r^{K_{ij}}_i(S^c) \leq r^{K_{ij}}_i(A_i)$. This in turn implies that using \Cref{clm:r}(\ref{clm:r1}) we have that $v_i(S) \leq v_i(A_i) < \frac{v_i(K_{ij})}{2}$ and $v_i(S^c) \leq v_i(A_i) < \frac{v_i(K_{ij})}{2}$, which is impossible since the valuation functions are additive. 
	
	\proofcase{$(\Leftarrow)$} Let $A$ be an allocation such that $v_i(A_i) \geq \frac{v_i(K_{ij})}{2}$, but $r^{K_{ij}}_i(A_i) \leq 2^{|K_{ij}|-1}$. Now, consider the set $H_1 = 
	\{P \mid P \subseteq K_{ij}, r^{K_{ij}}_i(P) > r^{K_{ij}}_i(A_i) \}$. From \Cref{clm:levels}(\ref{clm:levels1}) and using the fact that $r^{K_{ij}}_i(A_i) \leq 
	2^{|K_{ij}|-1}$ 
	we know that $|H_1| \geq 2^{\abs{K_{ij}}-1}$. Also, note that every set in $H_1$ has, by \Cref{clm:r}(\ref{clm:r1}), value greater than $v_i(A_i)$, which in turn implies that $A_j \notin H_1$. 
	Next, consider $H_2 = \{P^c \mid P \in H_1, P^c = K_{ij}\setminus P\}$, where additivity implies that every set in $H_2$ has value less than $v_i(A_1)$. Note that $\abs{H_1} + 
	\abs{H_2} \geq 2^{\abs{K_{ij}}}$ and $A_i$ is neither in $H_1$  nor $H_2$, which is impossible.
\end{proof}

Equipped with the claims above, we are now ready to prove our theorem.

\begin{proof}[Proof of \Cref{thm:pmmsiff}] \proofcase{$(\Rightarrow)$} Let us assume for the sake of contradiction that $A$ is a PMMS allocation and that there exists $i, j$ such that $\min\{r_i^{K_{ij}}(A_i), r_j^{K_{ij}}(A_j)\} < 2^{\abs{K_{ij}}-1}$. W.l.o.g.,  let us assume that $r_i^{K_{ij}}(A_i) < 2^{\abs{K_{ij}}-1}$. Next, consider the 
	set $H = \{B \mid B \subseteq K_{ij}, r^{K_{ij}}_i(B) > r^{K_{ij}}_i(A_i) \}$.  Since $r_i^{K_{ij}} (A_i) < 2^{\abs{K_{ij}}-1}$, we know 
	from \Cref{clm:levels}(\ref{clm:levels2}) that $|H| > 2^{\abs{K_{ij}}-1}$ (since there are $2^{\abs{K_{ij}}}$ subsets of $K_{ij}$). This 
	implies that there is a set $S$ and its complement $S^c = K_{ij} \setminus S$ such that,
	$r^{K_{ij}}_i(A_i) < \min\{r^{K_{ij}}_i(S),r^{K_{ij}}_i(S^c)\}$, which in turn using \Cref{clm:r}(\ref{clm:r1}) implies that 
	$v_i(A_i) < \min\{v_i(S), v_i(S^c)\}$. However, note that this contradicts the fact that $i$ has an MMS partition w.r.t.\ $j$ in $A$.  

	\proofcase{$(\Leftarrow)$} Let us assume for the sake of contradiction that there exists an agent $i$ such that $i$ does not have an MMS partition w.r.t.\ $j$, but 
	$\min\{r_i^{K_{ij}}(A_i), \allowbreak r_j^{K_{ij}}(A_j)\} \allowbreak \geq 2^{\abs{K_{ij}}-1}$. This implies that $v_i(A_i) < \frac{v_i(K_{ij})}{2}$, which in turn using \Cref{clm:rankVal} and the fact that $\min\{r_i^{K_{ij}}(A_i), \allowbreak r_j^{K_{ij}}(A_j)\} \allowbreak \geq 2^{\abs{K_{ij}}-1}$, implies that 
	$r_i^{K_{ij}}(A_i) = 2^{\abs{K_{ij}}-1}$. Next, since $i$ does not perceive $(A_i, A_j)$ to be an MMS partition w.r.t.\ $j$, there must exist a partition $(A_i', A_j')$ such that $A_i' \cup A_j' = K_{ij}$ and 
	$\min\{v_i(A'_i), v_i(A'_j)\} \allowbreak > v_i(A_i)$. This implies that, using \Cref{clm:r}(\ref{clm:r1}), we have that $r_i^{K_{ij}}(A_i') > r_i^{K_{ij}}(A_i)$ and $r_i^{K_{ij}}(A_j') > r_i^{K_{ij}}(A_i)$, or in other words that a set (i.e., $A_i'$) and its complement (i.e., $A_j' = K_{ij} \setminus A_i'$) both have rank greater than $r_i^{K_{ij}} (A_i)$. Now, if this is case, then one can see that, since $r_i^{K_{ij}}(A_i) = 2^{\abs{K_{ij}}-1}$, this implies there exists a set $S\subseteq K_{ij}$ and its complement $S^c = K_{ij} \setminus S$ such that $r_i^{K_{ij}}(S) \leq r_i^{K_{ij}}(A_i)$ and $r_i^{K_{ij}}(S^c) \leq r_i^{K_{ij}}(A_i)$. However, this is impossible because we can now use \Cref{clm:rankVal} to see that both $v_i(S)$ and $v_i(S^c)$ have value less than $\frac{v_i(K_{ij})}{2}$, which in turn contradicts the fact that $v_i$ is an additive valuation function. 
\end{proof}

In the next section we use this result to show an approximately-stable algorithm that is PMMS and PO when there are two agents. 

\subsection{rank-leximin: An approximately-stable PMMS and PO algorithm for two agents} \label{sec:rl}
The idea of our algorithm is simple. Instead of the well-known Leximin algorithm where one aims to maximize the minimum utility any agents gets, then the second minimum utility, and so on, our approach, which we refer to as the rank-leximin algorithm (\Cref{algo:rankLex}), is to do a leximin-like allocation, but based on the ranks of the bundles that the agents receive. Here for an agent $i$ and a bundle $B \subseteq \mathcal{G}$, by rank we mean $r_i^{\mathcal{G}}(B)$, as defined in (\ref{def:rank}). That is, the rank-leximin algorithm maximizes the minimum rank of the bundle that any agent gets, then it maximizes the second minimum rank, then the third minimum rank, and so on. Note that this in turn induces a comparison operator $\prec$ between two partitions and this is formally specified as rank-leximinCMP in \Cref{algo:rankLex}. Although the original leximin solution also returns a PMMS and PO allocation for two agents, recall that we observed in \Cref{sec:existalgo-stable} that it does not provide any guarantee even in terms of weak-approximate-stability, even when $\alpha = 4$. Rank-leximin on the other hand is PMMS and PO for two agents and, as we will show, is also $(2+\frac{12\alpha}{T})$-approximately-stable for all $\alpha \in (0, \frac{T}{3}]$. Additionally, it also returns an allocation that is PMMS and PO for any number of agents as long as they report ordinally equivalent valuation functions (see \Cref{def:equiv}).

\textbf{Remark:} Given the characterization result for PMMS allocations, one can come up with several algorithms that satisfy PMMS and PO. However, we consider rank-leximin here since it is a natural counterpart to the well-known leximin algorithm \cite{plaut18}. Additionally, it turns out that contrary to our initial belief the idea of rank-leximin is not new. \citet{rama13} considered it in the context of algorithms for two-agent fair division that satisfy Pareto-optimality, \textit{anonymity}, \textit{the unanimity bound}, and \textit{preference monotonicity} (see \cite[Sec.\ 3]{rama13} for definitions), where the \textit{unanimity bound} is a notion that one can show is equivalent to the notion of Maximin share (MMS) that is used in the computational fair division literature. So, with caveat that \citet{rama13} assumes that the agents have a strict preference orders over the subsets of $\mathcal{G}$ (which is not assumed here), the result of \citet[Thm.\ 2]{rama13} already proves that rank-leximin is PMMS and PO for two agents (MMS is equivalent to PMMS in the case of two agents), which is the result we show in \Cref{thm:main}. However, we still include our proof because of the indifference issue mentioned above, and since it almost follows directly from \Cref{thm:pmmsiff}.

\begin{algorithm}[tb]
	{\small \centering
		\noindent\fbox{%
			\begin{varwidth}{\dimexpr\linewidth-4\fboxsep-4\fboxrule\relax}
				\begin{algorithmic}[1]
					\small 
					\Procedurename rank-leximinCMP$(P, T)$
					\Input two partitions $P, T \in \Pi_n(\mathcal{G})$ 
					\Output returns true if $P \prec T$, i.e., if $P$ is before $T$ in the rank-leximin sorted order
					
					\State $R_P \leftarrow$ agents sorted in non-decreasing order of the rank of their bundles in $P$, i.e., based on $r^{\mathcal{G}}_i (P_i)$, with ties broken in 
					some arbitrary but consistent way throughout
					\State $R_T \leftarrow$ similar ordering as in $R_P$ above, but based on $r^{\mathcal{G}}_i (T_i)$
					
					\For{each $\ell \in [n]$}
					\State $i \leftarrow R^\ell_P$ \Comment{{\footnotesize$\ell$-th agent in $R_P$}}
					\State $j \leftarrow R^\ell_T$ \Comment{{\footnotesize$\ell$-th agent in $R_T$}}
					
					\If{$r^{\mathcal{G}}_i (P_i) \neq r^{\mathcal{G}}_j (T_j)$}
					\State \textbf{return} $r^{\mathcal{G}}_i (P_i) < r^{\mathcal{G}}_j (T_j)$
					\EndIf
					\EndFor
					\State \textbf{return} false
					\Statex  
					\Main
					\Input for each agent $i \in [n]$, their valuation function $v_i: 2^{\mathcal{G}} \to \mathbb{R}_{\geq 0}$  
					\Output an allocation $A = (A_1, \ldots, A_n)$ that is PMMS and PO  
					
					
					\State $\mathcal{L} \leftarrow$ perform a rank-Lexmin sort on $\Pi_n(\mathcal{G})$ based on the rank-leximinCMP operator defined above
					
					\State \textbf{return} $A = (A_1, \ldots, A_n)$ that is the last element in $\mathcal{L}$
				\end{algorithmic}
		\end{varwidth}}
		\caption{rank-leximin algorithm}
		\label{algo:rankLex}
	}
\end{algorithm}

Below we first show that the rank-leximin algorithm always returns an allocation that is PMMS and PO for the case of two agents. The fact that it is PO can be seen by  using some of the properties that we proved about the ranking function in the previous section, while the other property follows from combining \Cref{thm:pmmsiff} along with a simple pigeonhole argument. Following this, we also show that rank-leximin always returns a PMMS and PO allocation when there are $n\geq 2$ agents with ordinally equivalent valuation functions. The proof of this, which appears in \Cref{app:sec:rl:proofs}, is slightly more involved and it proceeds by first showing how rank-leximin returns such an allocation when all the agents have identical valuation functions. Once we have this, then the theorem follows by repeated application of \Cref{obs2}.

\begin{theorem}\label{thm:main}
	Given an instance with $m$ indivisible goods and two agents with additive valuation functions, the rank-leximin algorithm (\Cref{algo:rankLex}) 
	returns an allocation that is PMMS and PO.
\end{theorem}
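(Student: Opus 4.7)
The plan is to verify PMMS and PO separately, leveraging Theorem~\ref{thm:pmmsiff} together with the claims about the rank function established in Section~\ref{sec:neccsuff}. Throughout, let $A = (A_1, A_2)$ denote the output of rank-leximin, and note that since the algorithm searches over $\Pi_2(\mathcal{G})$ we have $K_{12} = A_1 \cup A_2 = \mathcal{G}$.

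For PMMS, by Theorem~\ref{thm:pmmsiff} it suffices to show $\min\{r_1^{\mathcal{G}}(A_1), r_2^{\mathcal{G}}(A_2)\} \geq 2^{m-1}$. I would argue by contradiction: if, WLOG, $r_1^{\mathcal{G}}(A_1) < 2^{m-1}$, then by Claim~\ref{clm:levels}(ii) at most $r_1^{\mathcal{G}}(A_1) < 2^{m-1}$ subsets of $\mathcal{G}$ have rank at most $r_1^{\mathcal{G}}(A_1)$ for agent $1$, so strictly more than $2^{m-1}$ subsets have strictly larger rank. Pairing each subset with its complement then forces at least one pair $(S, S^c)$ in which \emph{both} $S$ and $S^c$ have rank $> r_1^{\mathcal{G}}(A_1)$ for agent $1$. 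The next step is to exhibit a high-rank bundle from this same pair for agent $2$: since $v_2(S) + v_2(S^c) = v_2(\mathcal{G})$, at least one of them (WLOG $S^c$) satisfies $v_2(S^c) \geq v_2(\mathcal{G})/2$, and Claim~\ref{clm:rankVal} applied to the allocation $(S, S^c)$ then yields $r_2^{\mathcal{G}}(S^c) > 2^{m-1}$. Consequently $A' = (S, S^c)$ satisfies $\min\{r_1^{\mathcal{G}}(A_1'), r_2^{\mathcal{G}}(A_2')\} > r_1^{\mathcal{G}}(A_1) \geq \min\{r_1^{\mathcal{G}}(A_1), r_2^{\mathcal{G}}(A_2)\}$, so $A \prec A'$ under rank-leximinCMP, contradicting the optimality of $A$.

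For PO, suppose for contradiction that some $A' = (A_1', A_2')$ Pareto dominates $A$, i.e., $v_i(A_i') \geq v_i(A_i)$ for both $i$ with strict inequality for at least one. Claim~\ref{clm:r}(i) with $Q = \mathcal{G}$ translates this into componentwise weak rank dominance $r_i^{\mathcal{G}}(A_i') \geq r_i^{\mathcal{G}}(A_i)$ with strict inequality for the same coordinate where the value was strict. A short case analysis then shows the sorted rank vector of $A'$ strictly lex-dominates that of $A$ (if the two sorted vectors had coincided at every coordinate, componentwise weak dominance would force the rank pairs themselves to coincide, contradicting the strict inequality), giving $A \prec A'$ and again contradicting rank-leximin optimality.

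The main conceptual obstacle is the PMMS half: one needs to recognize that Claim~\ref{clm:rankVal} is exactly the tool that lets the pigeonhole-produced pair $(S, S^c)$ simultaneously serve both agents, so that a genuine rank-leximin improvement over $A$ can be constructed. The PO half then follows essentially mechanically from the rank--value monotonicity in Claim~\ref{clm:r}(i), since rank-leximin is designed to optimize precisely the quantity that weakly tracks per-agent values.
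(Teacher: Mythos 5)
Your proposal is correct and follows essentially the same route as the paper: PO via the rank--value monotonicity of Claim~\ref{clm:r} combined with rank-leximin maximality, and PMMS via Theorem~\ref{thm:pmmsiff} plus a pigeonhole argument over complementary pairs showing the returned allocation attains minimum rank at least $2^{m-1}$. The only cosmetic difference is that you argue the PMMS half by contradiction and certify agent~2's bundle through Claim~\ref{clm:rankVal}, whereas the paper directly counts agent~2's low-rank sets; it is the same counting idea.
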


\begin{proof}
	Let $A= (A_1,\ldots, A_n)$ be the allocation that is returned by the rank-leximin algorithm (\Cref{algo:rankLex}). Below we will first show that $A$ is PO and subsequently argue why it is PMMS.
	
	Suppose $A$ is not Pareto optimal. Then there exists another allocation $A'$ such that for all $i \in [n]$, $v_i(A'_i) \geq v_i(A_i)$, and the inequality is strict for at least one of the agents, say $j$. This in turn implies that using \Cref{clm:r} we have that for all $i \in [n]$, $r_i^{\mathcal{G}}(A'_i) \geq r_i^{\mathcal{G}} (A_i)$ and $r_j^{\mathcal{G}}(A'_j) > r_j^{\mathcal{G}} 
	(A_j)$. However, this implies that from the procedure rank-leximinCMP in \Cref{algo:rankLex} we have that $A \prec A'$, and this in turn directly contradicts the fact that $A$ was the allocation that was returned.
	
	To show that $A$ is PMMS, consider agent $1$ and all the sets $S$ such that $r_1^{\mathcal{G}}(S) \geq 2^{m-1}$. From \Cref{clm:levels} we know that there 
	are at least $2^{m-1} + 1$ such sets. Therefore, if $S^c = \mathcal{G} \setminus S$, then there is at least one $S$ such that $r_2^{\mathcal{G}}(S^c) \geq 2^{m-1}$. This implies $(S, S^c)$ is an allocation such that $\min\{r_1^{\mathcal{G}}(S), r_2^{\mathcal{G}}(S^c)\} \geq 2^{m-1}$. Now since rank-leximin maximizes the minimum rank that any agent receives, we have that $\min\{r_1^{\mathcal{G}}(A_1), r_2^{\mathcal{G}}(A_2)\} \geq 2^{m-1}$, and so now we can  use \Cref{thm:pmmsiff} to see that $A$ is a PMMS allocation.   
\end{proof}

\begin{restatable}{theorem}{thmRlnAgents} \label{thm:rl-nagents}
	Given an instance with $m$ indivisible goods and $n\geq2$ agents with additive valuation functions, the rank-leximin algorithm (\Cref{algo:rankLex}) returns an allocation $A$ that is PMMS and PO if all the agents report ordinally equivalent valuation functions.	
\end{restatable}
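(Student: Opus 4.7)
My approach is to reduce the $n$-agent ordinally equivalent case to the case of $n$ \emph{identical} agents, and then to prove the theorem in that simpler case. Two observations make the reduction essentially free. First, the rank-leximin algorithm depends on the valuations only through the rank functions $r_i^{\mathcal{G}}$, and by construction $r_i^{\mathcal{G}}(\cdot)$ is determined solely by the weak order that $v_i$ induces on $2^{\mathcal{G}}$ (cf.\ Definition~\ref{def:equiv}). Hence when $v_1,\ldots,v_n$ are pairwise ordinally equivalent, all $r_i^{\mathcal{G}}$ coincide as functions, and rank-leximin returns exactly the same allocation on $(v_1,\ldots,v_n)$ as it would on the identical profile $(v_1,v_1,\ldots,v_1)$ (assuming the ``arbitrary but consistent'' tie-breaking of Algorithm~\ref{algo:rankLex} is fixed once and for all). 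Second, repeated application of Observation~\ref{obs2} (switching $v_1$ back to $v_k$ one agent at a time) shows that a PMMS and PO allocation for $(v_1,\ldots,v_1)$ is still PMMS and PO for $(v_1,\ldots,v_n)$. Thus it suffices to prove the claim when all agents share a single valuation $v$.

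\textbf{Identical-valuation case.} Let $v_1=\cdots=v_n=v$ and write $r(\cdot) = r_i^{\mathcal{G}}(\cdot)$, which no longer depends on $i$. Pareto optimality of the rank-leximin output $A$ follows by the very same argument used in Theorem~\ref{thm:main}: a Pareto improvement $A'$ would give $r(A'_i)\ge r(A_i)$ for all $i$ with strict inequality for some agent (using Claim~\ref{clm:r}(\ref{clm:r1}) and its contrapositive), and so $A\prec A'$ under the rank-leximin comparator, contradicting optimality. For PMMS I would argue by contradiction: if $A$ is not PMMS, there exist agents $i,j$ and a partition $(B_i,B_j)$ of $K_{ij}=A_i\cup A_j$ with
\[
v(A_i)\;<\;\min\{v(B_i),v(B_j)\}.
\]
Since $v(B_i)+v(B_j)=v(A_i)+v(A_j)$ by additivity, this forces $\max\{v(B_i),v(B_j)\}<v(A_j)$ as well. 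Applying Claim~\ref{clm:r}(\ref{clm:r1}) with $Q=\mathcal{G}$ lifts these value inequalities to rank inequalities, yielding the ``sandwich''
\[
r(A_i)\;<\;\min\{r(B_i),r(B_j)\}\;\le\;\max\{r(B_i),r(B_j)\}\;<\;r(A_j).
\]
Let $A'$ agree with $A$ everywhere except that agent $i$ receives $B_i$ and agent $j$ receives $B_j$.

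\textbf{Key lexicographic step.} It remains to show $A\prec A'$ under the rank-leximin comparator, contradicting optimality of $A$. Because valuations are identical, that comparator reduces to lexicographic comparison of the non-decreasingly sorted multisets $M=\{r(A_\ell)\}_{\ell\in[n]}$ and $M'=(M\setminus\{r(A_i),r(A_j)\})\cup\{r(B_i),r(B_j)\}$. The crucial elementary fact is a ``sandwich lemma'': if $M'$ is obtained from a multiset $M$ of reals by deleting two values $a<b$ and inserting two values $c,d$ with $a<\min\{c,d\}\le\max\{c,d\}<b$, then $\mathrm{sort}(M')$ is strictly lex-greater than $\mathrm{sort}(M)$. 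I would prove it by letting $p$ be the first position at which $a$ appears in $\mathrm{sort}(M)$ and $k$ its multiplicity in $M$: positions $1,\ldots,p-1$ of the two sorted sequences coincide (they list all elements $<a$, which are untouched), positions $p,\ldots,p+k-2$ each equal $a$ in both sequences (since $M'$ still contains $k-1$ copies of $a$), and at position $p+k-1$, $\mathrm{sort}(M)$ has its last $a$ while $\mathrm{sort}(M')$ has some value strictly greater than $a$ (every remaining element is either an original element $>a$ or one of $c,d$, both $>a$).

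\textbf{Main obstacle.} The subtlest part is the sandwich lemma in the presence of ties: a naive argument might try to place the disagreement at position $p$, but with $k>1$ the sequences still agree on positions $p,\ldots,p+k-2$ before diverging at $p+k-1$. Carefully tracking the multiplicity $k$, as sketched, resolves this. Once the lemma is in place the contradiction $A\prec A'$ closes the identical-valuation case, and the reduction from the first paragraph then upgrades the conclusion to the ordinally equivalent setting, completing the proof.
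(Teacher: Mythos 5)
Your proposal is correct, and its outer skeleton is the same as the paper's: reduce to identical valuations by noting that Algorithm~\ref{algo:rankLex} depends on the input only through the rank functions $r_i^{\mathcal{G}}$ (which coincide for ordinally equivalent reports), recycle the PO argument from Theorem~\ref{thm:main}, and then lift the conclusion back with repeated use of Observation~\ref{obs2}. Where you genuinely diverge is in the PMMS step for identical agents. The paper first invokes the characterization of Theorem~\ref{thm:pmmsiff}, so that ``not PMMS'' becomes the rank condition $r_i^{K_{ij}}(A_i)<2^{\abs{K_{ij}}-1}$, then uses the counting Claim~\ref{clm:levels} to find a complementary pair $(S,S^c)$ in $K_{ij}$ whose ranks exceed $r_i^{K_{ij}}(A_i)$, has agent $i$ take the more valuable side (via Claim~\ref{clm:rankVal}), and finally lifts to ranks over $\mathcal{G}$ with Claim~\ref{clm:r}. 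You instead take the violating cut $(B_i,B_j)$ directly from the definition of PMMS, use additivity to get the two-sided value sandwich $v(A_i)<\min\{v(B_i),v(B_j)\}\le\max\{v(B_i),v(B_j)\}<v(A_j)$, and need only Claim~\ref{clm:r} to turn it into a rank sandwich; this makes the argument more elementary and independent of Theorem~\ref{thm:pmmsiff} and Claims~\ref{clm:levels} and~\ref{clm:rankVal}. You also spell out the lexicographic step ($A\prec A'$ after replacing the two bundles), with the multiplicity bookkeeping for ties, which the paper simply asserts; note that for this step only the lower half of your sandwich is actually used (the paper likewise only establishes that the minimum of the new pair of ranks strictly exceeds the minimum of the old pair, and that already forces the sorted rank sequence of $A'$ to beat that of $A$ at the first point of difference). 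In short: the paper's route reuses machinery it has already built, while yours is shorter, more self-contained, and fills in the one comparison step the paper leaves implicit.
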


Now that we know rank-leximin produces a PMMS and PO allocation, we will move on to see how approximately-stable it is in the next section. However, before that we make the following remark. 


\textbf{Remark:} The rank-leximin algorithm takes exponential time. Note that this is not surprising since finding PMMS allocations is NP-hard even for two identical agents---one can see this by a straightforward reduction from the well-known Partition problem. 

\subsubsection{rank-leximin is \texorpdfstring{$\left(2+ \mathcal{O}(\frac{\alpha}{T})\right)$}{}-approximately-stable for \texorpdfstring{$\alpha \leq \frac{T}{3}$}{}} \label{sec:rl-stability}

\if\rlStabilityClaimsinApp0
Before we show how approximately-stable rank-leximin is, we prove the following claims which will be useful in order to prove our result. 

\else
Before we show how approximately-stable rank-leximin is, we state the following claims which will be useful in order to prove our result. The proofs of these claims appear in \Cref{app:sec:rl-st:proofs}.
\fi

\begin{restatable}{claim}{clmDiffvv} \label{clm:diffvv}
	Let $v$ be an additive utility function and for some $\alpha > 0$, let $v' \in \alpha$-$N(v)$. Then, for any $S_1, S_2 \subseteq{G}$,
	\begin{enumerate}[label=\roman*),ref=\roman*]
		\item $\abs{v(S_1) - v'(S_1)} \leq \frac{\alpha}{2}$ \label{clm:diffvvi}
		\item if $v(S_1) > v(S_2)$ and $v'(S_1) \leq v'(S_2)$, then $v(S_1) - v(S_2) \leq \alpha$. \label{clm:diffvvii}
	\end{enumerate}
\end{restatable}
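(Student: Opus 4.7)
Both parts rest on exploiting the constraint $v(\mathcal{G}) = v'(\mathcal{G}) = T$ from Definition~\ref{def:alphaN}, together with the $L_1$-bound $\sum_{g \in \mathcal{G}} |v(g) - v'(g)| \leq \alpha$. Part (i) will be the main content; part (ii) will then be a short chase through the inequalities.

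\emph{For part (i).} The plan is to pair $S_1$ with its complement $S_1^c = \mathcal{G} \setminus S_1$ and use the fact that $v$ and $v'$ assign the same total mass $T$ to $\mathcal{G}$. By additivity,
\[
(v(S_1) - v'(S_1)) + (v(S_1^c) - v'(S_1^c)) \;=\; v(\mathcal{G}) - v'(\mathcal{G}) \;=\; 0,
\]
so $|v(S_1) - v'(S_1)| = |v(S_1^c) - v'(S_1^c)|$. Each of these two absolute differences is bounded by the corresponding piece of the $L_1$ distance (again by additivity and the triangle inequality), so adding them gives
\[
2\,|v(S_1) - v'(S_1)| \;\leq\; \sum_{g \in S_1} |v(g)-v'(g)| + \sum_{g \in S_1^c} |v(g)-v'(g)| \;=\; \|v - v'\|_1 \;\leq\; \alpha,
\]
which yields the desired bound of $\alpha/2$. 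The slightly subtle point here is noticing that one must use the complement trick to gain the factor of $2$; a direct application of the triangle inequality to $S_1$ alone only yields $\alpha$, which would be too weak to feed into part (ii) in the intended way.

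\emph{For part (ii).} With part (i) in hand this is a one-line chain. Applying (i) to each of $S_1$ and $S_2$ separately, we have $v'(S_1) \geq v(S_1) - \alpha/2$ and $v'(S_2) \leq v(S_2) + \alpha/2$. Combining these with the given hypothesis $v'(S_1) \leq v'(S_2)$ gives
\[
v(S_1) - \tfrac{\alpha}{2} \;\leq\; v'(S_1) \;\leq\; v'(S_2) \;\leq\; v(S_2) + \tfrac{\alpha}{2},
\]
and rearranging yields $v(S_1) - v(S_2) \leq \alpha$, as claimed.

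\emph{Expected obstacle.} There is essentially no obstacle beyond noticing the complement-pairing trick in part (i); without it, one only gets $|v(S_1) - v'(S_1)| \leq \alpha$, which would weaken the bound in (ii) to $2\alpha$ and presumably propagate worse constants into the stability analysis of rank-leximin. Everything else is bookkeeping with additivity and the triangle inequality.
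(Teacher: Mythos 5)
Your proof is correct and follows essentially the same route as the paper's: the complement-pairing argument (using $v(\mathcal{G})=v'(\mathcal{G})=T$ to force the deviations on $S_1$ and $S_1^c$ to cancel, then charging both against the $L_1$ budget) is exactly the paper's mechanism for part (i), and your chain of inequalities for part (ii) matches the paper's verbatim. The only difference is cosmetic: the paper phrases part (i) as a proof by contradiction, whereas you give the same estimate directly.
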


\if\rlStabilityClaimsinApp0
\begin{proof}
	To prove the first part, suppose $\abs{v(S_1) - v'(S_1)} > \frac{\alpha}{2}$. Let $S_1^c = \mathcal{G} \setminus S_1$. Since $v, v'$ are additive, we know that $v(S_1) + v(S^c_1) = T$ (and similarly for $v'$), and so this in turn implies that $\abs{v'(S^c_1) - v(S^c_1)} > \frac{\alpha}{2}$. So, using this, we have,
	\begin{align*}
		\sum_{g \in \mathcal{G}} \abs{v(g) - v'(g)} &= \sum_{g \in S_1} \abs{v(g) - v'(g)} + \sum_{g \in S_1^c} \abs{v'(g) - v(g)}\\
		&\geq |{\textstyle\sum_{g \in S_1} (v(g) - v'(g)) }| + |{\textstyle\sum_{g \in S_1^c} (v'(g) - v(g))}|\\
		& = \abs{v(S_1) - v'(S_1)} + \abs{v'(S_1^c) - v(S_1^c)}\\
		&> \alpha,
	\end{align*}
	which is a contradiction since $v' \in \alpha$-$N(v)$. 
	
	To prove the second part, observe that from the first part we have,
	\begin{align*}
		v(S_1) - v(S_2) &\leq v'(S_1) + \frac{\alpha}{2} - (v'(S_2) - \frac{\alpha}{2})\\
		&= v'(S_1) - v'(S_2) + \alpha\\
		&\leq \alpha,
	\end{align*}
	where the last inequality follows from the fact that $v'(S_1) \leq v'(S_2)$. 
\end{proof}
\fi

\begin{restatable}{claim}{clmValbound} \label{clm:valbound}
	Given an instance with $m$ indivisible goods and two agents with additive valuation functions, let $(A_1, A_2)$ be a PMMS allocation. If for an agent $i \in [2]$, $k=|A_{j}|\geq 2$, where $j \neq i$, then
	\begin{enumerate}[label=\roman*),ref=\roman*]
		\item $v_i(A_i) \geq v_i(M_i^{j})$, where $ M_i^{j}$ is the maximum valued good of agent $i$ in the bundle $A_{j}$ \label{clm:valboundi}
		\item $v_i(A_{j}) - v_i(A_i) \leq v_i(m_i^{j})$, where $m_i^{j}$ is the minimum valued good of agent $i$ in the bundle $A_{j}$ \label{clm:valboundii}
		\item $v_i(A_{i}) \geq \frac{k-1}{2k-1} T$. \label{clm:valboundiii}
	\end{enumerate}
\end{restatable}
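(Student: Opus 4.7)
The plan is to prove all three parts by exploiting the PMMS condition with carefully chosen partitions of $A_i \cup A_j$ and then, for (iii), doing an arithmetic manipulation combining (ii) with the obvious pigeonhole bound on $v_i(m_i^j)$.

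For part (i), I would consider the partition $B_1 = \{M_i^j\}$, $B_2 = (A_i \cup A_j) \setminus \{M_i^j\}$. Suppose for contradiction that $v_i(A_i) < v_i(M_i^j)$; then $v_i(B_1) > v_i(A_i)$. Because $k = |A_j| \geq 2$, the bundle $A_j$ contains at least one good other than $M_i^j$, and by the standing assumption that all goods have strictly positive value, $v_i(A_j) - v_i(M_i^j) > 0$, so $v_i(B_2) = v_i(A_i) + (v_i(A_j) - v_i(M_i^j)) > v_i(A_i)$. Thus $\min\{v_i(B_1), v_i(B_2)\} > v_i(A_i)$, violating the PMMS condition for agent $i$ against $j$. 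This is where the hypothesis $k \geq 2$ is essential.

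For part (ii), I would use the ``symmetric'' partition $B_1 = A_j \setminus \{m_i^j\}$, $B_2 = A_i \cup \{m_i^j\}$. Note $v_i(B_2) = v_i(A_i) + v_i(m_i^j) > v_i(A_i)$ since $v_i(m_i^j) > 0$. If the claim failed, i.e., $v_i(A_j) - v_i(A_i) > v_i(m_i^j)$, then $v_i(B_1) = v_i(A_j) - v_i(m_i^j) > v_i(A_i)$ as well, so $\min\{v_i(B_1), v_i(B_2)\} > v_i(A_i)$, again contradicting PMMS.

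For part (iii), the argument is purely arithmetic. Since $m_i^j$ is the minimum-valued good (to agent $i$) among the $k$ goods of $A_j$, we have $v_i(m_i^j) \leq v_i(A_j)/k$. Combining with (ii):
\begin{equation*}
v_i(A_j) - v_i(A_i) \;\leq\; v_i(m_i^j) \;\leq\; \frac{v_i(A_j)}{k},
\end{equation*}
which rearranges to $v_i(A_j)\,(k-1)/k \leq v_i(A_i)$, i.e., $v_i(A_j) \leq k\, v_i(A_i)/(k-1)$. Using $v_i(A_i) + v_i(A_j) = T$, I obtain $T \leq v_i(A_i)(1 + k/(k-1)) = v_i(A_i)(2k-1)/(k-1)$, giving the desired $v_i(A_i) \geq \frac{k-1}{2k-1} T$.

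There is no real obstacle here; the only subtlety is that each PMMS-based argument relies on $v_i(m_i^j) > 0$ (or $v_i(A_j) - v_i(M_i^j) > 0$), which requires both $k \geq 2$ and the paper's standing assumption that all valuations over singletons are strictly positive. This is precisely what makes the strict inequalities in $v_i(B_2) > v_i(A_i)$ (resp.\ $v_i(B_1) > v_i(A_i)$) yield the contradiction.
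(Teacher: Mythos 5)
Your proposal is correct and follows essentially the same route as the paper: parts (i) and (ii) contradict PMMS via the cut-and-choose partitions built from $M_i^j$ and $m_i^j$ (using $k\geq 2$ and strictly positive values to get the strict inequalities), and part (iii) is the same arithmetic combining (ii) with $v_i(m_i^j)\leq v_i(A_j)/k$ and $v_i(A_i)+v_i(A_j)=T$. The only cosmetic difference is in (i), where your partition $\bigl(\{M_i^j\},\,(A_i\cup A_j)\setminus\{M_i^j\}\bigr)$ is a genuine two-partition of $A_i\cup A_j$, whereas the paper writes $(A_i\cup\{g\},\{M_i^j\})$, which is the same idea stated slightly more loosely.
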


\if\rlStabilityClaimsinApp0
\begin{proof}
	To prove the first part, let us assume that $v_i(A_i) < v_i(M_i^{j})$. Since $k \geq 2$, we have another good $g \in A_j$ such that $g \neq M_i^{j}$. So, now, consider the allocation $(A_i \cup \{g\}, \{M_i^j\})$. Note that using additivity we have that $\min\{v_i(A_i \cup \{g\}), v_i(M_i^j)\} > v_i(A_i)$, which in turn contradicts the fact that $(A_i, A_j)$ is a PMMS allocation.
	
	To prove the second part, let us assume that $v_i(A_{j}) - v_i(A_i) > v_i(m_i^{j})$. Next, consider the allocation $(A_i \cup \{m_i^{j}\}, A_j\setminus\{m_i^j\})$. Note that using additivity we have that $\min\{v_i(A_i \cup \{m_i^{j}\}), v_i(A_j\setminus\{m_i^j\})\} > v_i(A_i)$, which in turn contradicts the fact that $(A_i, A_j)$ is a PMMS allocation.
	
	To prove the third part, observe that from the second part we know that $v_i(A_{i}) \geq v_i(A_j) - v_i(m_i^{j})$. Also, if $k=|A_j|$, then we can use the fact that the valuation functions are additive to see that $v_i(m_i^{j}) \leq \frac{v_i(A_j)}{k}$. So, using these, we have,
	\begin{align*} 
		v_i(A_{i}) &\geq v_i(A_j) - v_i(m_i^{j}) \geq v_i(A_j) - \frac{v_i(A_j)}{k} = (T-v_i(A_i))\left(\frac{k-1}{k}\right),
	\end{align*}
	where the last inequality follows from the fact that $v_i(A_i) + v_i(A_j) = T$. 
	
	Finally, rearranging the term above we have our claim. 
\end{proof}
\fi 

\begin{restatable}{claim}{clmMinrank} \label{clm:minrank}
	Given an instance with $m$ indivisible goods and two agents with additive valuation functions, let $(A_1, A_2)$ be the allocation that is returned by the rank-leximin algorithm for this instance. If $k_1 = 
	r^{\mathcal{G}}_1(A_1)$ 
	and $k_2 =r^{\mathcal{G}}_2(A_2)$, then $\min\{k_1, k_2\} \geq 2^{m-1}$ and $\max\{k_1, k_2\} > 2^{m-1}$.
\end{restatable}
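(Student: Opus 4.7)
\textbf{Proof plan for Claim~\ref{clm:minrank}.} The claim has two parts, and both follow from results already established, essentially by appealing to Theorem~\ref{thm:pmmsiff} and Claim~\ref{clm:rankVal}, together with a single swap argument for the strict inequality.

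For the first part, $\min\{k_1, k_2\} \geq 2^{m-1}$, the plan is to invoke Theorem~\ref{thm:main} to conclude that $(A_1, A_2)$ is PMMS, and then apply Theorem~\ref{thm:pmmsiff} with $K_{12} = A_1 \cup A_2 = \mathcal{G}$. Since $|K_{12}| = m$, this directly yields $\min\{r_1^{\mathcal{G}}(A_1), r_2^{\mathcal{G}}(A_2)\} \geq 2^{m-1}$, which is exactly what we want.

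For the second part, $\max\{k_1, k_2\} > 2^{m-1}$, I will argue by contradiction. Suppose $k_1 = k_2 = 2^{m-1}$ (in view of the first part, this is the only way the max could fail to be strictly larger than $2^{m-1}$). Applying the contrapositive of Claim~\ref{clm:rankVal} (with $K_{12} = \mathcal{G}$) gives $v_1(A_1) < T/2$ and $v_2(A_2) < T/2$. Additivity and $A_1 \cup A_2 = \mathcal{G}$ then yield $v_1(A_2) = T - v_1(A_1) > T/2$ and $v_2(A_1) = T - v_2(A_2) > T/2$. Applying Claim~\ref{clm:rankVal} in the forward direction gives $r_1^{\mathcal{G}}(A_2) > 2^{m-1}$ and $r_2^{\mathcal{G}}(A_1) > 2^{m-1}$.

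Now I will consider the swapped allocation $A' = (A_2, A_1) \in \Pi_2(\mathcal{G})$. In $A$ both agents have rank exactly $2^{m-1}$, whereas in $A'$ both agents have rank strictly greater than $2^{m-1}$. Comparing $A$ and $A'$ via the rank-leximinCMP procedure in Algorithm~\ref{algo:rankLex}, the very first agent in the sorted order already shows $A \prec A'$, which contradicts the fact that the rank-leximin algorithm returned $A$ as the last element of its sorted list. This completes the proof. The argument is short because Claim~\ref{clm:rankVal} does all the heavy lifting; the only nontrivial step is recognizing that the swap $(A_2, A_1)$ is the natural witness that the rank-leximin optimum cannot assign both agents the boundary rank $2^{m-1}$, and I do not anticipate any genuine obstacle here.
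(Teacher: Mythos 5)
Your proposal is correct and follows essentially the same route as the paper: part one via Theorem~\ref{thm:main} plus Theorem~\ref{thm:pmmsiff}, and part two via the contrapositive of Claim~\ref{clm:rankVal} followed by a bundle-swap argument. The only cosmetic difference is that the paper reaches the contradiction by noting the swap strictly improves both agents' utilities and hence violates Pareto optimality (already established in Theorem~\ref{thm:main}), whereas you translate the values back into ranks and contradict rank-leximin maximality directly---the same swap, one step more explicit.
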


\if\rlStabilityClaimsinApp0
\begin{proof}
	Since rank-leximin produces a PMMS allocation (\Cref{thm:main}), we know from \Cref{thm:pmmsiff} that $\min\{k_1, k_2\} \geq 2^{m-1}$. Also, if $\max\{k_1, 
	k_2\} \leq 2^{m-1}$, then using \Cref{clm:rankVal} we have that $v_1(A_1) < \frac{T}{2}$ and $v_2(A_2) < \frac{T}{2}$. However, this in turn contradicts the fact that rank-leximin is Pareto-optimal (\Cref{thm:main}) since swapping the bundles improves the utilities of both the agents.
\end{proof}
\fi 

Equipped with the claims above, we can now prove the approximate-stability bound for rank-leximin. The proof here proceeds by first arguing about how rank-leximin is stable with respect to equivalent instances. This is followed by showing upper and lower bounds on how weakly-approximately-stable it is, which in turn involves looking at several different cases and using several properties (some of them proved above and some that we will introduce as we go along) about the allocation returned by the rank-leximin algorithm.   

\begin{theorem} \label{thm:rlstability}
	Given an instance with $m$ indivisible goods and two agents with additive valuation functions, the rank-leximin algorithm is $(2+\frac{12\alpha}{T})$-approximately-stable for all $\alpha \leq \frac{T}{3}$.
\end{theorem}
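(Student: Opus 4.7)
The plan is to verify the two conditions of Definition~\ref{def:approx-stable} separately. For the equivalence condition, whenever $v'_1 \in \text{equiv}(v_1)$, Definition~\ref{def:equiv} guarantees that $v_1$ and $v'_1$ induce the same weak order over $2^{\mathcal{G}}$, so the rank function $r^{\mathcal{G}}_1(\cdot)$ is identical under both valuations. Because rank-leximin (Algorithm~\ref{algo:rankLex}) inspects only these ranks and its tie-breaking rule is fixed across runs, it returns the identical allocation in both cases, giving $v_1(\mathcal{M}(v_1, v_2)) = v_1(\mathcal{M}(v'_1, v_2))$.

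For the weak-approximate-stability bound, fix $v'_1 \in \alpha$-$N(v_1)$ with $\alpha \leq T/3$ and denote $(A_1, A_2) = \mathcal{M}(v_1, v_2)$, $(A'_1, A'_2) = \mathcal{M}(v'_1, v_2)$. By Theorem~\ref{thm:main} both allocations are PMMS and PO with respect to their own instances. My plan is a case split on $r^{\mathcal{G}}_1(A_1)$, which Claim~\ref{clm:minrank} bounds below by $2^{m-1}$. When this rank is strictly larger, Claim~\ref{clm:rankVal} gives $v_1(A_1) \geq T/2$; since $v_1(A'_1) \leq T$, the upper bound $v_1(A'_1)/v_1(A_1) \leq 2 \leq 2+12\alpha/T$ is immediate. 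The symmetric lower bound is obtained by running the analogous argument on $(A'_1, A'_2)$ under $v'_1$ to obtain $v'_1(A'_1) \geq T/2$, and then transferring back to $v_1$ via Claim~\ref{clm:diffvv}(i), which bounds $|v_1(S) - v'_1(S)|$ by $\alpha/2$ for every $S \subseteq \mathcal{G}$.

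The delicate case is $r^{\mathcal{G}}_1(A_1) = 2^{m-1}$, in which $v_1(A_1)$ can fall below $T/2$. I would then split again on $r^{\mathcal{G}}_1(A'_1)$: if this rank is at most $2^{m-1}$, Claim~\ref{clm:r}(i) immediately gives $v_1(A'_1) \leq v_1(A_1)$; otherwise $v_1(A'_1) \geq T/2$ by Claim~\ref{clm:rankVal}, and I would combine PO of $(A_1, A_2)$ under $v_1$ with PO of $(A'_1, A'_2)$ under $v'_1$ to conclude that $v_2(A'_2) < v_2(A_2)$ and thereby $v'_1(A_1) \leq v'_1(A'_1)$, which via Claim~\ref{clm:diffvv}(i) translates to the lower bound $v_1(A'_1) \geq v_1(A_1) - \alpha$. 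For the matching upper bound I would invoke Claim~\ref{clm:valbound}(iii): whenever $|A_2| \geq 2$ it yields $v_1(A_1) \geq T/3$ and hence ratio at most $3$, which is absorbed into $2+12\alpha/T$ as soon as $\alpha \geq T/12$. The main obstacle is the complementary regime where either $\alpha < T/12$ or $|A_2| = 1$, because Claim~\ref{clm:valbound}(iii) no longer supplies a useful lower bound on $v_1(A_1)$; here the argument must exploit that rank-leximin can only switch from $(A_1, A_2)$ to a substantially different $(A'_1, A'_2)$ through rank swaps among subsets, and Claim~\ref{clm:diffvv}(ii) controls every such swap by forcing the swapped subsets to have $v_1$-values differing by at most $\alpha$. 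Translating this rank-level closeness into a bound of the form $|v_1(A'_1) - v_1(A_1)| = O(\alpha)$ is the crux that yields the precise dependence $12\alpha/T$ in the final ratio.
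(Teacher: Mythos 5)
Your treatment of the first condition of Definition~\ref{def:approx-stable} is fine and matches the paper: ordinal equivalence over $2^{\mathcal{G}}$ makes the rank-leximin order identical, and deterministic tie-breaking gives the same allocation. The weak-approximate-stability part, however, has genuine gaps. For the upper bound, your only tool in the hard case ($r_1(A_1)=2^{m-1}$, $r_1(A'_1)>2^{m-1}$) is Claim~\ref{clm:valbound}(\ref{clm:valboundiii}), which (and only when $|A_2|\geq 2$) yields $v_1(A_1)\geq T/3$ and hence a ratio of at most $3$; as you yourself note, $3>2+\frac{12\alpha}{T}$ whenever $\alpha<T/12$, and the sub-case $|A_2|=1$ is left open entirely. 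The paper closes exactly this by splitting instead on the size of the \emph{new} bundle $S_1$: if $|S_1|=1$ then $S_1\subseteq A_2$, so the numerator is $v_1(A_2)=T-v_1(A_1)$ rather than $T$ and Claim~\ref{clm:valbound}(\ref{clm:valboundiii}) gives the ratio $\frac{k}{k-1}\leq 2$, while the residual sub-case $|A_2|=1$ is ruled out by a deterministic tie-breaking contradiction; if $|S_1|\geq 2$, the optimality of $(A_1,A_2)$ forces $v_1(S_1\setminus\{g\})\leq v_1(A_1)$ for every $g\in S_1$, and summing gives $(|S_1|-1)v_1(S_1)\leq |S_1|\,v_1(A_1)$, hence ratio at most $2$. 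Without some such argument your upper bound does not reach $2+\frac{12\alpha}{T}$ for small $\alpha$.

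For the lower bound the situation is worse. Your ``symmetric'' argument in the case $r_1(A_1)>2^{m-1}$ presumes $r'_1(A'_1)>2^{m-1}$, which does not follow (Claim~\ref{clm:minrank} allows $r'_1(A'_1)=2^{m-1}$ exactly); the correct case decomposition must be driven by $r'_1(S_1)$, i.e., the rank of the post-misreport bundle under the misreport, together with which agent attains the minimum rank in $(A_1,A_2)$, as in the paper. More importantly, in the sub-case where $v_1(A'_1)\leq v_1(A_1)$ you have no lower bound at all, and you explicitly defer ``the crux'' to establishing $|v_1(A'_1)-v_1(A_1)|=O(\alpha)$. That target is both unproven and the wrong one: the theorem only requires (and the paper only delivers) a multiplicative guarantee whose loss tends to a factor of about $2$ as $\alpha\to 0$, not additive $O(\alpha)$ closeness; the paper's own hard sub-cases (e.g., $r'_1(S_1)=2^{m-1}$ with the minimum attained by agent $2$, split further on $|S_2|$ and on whether $S_1\cap A_1$ is empty) produce bounds like $v_1(A_1)\leq 2v_1(S_1)+\alpha$ or $A_1\subseteq S_2$ with ratio near $\frac{k-1}{k}\cdot\frac{1}{1}$ in value terms, which are consistent with a genuine factor-$2$ drop even for tiny $\alpha$. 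Those sub-cases are resolved in the paper by exploiting the rank-leximin optimality of each allocation in its own instance, Claims~\ref{clm:diffvv},~\ref{clm:valbound},~\ref{clm:rankVal}, and again the deterministic tie-breaking rule; none of this machinery is assembled in your proposal, so the central part of the bound remains unestablished.
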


\begin{proof}
	Let us consider two agents with valuation functions $v_1, v_2$. Since the rank-leximin is symmetric, we can assume w.l.o.g.\ that agent 1 is the one making a mistake. Throughout, let us denote this misreport by $v_1'$. Also, let rank-leximin$(v_1, v_2) = (A_1, A_2)$, and rank-leximin$(v'_1, v_2) = (S_1, S_2)$. Next, let us introduce the following notation that we will use throughout. For some arbitrary valuations $y_1, y_2$, if an allocation $(C_1, C_2)$ precedes an allocation $(B_1, B_2)$ in the rank-leximin order with respect to $(y_1, y_2)$ (i.e., according to the rank-leximinCMP function in \Cref{algo:rankLex}), then we denote this by $(B_1, B_2) \succ_{y_1, y_2} (C_1, C_2)$. In the case they are equivalent according to the rank-leximin operator, then we denote it by $(B_1, B_2) \equiv_{y_1, y_2} (C_1, C_2)$. Additionally, we use $(B_1, B_2) \succeq_{y_1, y_2} (C_1, C_2)$ to denote that $(C_1, C_2)$ either precedes or is equivalent to $(B_1, B_2)$.
	
	Equipped with the notations above, we now prove our theorem. To do this, first recall from the definition of approximate-stability (see \Cref{def:approx-stable}) that  we first need to show that if $v_1' \in \text{equiv}(v_1)$, then $v_1(A_1) = v_1(S_1)$. To see why this is true, consider the rank-leximinCMP function in \Cref{algo:rankLex} and observe that since $v_1' \in \text{equiv}(v_1)$, $v_1, v_1'$ have the same bundle rankings, and hence for any two partitions $P, T \in \pi_n(\mathcal{G})$, $P \prec_{v_1, v_2} T$ (i.e., $P$ appears before $T$ in the rank-leximin order) if and only if $P \prec_{v'_1, v_2} T$. This in turn along with the fact that rank-leximin uses a deterministic tie-breaking rule implies that rank-leximin$(v_1, v_2) = (A_1, A_2) = \text{rank-leximin}(v'_1, v_2)$, thus showing that it is stable with respect to equivalent instances. 
	
	Having shown the above, let us move to the second part where we show how weakly-approxmiately-stable rank-leximin is. For the rest of this proof, let $v_1' \in \alpha$-$N(v_i)$, for some $\alpha \in (0, \frac{T}{3}]$. Also, for $i \in [2]$, let the ranking function associated with valuation functions $v_i$ and $v_1'$ be $r^{\mathcal{G}}_i(\cdot)$ and $r_1^{'\mathcal{G}}(\cdot)$, respectively. Throughout, in order to keep the notations simple, we use $r_1(\cdot)$, $r'_1(\cdot)$, and $r_2(\cdot)$ to refer to $r^{\mathcal{G}}_1(\cdot)$, $r_1^{'\mathcal{G}}(\cdot)$, and $r^{\mathcal{G}}_2(\cdot)$, respectively. 
	
	In order to prove the bound on weak-approximate-stability, we show two lemmas. The first one shows an upper bound of 2 for the ratio $\frac{v_1(S_1)}{v_1(A_1)}$ and the second one shows a lower bound of $\frac{1}{2+\frac{12\alpha}{T}}$. 
	
	\begin{restatable}{lemma}{rlStUB} \label{lemma:rl-st-ub}
		For $\alpha > 0$, if $v_1' \in \alpha$-$N(v_1)$, rank-leximin$(v_1, v_2) = (A_1, A_2)$, and rank-leximin$(v'_1, v_2) = (S_1, S_2)$, then 
		\begin{equation*}
			\frac{v_1(S_1)}{v_1(A_1)} \leq 2.
		\end{equation*}
	\end{restatable}
	
	\if\rlStabilityUBlemmainApp1 
	\begin{proof}[Proof (sketch)] We proceed by considering three cases. Note that since rank-leximin returns a PMMS allocation (\Cref{thm:main}) we know from \Cref{thm:pmmsiff} that these are the only cases. Also, below we directly consider the case when $v_1(S_1) > v_1(A_1)$, since otherwise the bound trivially holds.
		
		\proofcase{Case 1.\ $r_1(A_1) > 2^{m-1}$:} In this case the bound follows by using \Cref{clm:rankVal}.
		
		\proofcase{Case 2.\ $r_1(A_1) = 2^{m-1}$ and $|S_1| = 1$:} In this case we proceed by first observing that since $|S_1| = 1$ and $v_1(A_1) < v_1(S_1)$, we have that $S_1 \subseteq A_2$. This in turn implies 	$v_1(S_1) \leq v_1(A_2)$, and so we have that $\frac{v_1(S_1)}{v_1(A_1)} \leq \frac{v_1(A_2)}{v_1(A_1)}$.
		Now, in order to upper-bound this, we further consider two cases. The first is when $|A_2| \geq 2$, and here the required bound follows by using \Cref{clm:valbound} from which we know that $v_1(A_1) \geq \frac{k-1}{2k-1} T$. The second case is when $|A_2| = 1$, and here, since $S_1 \subseteq A_2$, we have that $S_1 = A_2$. This implies {rank-leximin}$(v_1, v_2) = (A_1, A_2)$ and {rank-leximin}$(v'_1, v_2) = (A_2, A_1)$. Given this, one can make a series of observations to conclude that  $(A_1, A_2) \equiv_{v_1, v_2} (A_2, A_1)$ and $(A_1, A_2) \allowbreak \equiv_{v'_1, v_2} (A_2, A_1)$. However, this is a contradiction because rank-leximin breaks ties deterministically. 
		
		\proofcase{Case 3.\ $r_1(A_1) = 2^{m-1}$ and $|S_1| \geq 2$:} Using the facts that $v_1(S_1) > v_1(A_1)$, $r_1(A_1) = 2^{m-1}$, and $\text{rank-leximin}(v_1, v_2) = (A_1, A_2)$, it is not hard to see that for all $g \in S_1$, $v_1(S_1 \setminus \{g\}) \leq v_1(A_1)$.  This observation in turn can be used to show that  $\frac{v_1(S_1)}{v_1(A_1)} \leq \frac{|S_1|}{|S_1| - 1} \leq 2$. 
		
		Finally, combining all the three cases above gives us our lemma. As mentioned previously, the complete proof appears in \Cref{app:sec:rl-st-ub}. 
	\end{proof}
	\fi

	\if\rlStabilityUBlemmainApp0
	\begin{proof}
	To prove this, let us consider the following three cases. Note that since rank-leximin returns a PMMS allocation (\Cref{thm:main}) we know from 
	\Cref{thm:pmmsiff} that these are the only three cases. Also, note that the bound is trivially true when $v_1(A_1) - v_1(S_1) \geq 0$. So below we directly consider the case when $v_1(S_1) > v_1(A_1)$.
	
	\proofcase{Case 1.\ $r_1(A_1) > 2^{m-1}$:} Since $r_1(A_1) > 2^{m-1}$, we know using \Cref{clm:rankVal} that $v_1(A_1) \geq \frac{T}{2}$. Therefore, using this and the fact that $ v(S_1) \leq T$, we have, 
	\begin{align} \label{eqn:case1}
		&\frac{v_1(S_1)}{v_1(A_1)} \leq 2. 
	\end{align}
	
	\proofcase{Case 2.\ $r_1(A_1) = 2^{m-1}$ and $|S_1| = 1$:} First, observe that since $|S_1| = 1$ and $v_1(A_1) < v_1(S_1)$, we have that $S_1 \subseteq A_2$. This in turn implies 
	$v_1(S_1) \leq v_1(A_2)$ and so using this we have
	\begin{align} \label{eqn:ub}
		\frac{v_1(S_1)}{v_1(A_1)} \leq \frac{v_1(A_2)}{v_1(A_1)}.
	\end{align} 
	Now, in order to upper-bound (\ref{eqn:ub}), consider the following two cases.
	
	\begin{enumerate}[label=\roman*),ref=\roman*]
		\item \textbf{$|A_2| \geq 2$:}  Let $k = |A_2|$. Since $k\geq 2$, we know from \Cref{clm:valbound} that $v_1(A_1) \geq \frac{k-1}{2k-1} T$. So, using 
		this, we have
		\begin{align}  \label{eqn:case2}
			\frac{v_1(A_2)}{v_1(A_1)}  = \frac{T - v_1(A_1)}{v_1(A_1)} \leq  \frac{k}{k-1} \leq 2. 
		\end{align}
		
		\item \textbf{$|A_2| = 1$:} Since $|A_2| = 1$ and from above we know that $S_1 \subseteq A_2$, we have that $S_1 = A_2$. This in turn implies that 
		{rank-leximin}$(v_1, v_2) = (A_1, A_2)$ and {rank-leximin}$(v'_1, v_2) = (A_2, A_1)$. Next, consider the following observations.
		\begin{enumerate}[label=\alph*), ref=\alph*]
			\item $r_1(A_1) = 2^{m-1}$ (recall that this is the case we are considering)
			\item $r_2(A_2) = 2^{m-1} + 1$ (since $|A_2| = 1$, we know that $r_2(A_2) \leq 2^{m-1} + 1$, because valuation functions are additive and so any singleton set can have rank at most $2^{m-1} + 1$. However, $r_2(A_2) \geq 2^{m-1} + 1$, using \Cref{clm:minrank}.)
			
			\item $r'_1(A_1) = 2^{m-1}$ (since $|A_1| = m-1$, we know that $r'_1(A_1) \geq 2^{m-1}$. However, $r'_1(A_1) < 2^{m-1} + 1$, for if not then $(A_1, A_2) \succ_{v'_1, v_2} (A_2, 
			A_1)$, thus contradicting the fact that {rank-leximin}$(v'_1, v_2) = (A_2, A_1)$)
			\item $r_1(A_2) = 2^{m-1} + 1$ (since $|A_2| = 1 \Rightarrow r_1(A_2) \leq 2^{m-1} + 1$, and  $r_1(A_2) > r_1(A_1)$ since we are considering the case when $v_1(S_1) = v_1(A_2) > v_1(A_1)$)
			\item $r_2(A_1) = 2^{m-1}$ (since $|A_1| = m-1 \Rightarrow r_2(A_1) \geq 2^{m-1}$, and $r_2(A_1) < 2^{m-1} + 1$, for if not then $(A_2, A_1) \succ_{v_1, v_2} (A_1, 
			A_2)$, thus contradicting the fact that {rank-leximin}$(v_1, v_2) = (A_1, A_2)$)
			\item $r'_1(A_2) = 2^{m-1} + 1$ (since $|A_2| = 1 \Rightarrow r'_1(A_2) \leq 2^{m-1} + 1$, and $r_2(A_1) = 2^{m-1} \Rightarrow r'_1(A_2) \geq 2^{m-1} + 1$ using \Cref{clm:minrank})
		\end{enumerate}
		
		Now, combining all the observations above, we can see that $(A_1, A_2) \equiv_{v_1, v_2} (A_2, A_1)$ and $(A_1, A_2) \allowbreak \equiv_{v'_1, v_2} (A_2, A_1)$. However, note that this in turn is a contradiction because the rank-leximin algorithm breaks ties deterministically and so it couldn't have been the case that {rank-leximin}$(v_1, v_2) \allowbreak = (A_1, 
		A_2)$ and {rank-leximin}$(v'_1, v_2) = (A_2, A_1)$. Hence, this case is impossible.	
	\end{enumerate}
	
	\proofcase{Case 3.\ $r_1(A_1) = 2^{m-1}$ and $|S_1| \geq 2$:} First, note that since $v_1(S_1) > v_1(A_1)$, $r_1(A_1) = 2^{m-1}$, and $\text{rank-leximin}(v_1, v_2) = (A_1, A_2)$, we have that, for all $g \in S_1$, $v_1(S_1 \setminus \{g\}) \leq v_1(A_1)$. Why? Suppose if not, then consider the allocation $(S_1\setminus\{g\}), S_2 \cup \{g\})$, and observe that $(S_1\setminus\{g\}), S_2 \cup \{g\}) \succ_{v_1, v_2} (A_1, A_2)$, which in turn contradicts the fact that $\text{rank-leximin}(v_1, v_2) = (A_1, A_2)$. So, now, by adding up all the inequalities with respect all $g \in S_1$, we have that 
	\begin{align*}  \label{eqn:case3}
		(|S_1|-1) \cdot v_1(S_1) \leq |S_1| \cdot v_1(A_1). 
	\end{align*}
	This implies that we can use the fact that $|S_1| \geq 2$ to see that 
	\begin{align}
		&\frac{v_1(S_1)}{v_1(A_1)} \leq \frac{|S_1|}{|S_1| - 1} \leq 2. 
	\end{align}
	Finally, combining (\ref{eqn:case1}), (\ref{eqn:case2}), and (\ref{eqn:case3}), we have our lemma. 
\end{proof}
	\fi
	
	Next, we show a lower bound for $\frac{v_1(S_1)}{v_1(A_1)}$.
	
	\begin{restatable}{lemma}{rlStLB} \label{lemma:rl-st-lb}
		For $\alpha \in (0, \frac{T}{3}]$, if $v_1' \in \alpha$-$N(v_1)$, rank-leximin$(v_1, v_2) = (A_1, A_2)$, and rank-leximin$(v'_1, \allowbreak v_2) = (S_1, S_2)$, then 
		\begin{equation*}
			\frac{v_1(S_1)}{v_1(A_1)} \geq \frac{1}{2+\frac{12\alpha}{T}}.
		\end{equation*}
	\end{restatable}
	
	\if\rlStabilityLBlemmainApp0
	\begin{proof}
	Since we are trying to show a lower bound for $\frac{v_1(S_1)}{v_1(A_1)}$, below we consider the case when $v_1(A_1) > v_1(S_1)$, since otherwise it is trivially lower-bounded by 1. Also, note we can directly consider the case when $v_1(S_2) > v_1(S_1)$, for otherwise $v_1(S_1) = T - v_1(S_2) \geq \frac{T}{2}$ and so $\frac{v_1(S_1)}{v_1(A_1)} \geq \frac{1}{2}$. Therefore, throughout this proof, we have $v_1(S_2) > \frac{T}{2}$.
	
	Next, let us consider the following cases. Note that, by using \Cref{clm:minrank} with respect to the allocation $(S_1, S_2)$, these are the only three cases.
	
	\proofcase{Case 1.\ $r'_1(S_1) > 2^{m-1}$:} Since $r'_1(S_1) > 2^{m-1}$, we know from \Cref{clm:rankVal} that $v'_1(S_1) \geq \frac{T}{2}$. Also, since $v'_1 \in \alpha$-$N(v_1)$, from \Cref{clm:diffvv}(\ref{clm:diffvvi}) we have that $v_1'(S_1) - v_1(S_1) \leq |v_1'(S_1) - v_1(S_1)| \leq \frac{\alpha}{2}$. So, using these, we have, 
	\begin{align} \label{eqn:lb-case1}
		\frac{v_1(S_1)}{v_1(A_1)} \geq \frac{\frac{T}{2} - \frac{\alpha}{2}}{v_1(A_1)} \geq \frac{\frac{T}{2} - \frac{\alpha}{2}}{T} \geq \frac{1}{2 + \frac{4\alpha}{T}},
	\end{align}
	where the last inequality follows from the fact that $\alpha \in (0, \frac{T}{3}]$.
	
	\proofcase{Case 2.\ $r'_1(S_1) = 2^{m-1}$ and $\min\{r_1(A_1), r_2(A_2)\} = r_1({A_1})$:}  
	Since $\min\{r_1(A_1), r_2(A_2)\} = r_1({A_1})$ and rank-leximin$(v'_1, v_2) = (S_1, S_2)$, we have that $v_1'(A_1) \leq v_1'(S_1)$, for if not, then it is easy to see 	that $(A_1, A_2) \succ_{(v_1', v_2)} (S_1, S_2)$, thus contradicting the fact that rank-leximin$(v'_1, v_2) = (S_1, S_2)$. Also, since $v_1(A_1) > v_1(S_1)$ and from above we have $v_1'(A_1) \leq v_1'(S_1)$, from  \Cref{clm:diffvv}(\ref{clm:diffvvii}) we have know that $v_1(A_1) - v_1(S_1) \leq \alpha$. So, now, let us consider the following two cases.
	
	\begin{enumerate}[label=\roman*),ref=\roman*]
		\item \textbf{$|S_2| =1$:} First, note that since $|S_2|=1$ and $v_1(A_1) > v_1(S_1)$, we have that $S_2 \subseteq A_1$. This 
		in turn implies that $v_1(A_1) \geq v_1(S_2) > \frac{T}{2}$. So, now, using these we have, 
		\begin{align} \label{eqn:lb-case2.1}
			\frac{v_1(S_1)}{v_1(A_1)} \geq \frac{v_1(A_1) - \alpha}{v_1(A_1)} > 1 - \frac{2\alpha}{T}  \geq \frac{1}{2+\frac{4\alpha}{T}},
		\end{align} 
		where the last inequality follows from the fact that $\alpha \in (0, \frac{T}{3}]$.
		
		\item \textbf{$|S_2| \geq 2$:} Let $k = |S_2|$. Now, since $v_1(A_1) - 	v_1(S_1) \leq \alpha$, we have,
		\begin{align} \label{eqn:lb-case2.1.2}
			\frac{v_1(S_1)}{v_1(A_1)} \geq \frac{v_1(S_1)}{v_1(S_1) + \alpha} \nonumber &= 1- \frac{\alpha}{v_1(S_1) + \alpha} \nonumber \\
			&\geq 1- \frac{\alpha}{v'_1(S_1) - \frac{\alpha}{2} + \alpha} &&\text{{\scriptsize (using \Cref{clm:diffvv}(\ref{clm:diffvvi}))}} 
			\nonumber\\
			&\geq 1- \frac{\alpha}{\frac{k-1}{2k-1}T + \frac{\alpha}{2}}  &&\text{{\scriptsize $\left(\text{using \Cref{clm:valbound}(\ref{clm:valboundiii})}\right)$}} 
			\nonumber \\
			&\geq \frac{1}{2+\frac{4\alpha}{T}},
		\end{align}
		where the last inequality follows from the fact that $\alpha \in (0, \frac{T}{3}]$.
	\end{enumerate}
	
	\proofcase{Case 3.\ $r'_1(S_1) = 2^{m-1}$ and $\min\{r_1(A_1), r_2(A_2)\} = r_2({A_2})$:} First, note that we can directly consider the case when
	$\min\{r_1(A_1),\allowbreak r_2(A_2)\} = r_2(A_2) = 2^{m-1}$, for if not then we either have $v_1'(A_1) \leq v'_1(S_1)$, which can 
	be handled as in Case 2, or we have that $v_1'(A_1) > v'_1(S_1)$, which is impossible since this would imply $\min\{r'_1(A_1), r_2(A_2)\} > 
	2^{m-1} = r'_1(S_1) = \min\{r'_1(S_1), r_2(S_2)\}$, thus contradicting the fact that $\text{rank-leximin}(v'_1, v_2) = 
	(S_1,S_2)$. Additionally, since we are considering the case when $r_2(A_2) = 2^{m-1}$, we have from \Cref{clm:minrank} that $r_1(A_1) \geq 2^{m-1} + 1$, 
	which in turn implies, using \Cref{clm:rankVal}, that $v_1(A_1) \geq \frac{T}{2}$. 
	
	Given the observations above, let us now consider the following cases.
	
	\begin{enumerate}[label=\roman*),ref=\roman*]
		\item \textbf{$|S_2| \geq 2$ and $|S_1 \cap A_1| \geq 1$:} Let $k = |S_2|$ and let $X \subset \mathcal{G}$ such that $r_1(X) = 2^{m-1}$. Since $\allowbreak \text{rank-leximin}(v_1, \allowbreak v_2) = (A_1, A_2)$ and $\forall g \in \mathcal{G}$, and $\forall i \in [2]$, $v_i(g) > 0$, we have that $r_1(A_1\setminus\{g\}) \leq 2^{m-1}$, for if not then $\min\{r_1(A_1\setminus \{g\}), r_2(A_2 \cup \{g\})\} > 2^{m-1} =  r_2(A_2) = \min\{r_1(A_1), r_2(A_2)\}$, thus contradicting the fact that  $\text{rank-leximin}(v_1, v_2) = (A_1, A_2)$. So, next, let us consider a good $g \in S_1 \cap A_1$. Using the observations above, we have,
		\begin{align}\label{eqn:inter} 
			v_1(A_1) &\leq v_1(X) + v_1(g) && \text{{\scriptsize $\left(\text{since } r_1(A_1\setminus\{g\}) \leq 2^{m-1}\right)$}} \nonumber \\
			&\leq v_1(X) + v_1(S_1) && \text{{\scriptsize $\left(\text{since } v_1(g) \leq v_1(S_1)\right)$}} \nonumber\\
			&\leq 2v_1(S_1) + \alpha. 
		\end{align}
		The last inequality is true because either $v_1(S_1) \geq v_1(X)$, or, if not, then it is easy to see that, since $r_1'(S_1) = 2^{m-1}$ and $r_1(S_1) < r_1(X_1) = 2^{m-1}$, there exists an $H \subset \mathcal{G}$ such that $v_1(H) \geq v_1(X) > v_1(S_1)$ and $v'_1(H) \leq v'_1(S_1)$. This in turn implies, we can use \Cref{clm:diffvv}(\ref{clm:diffvvii}), to see that $v_1(H) - v_1(S) \leq \alpha$. 
		
		So, now, using (\ref{eqn:inter}), we have,
		\begin{align} \label{eqn:case2.2}
			\frac{v_1(S_1)}{v_1(A_1)} \geq \frac{1}{2} - \frac{\alpha}{2v_1(A_1)}
			&\geq \frac{1}{2} - \frac{\alpha}{T} &\text{{\scriptsize $\left(\text{since }v_1(A_1) \geq \frac{T}{2}\right)$}} \nonumber\\
			&\geq \frac{1}{2+\frac{12\alpha}{T}},
		\end{align}
		where the last inequality follows from the fact that $\alpha \in (0, \frac{T}{3}]$.
		
		\item \textbf{$|S_2| \geq 2$ and $|S_1 \cap A_1| = 0$:} Let $k = |S_2|$. First, note that since $|S_1 \cap A_1| = 0$, we have that $A_1 \subseteq S_2$. Therefore, using this, we have, 
		\begin{align} \label{eqn:case2.3}
			\frac{v_1(S_1)}{v_1(A_1)} \geq \frac{v_1(S_1)}{v_1(S_2)} &\geq \frac{v'_1(S_1) - \frac{\alpha}{2}}{v'_1(S_2) + \frac{\alpha}{2}} \nonumber \\
			&\geq \frac{\frac{k-1}{2k-1}T - \frac{\alpha}{2} }{\frac{k}{2k-1}T + \frac{\alpha}{2}} & \text{{\scriptsize (using \Cref{clm:valbound}(\ref{clm:valboundiii}))}} \nonumber \\
			&\geq \frac{1}{2 + \frac{12\alpha}{T}},
		\end{align}
		where the last inequality follows from the fact that $k \geq 2$ and $\alpha \in (0, \frac{T}{3}]$.
		
		\item\textbf{$|S_2| = 1$: } First, note that since $|S_2| = 1$, $r_2(S_2) \leq 2^{m-1} + 1$, as additivity implies that any singleton set can have rank at most $2^{m-1}+ 1$. Also, recall that we are considering the case when $r_1'(S_1) = r_2(A_2) = 2^{m-1}$. Now, since rank-lexmin$(v_1',v_2) = (S_1, S_2)$, we need to have $v_1'(S_1) \geq v_1'(A_1)$, for if not, then the fact that $r_2(A_2) = 2^{m-1}$ and $v'_1(A_1) > v'_1(S_1)$ implies that $(A_1, A_2) \succeq_{v_1, v_2} (S_1, S_2)$ and $(A_1, A_2) \succeq_{v'_1, v_2} (S_1, S_2)$. However, this is impossible since the rank-leximin algorithm uses a deterministic tie-breaking rule and hence if this is case, then rank-leximin$(v_1, v_2)$ and rank-leximin$(v'_1, v_2)$ cannot be different. Therefore, we have that $v_1'(S_1) \geq v_1'(A_1)$, which in turn can again be handled as in Case 2.    
	\end{enumerate} 
	
	Finally, combining (\ref{eqn:lb-case1})--(\ref{eqn:case2.3}), we have that $\frac{v_1(S_1)}{v_1(A_1} \geq 	\frac{1}{2+\frac{12\alpha}{T}}$. 
\end{proof} 
	\fi
	
	\if\rlStabilityLBlemmainApp1
	\begin{proof}[Proof (sketch)]
		Since we are trying to show a lower bound for $\frac{v_1(S_1)}{v_1(A_1)}$, below we consider the case when $v_1(A_1) > v_1(S_1)$, since otherwise it is trivially lower-bounded by 1. Also, note we can directly consider the case when $v_1(S_2) > v_1(S_1)$, for otherwise $v_1(S_1) = T - v_1(S_2) \geq \frac{T}{2}$ and so $\frac{v_1(S_1)}{v_1(A_1)} \geq \frac{1}{2}$. Next, we proceed by considering the following cases. Note that these are the only three cases using \Cref{clm:minrank} with respect to the allocation $(S_1, S_2)$.
		
		\proofcase{Case 1.\ $r'_1(S_1) > 2^{m-1}$:} Since $r'_1(S_1) > 2^{m-1}$, we know from \Cref{clm:rankVal} that $v'_1(S_1) \geq \frac{T}{2}$. Also, since $\|v_1-v_1'\|_{1} \leq \alpha$ and $v_1(\mathcal{G}) = v'_1(\mathcal{G}) = T$,  $v_1'(S_1) - v_1(S_1) \leq \frac{\alpha}{2}$. Combining these gives us our bound. 
		
		\proofcase{Case 2.\ $r'_1(S_1) = 2^{m-1}$ and $\min\{r_1(A_1), r_2(A_2)\} = r_1({A_1})$:}  
		Since $\min\{r_1(A_1), r_2(A_2)\} = r_1({A_1})$ and rank-leximin$(v'_1, v_2) = (S_1, S_2)$, we have that $v_1'(A_1) \leq v_1'(S_1)$, for if not, then it is easy to see 	that $(A_1, A_2) \succ_{(v_1', v_2)} (S_1, S_2)$, which is a contradiction. Also, since $v_1(A_1) > v_1(S_1)$ and $v_1'(A_1) \leq v_1'(S_1)$, from  \Cref{clm:diffvv}(\ref{clm:diffvvii}) we have that $v_1(A_1) - v_1(S_1) \leq \alpha$. Given these, one can show our bound by considering two cases. The first is when $|S_2| =1$, and here we can obtain our bound by observing that $S_2 \subseteq A_1$ and by using the fact that $v_1(A_1) > v_1(S_2)$. The second case is when $|S_2| \geq 2$, and here the bound follows by combining the fact $v_1(A_1) - v_1(S_1) \leq \alpha$ along with Claims~\ref{clm:diffvv}(\ref{clm:diffvvi}) and~\ref{clm:valbound}(\ref{clm:valboundiii}).
		
		\proofcase{Case 3.\ $r'_1(S_1) = 2^{m-1}$ and $\min\{r_1(A_1), r_2(A_2)\} = r_2({A_2})$:} Here we proceed by considering three sub-cases. The first one is when $|S_2| \geq 2$ and $|S_1 \cap A_1| = 0$, and here one can obtain the bound by combining \Cref{clm:valbound}(\ref{clm:valboundiii}) along with the observation that since $|S_1 \cap A_1| = 0$, $A_1 \subseteq S_2$. The second case is when $|S_2| = 1$, and here we argue that this can be handled by proceeding as in Case 2 above. The last one is when $|S_2| \geq 2$ and $|S_1 \cap A_1| \geq 1$. In this case the proof proceeds by first making the observation that $\forall g \in \mathcal{G}$, $r_1(A_1\setminus\{g\}) \leq 2^{m-1}$. This in turn can be used to show that $v_1(A_1) 	\leq 2v_1(S_1) + \alpha$, which can then be combined with \Cref{clm:diffvv}(\ref{clm:diffvvii}) to show the bound. 
		
		Finally, combining all the cases above gives us our lemma.  As mentioned previously, the complete proof appears in \Cref{app:sec:rl-st-lb}.
	\end{proof}
	\fi
	Since rank-leximin is stable with respect to equivalent instances, and combining the lemmas above we have that it is $(2+\frac{12\alpha}{T})$-weakly-approximately-stable for all $\alpha \leq \frac{T}{3}$, our theorem follows. 
\end{proof}

Note that for $\alpha \in (0, \frac{T}{3}]$, our approximate-stability bound is independent of the choice of $T$, whereas for all the previously studied algorithms that we had considered in \Cref{sec:existalgo-stable} it was dependent on $T$, and in fact was equal to $(T-1)$ even when $\alpha = 4$. Additionally, none of those algorithms are stable on equivalent instances either (i.e., the bound of $(T-1)$ was with respect to weak-approximate-stability and not the stronger notion of approximate-stability). Finally, it is interesting to note that although our model assumes that the misreport $v'$ always maintains the ordinal information (over the singletons) of the true valuation $v$ (see the second condition in \Cref{def:alphaN}), the proof above does not use this fact. This implies that rank-leximin is stronger than what the theorem indicates since it is $(2+\frac{12\alpha}{T})$-stable for all $\alpha \leq \frac{T}{3}$ and for all possible misreports that are within an $L_1$ distance of $\alpha$ from the original valuation.\footnotemark

\footnotetext{Seeing this, one might wonder ``why have the condition of maintaining the ordinal information over the singletons (i.e., the second condition in \Cref{def:alphaN}) then?'' The reason is, recall that our original goal was look for stable algorithms, and so this condition was a natural one to impose since, as mentioned in the introduction, one cannot hope for stable algorithms without assuming anything about the kind of mistakes the agents might commit. However, as mentioned in \Cref{sec:st-f-e}, even with this assumption designing stable algorithms turned out to be impossible, and hence we relaxed the strong requirement of stability to approximate-stability. And under this relaxed definition, the condition turns out to be not as crucial.}

\section{Weak-approximate-stability in fair division of indivisible goods} \label{sec:weak-stability}

In this section we consider weak-approximate-stability, the weaker notion of approximate stability defined in \Cref{def:weak-approx} and we show how a very simple change to the existing algorithms can ensure more stability. In particular, we show that any algorithm that returns a PMMS and PO allocation for two agents can be made $(4 + \frac{6\alpha}{T})$-weakly-approximately-stable for $\alpha \leq \frac{T}{3}$. The main idea needed to achieve this is to modify the algorithm that produces a PMMS and PO allocation and convert  it into a two-phase procedure where the first phase only handles instances where agent 1 has a good of value greater than $\frac{T}{2}$. In the case of such instances, we just output the allocation where agent 1 gets the highly-valued good and all the other goods are given to  agent 2. Below, in Algorithm~\ref{algo:mod}, we formally describe the framework and subsequently show how this very minor change leads to better stability overall. \if\weakStabilityinApp2 Due to space constraints, the proof appears in Appendix~\ref{app:sec:weak-st} \fi

\begin{algorithm}[tb]
	{\small\centering
		\noindent\fbox{%
			\begin{varwidth}{\dimexpr\linewidth-4\fboxsep-4\fboxrule\relax}
				\begin{algorithmic}[1]
					\small 
					\Input for each agent $i \in [2]$, their valuation function $v_i$ 
					\Output an allocation $A = (A_1, A_2)$ that satisfies the same properties as $\mathcal{M}$
					
					\State $g_1^{\max} \leftarrow \argmax_{g \in \mathcal{G}} v_1(g)$ \Comment{{\footnotesize in case of tie, pick the one with the lowest index}}
					\If{$v_1(g_1^{\max}) > \frac{T}{2}$} \Comment{\footnotesize  \textbf{Phase1}}
					\State \textbf{return} $A = \left({\{g_1^{\max}\}, \mathcal{G}\setminus\{g_1^{\max}\}}\right)$
					\EndIf 
					\State \textbf{return} $\mathcal{M}(v_1, v_2)$. \Comment{\footnotesize \textbf{Phase 2}}
				\end{algorithmic}
		\end{varwidth}}
		\caption{Framework for modified-$\mathcal{M}$}
		\label{algo:mod}
	}
\end{algorithm}

\begin{restatable}{theorem}{weakStThm} \label{thm:weak-st}
	Given an instance with $m$ indivisible goods and two agents with additive valuation functions, let $\mathcal{M}$ be an algorithm that returns a PMMS and PO allocation. Then, for $\alpha \leq \frac{T}{3}$, modified-$\mathcal{M}$, which refers to the change as described in Algorithm~\ref{algo:mod}, is $(4 + \frac{6\alpha}{T})$-weakly-approximately-stable, and returns an allocation that is PMMS and PO.  
\end{restatable}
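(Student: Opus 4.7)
The plan is to verify the $(4 + 6\alpha/T)$ bound by a case analysis on which of Phase~1 or Phase~2 is triggered on each of the two instances $(v_i, v_{-i})$ and $(v_i', v_{-i})$. The key tools are the preservation of the ordinal order over singletons from Definition~\ref{def:alphaN}, Claim~\ref{clm:diffvv}(i) which bounds $|v(S)-v'(S)|$ by $\alpha/2$ on any bundle, and Claim~\ref{clm:valbound}(iii) which gives a $T/3$ lower bound on any PMMS bundle whose partner has at least two goods. Note that when agent~2 misreports, the Phase~1 condition (which depends only on $v_1$) is unchanged, so either both instances trigger Phase~1 and return the identical allocation (ratio $1$) or both land in Phase~2 and the argument mirrors the agent~1 sub-case below; I therefore focus on agent~1 misreporting.

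Let $g^*$ be the highest-valued good; by ordinal preservation $g^*$ is the same under $v_1$ and $v_1'$, and Claim~\ref{clm:diffvv}(i) gives $|v_1(g^*)-v_1'(g^*)|\leq \alpha/2$. I split on whether $v_1(g^*)$ and $v_1'(g^*)$ exceed $T/2$. If both exceed, Phase~1 fires on both and the output is the identical $(\{g^*\},\mathcal{G}\setminus\{g^*\})$, giving ratio~$1$. In a mixed sub-case, say $v_1(g^*)>T/2\geq v_1'(g^*)$, the true instance returns $\{g^*\}$ (utility $v_1(g^*)>T/2$), while the perturbed instance returns a PMMS allocation $(S_1,S_2)$; PMMS applied to the partition $(\{g^*\},\mathcal{G}\setminus\{g^*\})$ with $v_1'(g^*)\leq T/2$ gives $v_1'(S_1)\geq v_1'(g^*)$, which combined with Claim~\ref{clm:diffvv}(i) yields $v_1(S_1)\geq v_1(g^*)-\alpha$. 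The ratio thus lies in $[1-2\alpha/T,\,2] \subseteq [1/(4+6\alpha/T),\,4+6\alpha/T]$ for $\alpha\leq T/3$.

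The core sub-case is when both instances are in Phase~2, so neither $v_1$ nor $v_1'$ has a good of value exceeding $T/2$. Let $(A_1,A_2)=\mathcal{M}(v_1,v_2)$ and $(S_1,S_2)=\mathcal{M}(v_1',v_2)$, both PMMS. If $|A_2|\geq 2$, Claim~\ref{clm:valbound}(iii) gives $v_1(A_1)\geq T/3$; if $|A_2|=1$ with $A_2=\{g\}$, then $v_1(g)\leq T/2$ forces $v_1(A_1)=T-v_1(g)\geq T/2$; either way $v_1(A_1)\geq T/3$. The same dichotomy applied under $v_1'$ plus Claim~\ref{clm:diffvv}(i) gives $v_1(S_1)\geq T/3 - \alpha/2$. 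Hence $v_1(S_1)/v_1(A_1)\in[1/3 - \alpha/(2T),\,3]$. The upper bound $3\leq 4 + 6\alpha/T$ is immediate, and for the lower bound the identity
\[
\Bigl(\tfrac{1}{3}-\tfrac{\alpha}{2T}\Bigr)\Bigl(4+\tfrac{6\alpha}{T}\Bigr) = \tfrac{4}{3}-\tfrac{3\alpha^2}{T^2} \geq 1 \Longleftrightarrow \alpha\leq T/3
\]
shows it falls into $1/(4+6\alpha/T)$ exactly on the claimed range of $\alpha$.

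The main obstacle is this last sub-case: obtaining a PMMS lower bound on $v_1(A_1)$ without the hypothesis $|A_2|\geq 2$ requires the separate singleton argument above, which uses Phase~2's ``no good of value exceeding $T/2$'' condition in an essential way; and the threshold $\alpha\leq T/3$ is tight precisely because it is where $\tfrac{4}{3}-\tfrac{3\alpha^2}{T^2}$ crosses $1$, matching the target $4+6\alpha/T$.
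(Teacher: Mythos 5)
Your overall strategy is essentially the paper's: a case analysis on which phase of Algorithm~\ref{algo:mod} fires on the true and perturbed instances, with Claim~\ref{clm:diffvv}(i) transferring bundle values between $v_1$ and $v_1'$ and Claim~\ref{clm:valbound}(iii) together with a singleton dichotomy supplying a $T/3$-type lower bound on PMMS bundles. Your agent-1 analysis is correct, and packaging the both-Phase-2 case as $v_1(S_1)/v_1(A_1)\in[\,1/3-\alpha/(2T),\,3\,]$ with the check that $1/3-\alpha/(2T)\ge 1/(4+6\alpha/T)$ exactly when $\alpha\le T/3$ is a slightly cleaner version of the paper's Cases 1 and 3. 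The mixed case you write out in only one direction (``say $v_1(g^*)>T/2\ge v_1'(g^*)$''); the reverse direction (true instance in Phase 2, misreport in Phase 1, the paper's Case 2) is not symmetric, but it is covered by your own tools ($v_1(S_1)=v_1(g^*)\ge v_1'(g^*)-\alpha/2>T/2-\alpha/2$ for the lower bound, and $v_1(A_1)\ge T/3$ via your dichotomy under $v_1$ for the upper bound), so that is an omission of writing rather than of substance.

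The genuine gap is the agent-2 case, which you dispose of with ``the argument mirrors the agent-1 sub-case.'' It does not: the Phase-1 test depends only on agent 1's valuation, so being in Phase 2 tells you $v_1(g_1^{\max})\le T/2$ but nothing about $v_2(g_2^{\max})$. Consequently the singleton branch of your dichotomy (``if the other agent's bundle is a single good $g$, my bundle is worth $T-v(g)\ge T/2$'') fails for agent 2, and no $T/3$ lower bound on $v_2(A_2)$ or $v_2(S_2)$ is available. Concretely, with $v_1=(T/2,T/4,T/4)$ and $v_2=(T-2,1,1)$ both instances are in Phase 2, the allocation $(\{g_1\},\{g_2,g_3\})$ is PMMS yet gives agent 2 only value $2$, while $(\{g_2,g_3\},\{g_1\})$ is also PMMS both under $v_2$ and under the neighbouring report $v_2'=(T-4,2,2)$ with $\alpha=4$; an adversarially tie-breaking PMMS algorithm $\mathcal{M}$ can therefore swing agent 2's true utility from $2$ to $T-2$, so the claimed bound cannot be recovered by mirroring your agent-1 argument. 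You are in good company: the paper's own Case 3 asserts $v_i(g_i^{\max})\le T/2$ for the misreporting agent $i$ ``since we are in Phase 2,'' which is likewise justified only for $i=1$. Still, as a blind proof of the stated theorem, your write-up leaves the agent-2 case open, and it is precisely the case that needs an argument going beyond the tools you (and the paper) invoke.
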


\if\weakStabilityinApp0
\begin{proof}	
	We begin by arguing that if $\mathcal{M}$ produces a PMMS and PO allocation then so does modified-$\mathcal{M}$. To see this, note that we just need to show that the output in Phase 1 is PMMS and PO, since otherwise modified-$\mathcal{M}$ returns the same outcome as $\mathcal{M}$. The fact that it is PMMS is in turn easy to see because agent 1 is envy-free (since they get a good of value greater than $\frac{T}{2}$) and the other agent gets $m-1$ goods. As for PO, note that agent 1 gets only one good with value greater than $\frac{T}{2}$, so they cannot be made better-off without making agent 2 worse-off.
	
	To show the bound on  weak-approximate-stability, let us consider two agents with utility functions $v_1, v_2$. Next, for some $\alpha \in (0, \frac{T}{3}]$, we denote the misreport by agent $i \in [2]$ as $v_i'$, and let $(-i)$ denote the agent $(i+1)\mod 2$. Also, let rank-leximin$(v_1, v_2) = (A_1, A_2)$, and, depending on which agent $i \in [2]$ is misreporting, let 
	rank-leximin$(v'_1, v_{2}) = (S_1, S_2)$ or rank-leximin$(v_1, v'_{2}) = (S_1, S_2)$. We refer to $(v_1, v_2)$ as the true report, and, depending on which agent $i \in [2]$ is misreporting, refer to $(v'_1, v_{2})$ or $(v_1, v'_{2})$ as the misreport. Given an instance, we say that the instance is in `Phase 1' if agent 1 has a good of value greater than $\frac{T}{2}$  (this corresponds to the `if' statement in Algorithm~\ref{algo:mod}). Otherwise, we say that the instance is in `Phase 2.' Also, throughout, for $i\in[2]$, we use $g_i^{\max}$ to denote the highest valued good in $\mathcal{G}$ according to agent $i$ (if there are multiple goods with the highest value, pick the one with the lowest index). Recall that, since $v_i' \in \alpha$-$N(v_i)$, $g_i^{\max}$ is the same in $v_i$ and $v_i'$.
	
	Equipped with the notations above, let us now consider the following cases. Note that since the allocation returned is identical whenever an instance is in Phase 1, we only need to analyze these cases. 
	\proofcase{Case 1.\ True report is in Phase 1, misreport is in Phase 2:} First, since there is a change in phase due to misreporting and the choice of phase only depends on agent 1's valuation function, note that agent $1$ is the one misreporting. Given this, below we derive an upper and lower bound for the ratio $\frac{v_1(S_1)}{v_1(A_1)}$.
	
	Note that since $(v_1, v_2)$ in Phase 1, we have that $A_1 = \{g_1^{\max}\}$ and $v_1(g_1^{\max}) > \frac{T}{2}$. Therefore,
	\begin{equation} \label{eq:weak-arrox1}
		\frac{v_1(S_1)}{v_1(A_1)} = \frac{v_1(S_1)}{v_1(g_1^{\max})} \leq \frac{T}{v_1(g_1^{\max})} < 2.
	\end{equation}
	
	Next, to lower bound the ratio, let us consider the following two cases.
	\begin{enumerate}[label=\roman*),ref=\roman*]
		\item \textbf{$|S_2| = 1$:} Since $|S_2| = 1$, we have that $v_1'(S_1) \geq v_1'(\mathcal{G} \setminus \{g_1^{\max}\})$. Also, since $(v_1', v_{2})$ in Phase 2, we know that $v_1'(g_1^{\max}) \leq \frac{T}{2}$. So, now, using these and Claim~\ref{clm:diffvv}(\ref{clm:diffvvi}), we have
		\begin{align} \label{eq:weak-arrox2}
			\frac{v_1(S_1)}{v_1(A_1)} \geq \frac{v'_1(S_1) - \frac{\alpha}{2}}{v_1(A_1)} \geq \frac{v_1'(\mathcal{G} \setminus \{g_1^{\max}\}) - \frac{\alpha}{2}}{v_1(A_1)} 
			&\geq \frac{\frac{T}{2} - \frac{\alpha}{2}}{T} 
			\geq \frac{1}{2+\frac{4\alpha}{T}},
		\end{align}
		where the last inequality follows since $\alpha \in (0, \frac{T}{3}]$.
		
		\item \textbf{$|S_2| \geq 2$:} Let $k = |S_2|$. Since $k \geq 2$, we can use Claim~\ref{clm:diffvv}(\ref{clm:diffvvi}) and Claim~\ref{clm:valbound}(\ref{clm:valboundiii}), to see that 
		\begin{align} \label{eq:weak-arrox3}
			\frac{v_1(S_1)}{v_1(A_1)} &\geq \frac{v'_1(S_1) - \frac{\alpha}{2}}{v_1(A_1)}
			\geq \frac{\frac{k-1}{2k-1}T-\frac{\alpha}{2}}{T} 
			\geq \frac{1}{4 + \frac{6\alpha}{T}},
		\end{align}
		where the last inequality follows from the facts that $k \geq 2$ and $\alpha \in (0, \frac{T}{3}]$.  
	\end{enumerate}
	
	\proofcase{Case 2.\ True report is in Phase 2, misreport is in Phase 1:} Similar to the previous case, here again we have that agent 1 is the one misreporting. Also, since $(v'_1, v_2)$ in Phase 1, we have that $S_1 = \{g_1^{\max}\}$ and $v'_1(g_1^{\max}) > \frac{T}{2}$. Just like in the previous case, we again upper and lower bound the ratio $\frac{v_1(S_1)}{v_1(A_1)}$.
	
	For the upper bound, first note that since $(v_1, v_2)$ is in Phase 2, we have $v_1(g_1^{\max}) \leq \frac{T}{2}$. Next, note that we only need to consider the case when $g_1^{\max} \in A_2$ ,since otherwise the ratio is upper-bounded by 1. Also, if $|A_2| = 1$, then note that $v_1(A_1) \geq T - v_1(g_1^{\max}) \geq \frac{T}{2}$, and so if this is the case then again the ratio is upper-bounded by 1. Finally, if $k = |A_2| \geq 2$, then we can use Claim~\ref{clm:valbound}(\ref{clm:valboundiii}) to see that,
	\begin{align} \label{eq:weak-arrox4}
		\frac{v_1(S_1)}{v_1(A_1)} = \frac{v_1(g_1^{\max})}{v_1(A_1)} \leq \frac{\frac{T}{2}}{\frac{k-1}{2k-1}T} \leq \frac{3}{2}.
	\end{align} 	
	For the lower bound, since $v'_1(g_1^{\max}) > \frac{T}{2}$ and $\alpha \in (0, \frac{T}{3}]$, we can use \Cref{clm:diffvv}(\ref{clm:diffvvi}) to see that,
	\begin{align} \label{eq:weak-arrox5}
		\frac{v_1(S_1)}{v_1(A_1)} = \frac{v_1(g_1^{\max})}{v_1(A_1)} \geq \frac{\frac{T}{2} - \frac{\alpha}{2}}{T} \geq \frac{1}{2+\frac{4\alpha}{T}}.
	\end{align}
	\proofcase{Case 3.\ True report is in Phase 2, misreport is in Phase 2:}  Let $i\in[2]$ be the agent who is misreporting. First, for the lower bound, we can proceed exactly as in Cases 1(i) and 1(ii) above. For the upper bound, let us consider the three cases.
	\begin{enumerate}[label=\roman*),ref=\roman*]
		\item \textbf{$i = 1$ and $|A_{2}| = 1$:} Since $|A_{2}| = 1$, we know that $v_1(A_1) \geq v_1(\mathcal{G} \setminus \{g_i^{\max}\})$. Also, since $(v_1, v_2)$ is in Phase 2, $v_1(g_1^{\max}) \leq \frac{T}{2}$. So, using this, we have,
		\begin{align} \label{eq:weak-arrox6}
			\frac{v_1(S_1)}{v_1(A_1)} \leq \frac{T}{T-v(g_1^{\max})} \leq 2.
		\end{align}
		
		\item \textbf{$i = 2$ and $|A_{1}| = 1$:} First, note that $v_2(A_2) \geq \frac{T}{2}$, for, if otherwise, when the agents report $(v_1, v_2)$, $(A_2, A_1)$ Pareto dominates $(A_1, A_2)$, which is a contradiction. This is because, since $|A_1| = 1$ and we are in Phase 2, $v_1(A_1) \leq \frac{T}{2}$, and so the allocations can be swapped to (weakly) improve the utilities of both the agents. Therefore, using this, we have, 
		\begin{align} \label{eq:weak-arrox7}
			\frac{v_2(S_2)}{v_2(A_2)} \leq \frac{v_2(S_2)}{\frac{T}{2}} \leq 2.
		\end{align}
		
		\item \textbf{$|A_{-i}| \geq 2$:} Let $k = |A_{-i}|$. Since $k \geq 2$, we can use Claim~\ref{clm:valbound}(\ref{clm:valboundiii}), to see that
		\begin{align} \label{eq:weak-arrox8}
			\frac{v_i(S_i)}{v_i(A_i)}  \leq \frac{T}{\frac{k-1}{2k-1}T} \leq 3.
		\end{align}
	\end{enumerate}	
	Finally, combining all the observations, (\ref{eq:weak-arrox1})--(\ref{eq:weak-arrox8}), from above we have our theorem. 
\end{proof} 
\fi

\if\weakStabilityinApp1
\begin{proof}[Proof (sketch)]
	We begin by arguing that if $\mathcal{M}$ produces a PMMS and PO allocation then so does modified-$\mathcal{M}$. To see this, note that we just need to show that the output in Phase 1 is PMMS and PO, since otherwise modified-$\mathcal{M}$ returns the same outcome as $\mathcal{M}$. The fact that it is PMMS is in turn easy to see because agent 1 is envy-free (since they get a good of value greater than $\frac{T}{2}$) and the other agent gets $m-1$ goods. As for PO, note that agent 1 gets only one good with value greater than $\frac{T}{2}$, so they cannot be made better-off without making agent 2 worse-off.
	
	Given that  modified-$\mathcal{M}$ is guaranteed to produce a PMMS allocation, it just remains to show the bound on weak-approximate-stability. Note that the algorithm here is not symmetric, and hence we need to consider what happens when each agent misreports. However, except for this, the proof of this theorem has a similar flavour to that of Theorem~\ref{thm:rlstability}, in that this too involves a lot of case-by-case analysis. Therefore, in the interest of readability, we defer it to Appendix~\ref{app:sec:weak-st}.
\end{proof}
\fi

Note that the simple framework presented in Algorithm~\ref{algo:mod} can be used with respect to any algorithm $\mathcal{M}$ that is fair and efficient, and so one can derive, perhaps better bounds, when considering modified versions of specific algorithms (likes the ones in Section~\ref{sec:existalgo-stable}) that provide a PMMS, EFX, or EF1 guarantee. Therefore, although the overall change that is required to any algorithm is minor, the observation is important if we care about having stabler algorithms, since without this recall that we observed in Section~\ref{sec:existalgo-stable} that all the four previously studied algorithms are $(T-1)$-weakly-approximately-stable even when $\alpha = 4$. 

\section{Discussion}
The problem considered here was born as a result of an example we came across on the fair division website, Spliddit \cite{gold15}, where it seemed like arguably innocuous mistakes by an agent had significant ramifications on their utility for the outcome. Thus, the question arose as to whether such consequences were avoidable, or more broadly if we could have stabler algorithms in the context of fair allocations. In this paper we focused on algorithmic stability in fair and efficient allocation of indivisible goods among two agents, and towards this end, we introduced a notion of stability and showed how it is impossible to achieve exact stability along with even a weak notion of fairness like EF1 and even approximate efficiency. This raised the question of how to relax the strong requirement of stability, and here we proposed two relaxations, namely, approximate-stability and weak-approximate-stability, and showed how the previously studied algorithms in fair division that are fair and efficient perform poorly with respect to these relaxations. This lead to looking for new algorithms, and here we proposed an approximately-stable algorithm (rank-leximin) that guarantees a pairwise maximin share and Pareto optimal allocation, and presented a simple general framework for any fair algorithm to achieve weak-approximate stability. Along the way, we also provided a characterization result for pairwise maximin share allocations and showed how (in addition to the two-agent case) such allocations always exist as long as the agents report ordinally equivalent valuation functions. Overall, while the results demonstrate how one can do better in the context of two-agent fair allocations, our main contribution is in introducing the notion of stability and its relaxations, and in explicitly advocating for it to be used in the design of algorithms or mechanisms that elicit cardinal preferences. 

Moving forward, we believe that there is a lot of scope for future work. Especially when given the observation that humans may often find it hard to attribute exact numerical values to their preferences, we believe that some notion of stability should be considered when designing algorithms or mechanisms in settings where cardinal values are elicited. This opens up the possibility of asking similar questions like the ones we have in other settings where such issues may perhaps be more crucial. Additionally, specific to the problem considered here, there are several unanswered questions. For instance, the most obvious one is about the case when there are more than two agents. A careful reader would have noticed that all the algorithms we talked about here heavily relied on the fact that there were only two agents, and therefore none of them work when there are more. As another example---this one in the context of two-agent fair allocations---one question that could be interesting is to see if there are polynomial-time algorithms that are approximately-stable and can guarantee EF1/EFX and PO. Given the fact that polynomial time EF1 and PO algorithms exist for two agents, we believe that it would interesting to see if we can also additionally guarantee approximate-stability, or prove otherwise---in which case it would demonstrate a clear separation between unstable and approximately-stable algorithms. The rank-leximin algorithm presented here is approximately-stable and provides the stated guarantees in terms of fairness and efficiency, but it is exponential, and we could not answer this question either in the positive or otherwise. Therefore, this, too, remains to be resolved by future work.

\textbf{Acknowledgments. } We are grateful to Hong Zhou for several useful discussions. We also thank Nisarg Shah for comments on an early draft of this work. 
	
\printbibliography

@inproceedings{plaut18,
  title={Almost envy-freeness with general valuations},
  author={Plaut, Benjamin and Roughgarden, Tim},
  booktitle={Proceedings of the Twenty-Ninth ACM-SIAM Symposium on Discrete Algorithms (SODA)},
  pages={2584--2603},
  year={2018}
}

@article{gal17,
	title={Which is the fairest (rent division) of them all?},
	author={Gal, Ya’akov and Mash, Moshe and Procaccia, Ariel D and Zick, Yair},
	journal={Journal of the ACM (JACM)},
	volume={64},
	number={6},
	pages={1--22},
	year={2017},
	publisher={ACM New York, NY, USA}
}

@article{gale62,
	title={College admissions and the stability of marriage},
	author={Gale, David and Shapley, Lloyd S},
	journal={The American Mathematical Monthly},
	volume={69},
	number={1},
	pages={9--15},
	year={1962},
	publisher={Taylor \& Francis}
}

@inproceedings{aman17,
	title={Truthful allocation mechanisms without payments: Characterization and implications on fairness},
	author={Amanatidis, Georgios and Birmpas, Georgios and Christodoulou, George and Mar-kakis, Vangelis},
	booktitle={Proceedings of the Eighteenth ACM Conference on Economics and Computation (EC)},
	pages={545--562},
	year={2017}
}

@inproceedings{car16,
  title={The unreasonable fairness of maximum Nash welfare},
  author={Caragiannis, Ioannis and Kurokawa, David and Moulin, Herv{\'e} and Procaccia, Ariel D and Shah, Nisarg and Wang, Junxing},
  booktitle={Proceedings of the Seventeenth ACM Conference on Economics and Computation (EC)},
  pages={305--322},
  year={2016}
}

@article{bud11,
  title={The combinatorial assignment problem: Approximate competitive equilibrium from equal incomes},
  author={Budish, Eric},
  journal={Journal of Political Economy},
  volume={119},
  number={6},
  pages={1061--1103},
  year={2011},
  publisher={University of Chicago Press Chicago, IL}
}

@inproceedings{car19,
	title={Envy-freeness up to any item with high Nash welfare: The virtue of donating items},
	author={Caragiannis, Ioannis and Gravin, Nick and Huang, Xin},
	booktitle={Proceedings of the Twentieth ACM Conference on Economics and Computation (EC)},
	pages={527--545},
	year={2019},
}

@inproceedings{aman18,
author    = {Georgios Amanatidis and
               Georgios Birmpas and
               Vangelis Markakis},
  title     = {Comparing approximate relaxations of envy-freeness},
  booktitle = {Proceedings of the Twenty-Seventh International Joint Conference on Artificial Intelligence (IJCAI)},
  pages     = {42--48},
  year      = {2018},
}

@inproceedings{lip04,
  title={On approximately fair allocations of indivisible goods},
  author={Lipton, Richard J and Markakis, Evangelos and Mossel, Elchanan and Saberi, Amin},
  booktitle={Proceedings of the Fifth ACM conference on Electronic Commerce (EC)},
  pages={125--131},
  year={2004},
}

@inproceedings{bar18,
  author    = {Siddharth Barman and
               Sanath Kumar Krishnamurthy and
               Rohit Vaish},
  title     = {Finding fair and efficient allocations},
  booktitle = {Proceedings of the Nineteenth {ACM} Conference on Economics and Computation (EC)},
  pages     = {557--574},
  year      = {2018}
}

@book{brams96b,
  title={Fair Division: From cake-cutting to dispute resolution},
  author={Brams, Steven J and Taylor, Alan D},
  year={1996},
  publisher={Cambridge University Press}
}

@article{brams96a,
  title={A procedure for divorce settlements},
  author={Brams, Steven J and Taylor, Alan D},
  journal={Mediation Quarterly},
  volume={13},
  number={3},
  pages={191--205},
  year={1996},
  publisher={Wiley Online Library}
}

@inproceedings{dwork15,
  title={Preserving statistical validity in adaptive data analysis},
  author={Dwork, Cynthia and Feldman, Vitaly and Hardt, Moritz and Pitassi, Toniann and Reingold, Omer and Roth, Aaron Leon},
  booktitle={Proceedings of the Forty-Seventh ACM Symposium on Theory of Computing (STOC)},
  pages={117--126},
  year={2015},
}

@article{bous02,
  title={Stability and generalization},
  author={Bousquet, Olivier and Elisseeff, Andr{\'e}},
  journal={Journal of Machine Learning Research (JMLR)},
  volume={2},
  number={Mar},
  pages={499--526},
  year={2002}
}

@article{shalev10,
  title={Learnability, stability and uniform convergence},
  author={Shalev-Shwartz, Shai and Shamir, Ohad and Srebro, Nathan and Sridharan, Karthik},
  journal={Journal of Machine Learning Research (JMLR)},
  volume={11},
  number={Oct},
  pages={2635--2670},
  year={2010}
}

@inproceedings{dwork06a,
  title={Calibrating noise to sensitivity in private data analysis},
  author={Dwork, Cynthia and McSherry, Frank and Nissim, Kobbi and Smith, Adam},
  booktitle={Theory of Cryptography Conference (TCC)},
  pages={265--284},
  year={2006},
}

@inproceedings{dwork06b,
  author    = {Cynthia Dwork},
  title     = {Differential privacy},
  booktitle = {Proceedings of the Thirty-Third International Colloquium on Automata, Languages and Programming (ICALP)},
  pages     = {1--12},
  year      = {2006}  
}

@inproceedings{mcsh07,
  title={Mechanism design via differential privacy},
  author={McSherry, Frank and Talwar, Kunal},
  booktitle={Proceedings of the Forty-Eighth IEEE Symposium on Foundations of Computer Science (FOCS)},
  pages={94--103},
  year={2007},
}

@inproceedings{dwo12,
  title={Fairness through awareness},
  author={Dwork, Cynthia and Hardt, Moritz and Pitassi, Toniann and Reingold, Omer and Zemel, Richard},
  booktitle={Proceedings of the Third Innovations in Theoretical Computer Science Conference (ITCS)},
  pages={214--226},
  year={2012},
}

@article{chiesa15,
author = {Chiesa, Alessandro and Micali, Silvio and Zhu, Zeyuan Allen},
title = {Knightian analysis of the Vickrey mechanism},
journal = {Econometrica},
volume = {83},
number = {5},
pages = {1727--1754},
year = {2015},
}

@inproceedings{chiesa14,
  title={Knightian self uncertainty in the VCG mechanism for unrestricted combinatorial auctions},
  author={Chiesa, Alessandro and Micali, Silvio and Zhu, Zeyuan Allen},
  booktitle={Proceedings of the Fifteenth ACM Conference on Economics and Computation (EC)},
  pages={619--620},
  year={2014},
}

@inproceedings{chiesa12,
  author={Chiesa, Alessandro and Micali, Silvio and Zhu, Zeyuan Allen},
  title     = {Mechanism design with approximate valuations},
  booktitle = {Proceedings of the Third Innovations in Theoretical Computer Science (ITCS)},
  pages     = {34--38},
  year      = {2012},
}

@inproceedings{mai18,
  author =	{Tung Mai and Vijay V. Vazirani},
  title =	{Finding stable matchings that are robust to errors in the input},
  booktitle =	{Proceedings of the Twenty-Sixth European Symposium on Algorithms (ESA)},
  pages =	{60:1--60:11},
  year =	{2018}
}

@inproceedings{menon18,
  title={Robust and approximately stable marriages under partial information},
  author={Menon, Vijay and Larson, Kate},
  booktitle={Proceedings of the Fourteenth International Conference on Web and Internet Economics (WINE)},
  pages={341--355},
  year={2018},
}

@inproceedings{menon19,
	author={Menon, Vijay and Larson, Kate},
	title={Mechanism design for locating a facility under partial information},
	booktitle={Proceedings of the Twelveth International Symposium on Algorithmic Game Theory (SAGT)},
	year={2019},
	pages={49--62},
}

@inproceedings{chen19,
	title={Matchings under preferences: Stren\-gth of stability and trade-offs},
	author={Chen, Jiehua and Skowron, Piotr and Sorge, Manuel},
	booktitle={Proceedings of the Twentieth ACM Conference on Economics and Computation (EC)},
	pages={41--59},
	year={2019},
}

@inproceedings{abra18,
  title={Utilitarians without utilities: maximizing social welfare for graph problems using only ordinal preferences},
  author={Abramowitz, Ben and Anshelevich, Elliot},
  booktitle={Proceedings of the Thirty-Second AAAI Conference on Artificial Intelligence (AAAI)},
  pages={894--901},
  year={2018}
}

@inproceedings{goel17,
  title={Metric distortion of social choice rules: Lower bounds and fairness properties},
  author={Goel, Ashish and Krishnaswamy, Anilesh K and Munagala, Kamesh},
  booktitle={Proceedings of the Eighteenth ACM Conference on Economics and Computation (EC)},
  pages={287--304},
  year={2017},
}

@inproceedings{ansh16,
  title={Blind, greedy, and random: Algorithms for matching and clustering using only ordinal information},
  author={Anshelevich, Elliot and Sekar, Shreyas},
  booktitle={Proceedings of the Thirtieth AAAI Conference on Artificial Intelligence (AAAI)},
  pages={383--389},
  year={2016}
}

@inproceedings{ansh17,
  title={Tradeoffs between information and ordinal approximation for bipartite matching},
  author={Anshelevich, Elliot and Zhu, Wennan},
  booktitle={Proceedings of the Tenth International Symposium on Algorithmic Game Theory (SAGT)},
  pages={267--279},
  year={2017},
}

@article{bout15,
  title={Optimal social choice functions: A utilitarian view},
  author={Boutilier, Craig and Caragiannis, Ioannis and Haber, Simi and Lu, Tyler and Procaccia, Ariel D and Sheffet, Or},
  journal={Artificial Intelligence},
  volume={227},
  pages={190--213},
  year={2015},
  publisher={Elsevier}
}

@article{rama13,
  title={Fair allocation of indivisible goods: the two-agent case},
  author={Ramaekers, Eve},
  journal={Social Choice and Welfare},
  volume={41},
  number={2},
  pages={359--380},
  year={2013},
  publisher={Springer}
}

@article{aziz15a,
author={Aziz, Haris},
title={A note on the undercut procedure},
journal={Social Choice and Welfare},
year={2015},
month={Dec},
day={01},
volume={45},
number={4},
pages={723--728},
}

@article{brams12,
  title={The undercut procedure: an algorithm for the envy-free division of indivisible items},
  author={Brams, Steven J and Kilgour, D Marc and Klamler, Christian},
  journal={Social Choice and Welfare},
  volume={39},
  number={2-3},
  pages={615--631},
  year={2012},
  publisher={Springer}
}

@article{brams14,
  title={Two-person fair division of indivisible items: An efficient, envy-free algorithm},
  author={Brams, Steven J and Kilgour, M and Klamler, Christian},
  journal={Notices of the AMS},
  volume={61},
  number={2},
  pages={130--141},
  year={2014}
}

@article{vet14,
  title={Fair division of indivisible items between two players: Design parameters for contested pile methods},
  author={Vetschera, Rudolf and Kilgour, D Marc},
  journal={Theory and Decision},
  volume={76},
  number={4},
  pages={547--572},
  year={2014},
  publisher={Springer}
}

@article{aziz15b,
  title={Fair assignment of indivisible objects under ordinal preferences},
  author={Aziz, Haris and Gaspers, Serge and Mackenzie, Simon and Walsh, Toby},
  journal={Artificial Intelligence},
  volume={227},
  pages={71--92},
  year={2015},
  publisher={Elsevier}
}

@inproceedings{bouv10,
  title={Fair Division under ordinal preferences: Computing envy-free allocations of indivisible goods},
  author={Bouveret, Sylvain and Endriss, Ulle and Lang, J{\'e}r{\^o}me},
  booktitle={Proceedings of the Nineteenth European Conference on Artificial Intelligence (ECAI)},
  pages={387--392},
  year={2010},
}

@inproceedings{segal17,
  title={Fair allocation based on diminishing differences},
  author={Segal-Halevi, Erel and Aziz, Haris and Hassidim, Avinatan},
  booktitle={Proceedings of the Twenty-Sixth International Joint Conference on Artificial Intelligence (IJCAI)},
  pages={1254--1261},
  year={2017}
}

@article{gold15,
  title={Spliddit: unleashing fair division algorithms},
  author={Goldman, Jonathan and Procaccia, Ariel D},
  journal={ACM SIGecom Exchanges},
  volume={13},
  number={2},
  pages={41--46},
  year={2015},
  publisher={ACM}
}

@article{budi13,
  title={Designing random allocation mechanisms: Theory and applications},
  author={Budish, Eric and Che, Yeon-Koo and Kojima, Fuhito and Milgrom, Paul},
  journal={American Economic Review},
  volume={103},
  number={2},
  pages={585--623},
  year={2013}
}

@article{bogo01,
  title={A new solution to the random assignment problem},
  author={Bogomolnaia, Anna and Moulin, Herv{\'e}},
  journal={Journal of Economic Theory},
  volume={100},
  number={2},
  pages={295--328},
  year={2001},
  publisher={Elsevier}
}

@inproceedings{bred17,
	title={Robustness among multiwinner voting rules},
	author={Bredereck, Robert and Faliszewski, Piotr and Kaczmarczyk, Andrzej and Niedermeier, Rolf and Skowron, Piotr and Talmon, Nimrod},
	booktitle={Proceedings of the Tenth International Symposium on Algorithmic Game Theory (SAGT)},
	pages={80--92},
	year={2017},
}

@inproceedings{shir13,
	title={On elections with robust winners},
	author={Shiryaev, Dmitry and Yu, Lan and Elkind, Edith},
	booktitle={Proceedings of the International Conference on Autonomous Agents and Multi-Agent Systems (AAMAS)},
	pages={415--422},
	year={2013},
}

\appendix 

\section{Omitted Proofs}

\subsection{Omitted Proofs from \texorpdfstring{\Cref{sec:prelims}}{Section~\ref{sec:prelims}}} \label{app:sec:prelims:proofs} 
\obsStable*

\begin{proof}
	The forward part (i.e., the `if' part) is obvious since by definition an algorithm is stable if it is $\alpha$-stable for all $\alpha > 0$.  To see the `only if' part, let us assume that $\mathcal{M}$ is $\alpha$-stable for some $\alpha > 0$. Next, for an arbitrary $\alpha' \neq \alpha$, let us consider an arbitrary agent $i_1 \in [n]$, an arbitrary $v_{i_1}$, an arbitrary $v_{i_1}' \in \alpha'$-$N(v_{i_1})$, and arbitrary reports of the other agents denoted by $v_{-i_{1}}$.
	
	First, note that for any $\alpha' < \alpha$, we have that 
	\begin{equation*}
	v_{i_1}' \in \alpha'\text{-}N(v_{i_1}) \Rightarrow v_{i_1}' \in \alpha\text{-}N(v_{i_1}),
	\end{equation*}
	which in turn implies that $\mathcal{M}$ is $\alpha'$-stable since it is $\alpha$-stable.
	
	Next, let us consider an $\alpha' > \alpha$. Now, in this case observe that we can construct a (finite) series of valuations functions $v_{i_2}, \cdots, v_{i_n}$, where for all $j \in [n]$, $v_{i_{j+1}} \in \alpha$-$N(v_{i_{j}})$ and where $v_{i_{n+1}} = v_{i_1}'$. This along with the fact that $\mathcal{M}$ is $\alpha$-stable implies that, for all $j \in [n]$, 
	\begin{equation*}
	v_{i_1}(\mathcal{M}(v_{i_j}, v_{-i_{1}})) =  v_{i_1}(\mathcal{M}(v_{i_{j+1}}, v_{-i_{1}})).
	\end{equation*}
	So, using the equalities above, we have, 
	\begin{equation*}
	v_{i_1}(\mathcal{M}(v_{i_1}, v_{-i_{1}})) = v_{i_1}(\mathcal{M}(v_{i_{n+1}}, v_{-i_{1}})) = v_{i_1}(\mathcal{M}(v'_{i_{1}},v_{-i_{1}})),
	\end{equation*}
	thus proving that $\mathcal{M}$ is $\alpha'$-stable. 
\end{proof}

%

\if\RankClaimsinApp1
\subsection{Omitted Proofs from \texorpdfstring{\Cref{sec:neccsuff}}{Section~\ref{sec:neccsuff}}} \label{app:sec:neccsuff:proofs}

\clmR*

\begin{proof} To prove the first part, consider the sets $H_A = \{P \mid P \subseteq Q, v_i(P) \leq v_i(A) \}$ and $H_B = \{P \mid P \subseteq Q, v_i(P) \leq v_i(B) \}$. First, observe that $v_i(A) < v_i(B)$ if and only if $|H_A| < |H_B|$. Also, from the definition of the ranking function we know that $|H_A| < |H_B|$ if and only if $r_i^Q(A) < r_i^Q(B)$. Combining these we have our claim. 
	
	To prove the second part, observe that from the first part we know that $r_i^Q(A) < r_i^Q(B)$ if and only if $v_i(A) < v_i(B)$. Next, again using the first part with $Q = \mathcal{G}$, we have that $v_i(A) < v_i(B)$ if and only if $r_i^{\mathcal{G}}(A) < r_i^{\mathcal{G}}(B)$. Combining these we have our claim.
\end{proof}

\clmLevels*

\begin{proof}
	To prove the first part, consider the set $H = \{P \mid P \subseteq Q, v_i(P) \leq v_i(S) \}$. First, note that from \Cref{clm:r}(\ref{clm:r1}) we can see that $H = T_\ell$. Next, from the definition of $r_i^Q(S)$, we know that $|H| = \ell$. Also, every element in $H$ will have a rank at most $\ell$ (since for each such set $S'$, $v_i(S') \leq v_i(S)$) and every element outside of $H$ will have a rank larger than $\ell$ (since for each such set $S'$, $v_i(S') > v_i(S)$). Hence, i) follows. 
	
	To prove the second part, consider the largest $\ell' \leq \ell$ such that there exists some $S \subseteq Q$ with $r_i^Q(S) = \ell'$. Now, from i) we know that the number of 
	subsets of $Q$ with rank at most $\ell'$ is exactly $\ell'$ and hence from our choice of $\ell'$ the statement follows. 
\end{proof}

%
\fi

\subsection{Omitted proofs from Section~\ref{sec:rl}} \label{app:sec:rl:proofs}
\thmRlnAgents*

\begin{proof}
	To prove this we first show how rank-Leximin always returns a PMMS and PO allocation $(A_1, \ldots, A_n)$ for agents with identical valuation functions. Once we have that, then the theorem follows by repeated application of \Cref{obs2} and by observing that the rank-leximin algorithm produces the same output for two instances that are equivalent. 
	
	First, note that we already know from the proof of \Cref{thm:main} that it is PO with respect to any instance with $n$ agents. So we just need to argue that it produces a PMMS allocation when the reports are identical. To see this, recall from \Cref{thm:pmmsiff} that we need to show that for all $i, j \in [n]$, $\min\{r_i^{Q}(A_i), r_j^{Q}(A_j)\} \geq 2^{k-1}$, where $Q = A_i \cup A_j$ and $k = |Q|$.
	
	Suppose this was not the case and there exists agents $i, j$ such that $r_i^{Q}(A_i) < 2^{k-1}$. Now, let us consider the set $H = \{S \mid S \subseteq Q \text{ and } r^Q_j(S) > r^Q_i(A_i) \}$.  Since $r_i^Q (A_i) < 2^{k-1}$, we know 
	from \Cref{clm:levels}(\ref{clm:levels2}) that $|H| > 2^{k-1}$ (since there are $2^k$ subsets of $Q$ in total). This in turn implies that, there is a set $S$ and its complement 	$S^c$, such that both $r^Q_j(S)$ and $r^Q_j(S^c)$  are greater than $r_i^Q (A_i)$. So, now, consider these sets $S$ and $S^c$, and ask 
	agent $i$ to pick the one she values the most. Let us assume without loss of generality that this is $S$. Note that since the valuations are additive and hence $v_i(S) \geq 	\frac{1}{2} v_i(Q) > v_i(A_i)$, we know from \Cref{clm:rankVal} that $r^Q_i(S) \geq 2^{k-1} > r^Q_i(A_i)$. This in turn implies that, 
	\begin{align*} \label{app:eqn}
		&\min \{r^Q_i(S), r^Q_j(S^c)\} > r_i^Q (A_i) \geq \min \{r_i^Q (A_i), r_j^Q 	(A_j) \} \nonumber \\
		\Rightarrow &\min \{r^Q_i(S), r^Q_i(S^c)\} > r_i^Q (A_i) \geq \min \{r_i^Q (A_i), r_i^Q	(A_j) \},
	\end{align*} 
	where the implication follows from the fact that $i$ and $j$ have identical valuation functions and so for any $S \subseteq \mathcal{G}$ and $A \subseteq S$, we have $r^S_i(A) = r^S_j(A)$.
	
	The observation above implies that we can use \Cref{clm:r}(\ref{clm:r2}) to see that,
	\begin{align*}
		&\min \{r^{\mathcal{G}}_i(S), r^{\mathcal{G}}_i(S^c)\} > \min \{r_i^{\mathcal{G}} (A_i), r_i^{\mathcal{G}} (A_j) \}\\
		\Rightarrow & \min \{r^{\mathcal{G}}_i(S), r^{\mathcal{G}}_j(S^c)\} > \min \{r_i^{\mathcal{G}} (A_i), r_j^{\mathcal{G}} (A_j) \},
	\end{align*}
	where the implication again follows from the fact that $i$ and $j$ have identical valuation functions. 
	
	Now, consider the allocation $A' = (A_1, \ldots, A_{i-1}, S, A_{i+1}, \ldots, A_{j-1}, S^c, A_{j+1}, \ldots, A_n)$. From the discussion above we know that $A \prec A'$ according to the rank-LeximinCMP operator in \Cref{algo:rankLex}. However, this contradicts the fact that $A$ was the allocation that was returned by the rank-Leximin algorithm.
\end{proof}

\if\rlStabilityClaimsinApp1
\subsection{Omitted proofs from Section~\ref{sec:rl-stability}} \label{app:sec:rl-st:proofs}

\clmDiffvv*

\begin{proof}
	To prove the first part, suppose $\abs{v(S_1) - v'(S_1)} > \frac{\alpha}{2}$. Let $S_1^c = \mathcal{G} \setminus S_1$. Since $v, v'$ are additive, we know that $v(S_1) + v(S^c_1) = T$ (and similarly for $v'$), and so this in turn implies that $\abs{v'(S^c_1) - v(S^c_1)} > \frac{\alpha}{2}$. So, using this, we have,
	\begin{align*}
		\sum_{g \in \mathcal{G}} \abs{v(g) - v'(g)} &= \sum_{g \in S_1} \abs{v(g) - v'(g)} + \sum_{g \in S_1^c} \abs{v'(g) - v(g)}\\
		&\geq |{\textstyle\sum_{g \in S_1} (v(g) - v'(g)) }| + |{\textstyle\sum_{g \in S_1^c} (v'(g) - v(g))}|\\
		& = \abs{v(S_1) - v'(S_1)} + \abs{v'(S_1^c) - v(S_1^c)}\\
		&> \alpha,
	\end{align*}
	which is a contradiction since $v' \in \alpha$-$N(v)$. 
	
	To prove the second part, observe that from the first part we have,
	\begin{align*}
		v(S_1) - v(S_2) &\leq v'(S_1) + \frac{\alpha}{2} - (v'(S_2) - \frac{\alpha}{2})\\
		&= v'(S_1) - v'(S_2) + \alpha\\
		&\leq \alpha,
	\end{align*}
	where the last inequality follows from the fact that $v'(S_1) \leq v'(S_2)$. 
\end{proof}

\clmValbound*

\begin{proof}
	To prove the first part, let us assume that $v_i(A_i) < v_i(M_i^{j})$. Since $k \geq 2$, we have another good $g \in A_j$ such that $g \neq M_i^{j}$. So, now, consider the allocation $(A_i \cup \{g\}, \{M_i^j\})$. Note that using additivity we have that $\min\{v_i(A_i \cup \{g\}), v_i(M_i^j)\} > v_i(A_i)$, which in turn contradicts the fact that $(A_i, A_j)$ is a PMMS allocation.
	
	To prove the second part, let us assume that $v_i(A_{j}) - v_i(A_i) > v_i(m_i^{j})$. Next, consider the allocation $(A_i \cup \{m_i^{j}\}, A_j\setminus\{m_i^j\})$. Note that using additivity we have that $\min\{v_i(A_i \cup \{m_i^{j}\}), v_i(A_j\setminus\{m_i^j\})\} > v_i(A_i)$, which in turn contradicts the fact that $(A_i, A_j)$ is a PMMS allocation.
	
	To prove the third part, observe that from the second part we know that $v_i(A_{i}) \geq v_i(A_j) - v_i(m_i^{j})$. Also, if $k=|A_j|$, then we can use the fact that the valuation functions are additive to see that $v_i(m_i^{j}) \leq \frac{v_i(A_j)}{k}$. So, using these, we have,
	\begin{align*} 
		v_i(A_{i}) &\geq v_i(A_j) - v_i(m_i^{j}) \geq v_i(A_j) - \frac{v_i(A_j)}{k} = (T-v_i(A_i))\left(\frac{k-1}{k}\right),
	\end{align*}
	where the last inequality follows from the fact that $v_i(A_i) + v_i(A_j) = T$. 
	
	Finally, rearranging the term above we have our claim. 
\end{proof}

\clmMinrank*

\begin{proof}
	Since rank-leximin produces a PMMS allocation (\Cref{thm:main}), we know from \Cref{thm:pmmsiff} that $\min\{k_1, k_2\} \geq 2^{m-1}$. Also, if $\max\{k_1, 
	k_2\} \leq 2^{m-1}$, then using \Cref{clm:rankVal} we have that $v_1(A_1) < \frac{T}{2}$ and $v_2(A_2) < \frac{T}{2}$. However, this in turn contradicts the fact that rank-leximin is Pareto-optimal (\Cref{thm:main}) since swapping the bundles improves the utilities of both the agents.
\end{proof}
\fi

\if\rlStabilityUBlemmainApp1
\subsubsection{Proof of Lemma~\ref{lemma:rl-st-ub}} \label{app:sec:rl-st-ub}
\rlStUB*

\fi

\if\rlStabilityLBlemmainApp1
\subsubsection{Proof of Lemma~\ref{lemma:rl-st-lb}} \label{app:sec:rl-st-lb}
\rlStLB*

\fi

\if\weakStabilityinApp1
\subsection{Omitted proofs from Section~\ref{sec:weak-stability}} \label{app:sec:weak-st}
\weakStThm*

\fi

\if\weakStabilityinApp2
\subsection{Omitted proofs from Section~\ref{sec:weak-stability}} \label{app:sec:weak-st}
\weakStThm*

\fi

\end{document}